\def\twoplotswidth{0.48\linewidth}
\def\threeplotswidth{0.3\linewidth}
\newtheorem{theorem}{Theorem}[section]
\newtheorem{lemma}[theorem]{Lemma}
\newtheorem{definition}[theorem]{Definition}
\newtheorem{proposition}[theorem]{Proposition}
\theoremstyle{remark}
\newtheorem{remark}{Remark}[theorem]
\newtheorem{example}{Example}[theorem]
\numberwithin{equation}{section}
\newenvironment{sqremark}{\begin{remark}}{\hfill \tiny $\blacksquare$ \end{remark}}
\newenvironment{sqexample}{\begin{example}}{\hfill \tiny $\blacksquare$ \end{example}}
\newcommand{\R}{\mathbb{R}}
\newcommand{\N}{\mathbb{N}}
\newcommand{\C}{\mathbb{C}}
\newcommand{\E}{\mathbb{E}}
\def\d{\mathrm{d}}
\def\P{\mathbb{P}}
\newcommand{\Q}{\mathbb{Q}}
\newcommand{\F}{\mathcal{F}}
\renewcommand{\Re}{\, \mathfrak{Re}}
\renewcommand{\Im}{\, \mathfrak{Im}}
\newcommand{\alphabet}[1][d]{{A}_{#1}}
\newcommand{\TA}[1][d]{T(\R^{#1})}
\newcommand{\eTA}[1][d]{T((\R^{#1}))}
\newcommand{\tTA}[2][d]{T^{#2}(\R^{#1})}
\newcommand{\word}[1]{{\mathcolor{NavyBlue}{\mathbf{#1}}}}
\newcommand{\emptyword}{{\color{NavyBlue}\textup{\textbf{\o{}}}}}
\newcommand{\proj}[1]{|_{\word{#1}}}
\newcommand{\conpow}[1]{^{\otimes #1}}
\newcommand{\coninv}[1]{\left( #1 \right)^{-1}}
\newcommand{\shuprod}{\mathrel{\sqcup \mkern -3.2mu \sqcup}}
\newcommand{\shupow}[1]{^{\shuprod #1}}
\newcommand{\shuexp}[1]{e \shupow{#1}}
\newcommand{\A}{\mathcal{A}}
\newcommand{\I}{\mathcal{I}}
\newcommand{\norm}[1]{|| #1 ||}
\NewDocumentCommand{\sigX}{O{t} O{X}}{\mathbb{#2}_{#1}}
\NewDocumentCommand{\sig}{O{t} O{W}}{\widehat{\mathbb{#2}}_{#1}}
\NewDocumentCommand{\sigE}{O{t} O{W}}{\E[\sig[#1][#2]]}
\NewDocumentCommand{\bracketsigX}{O{t} O{X} m}{\left \langle #3, \sigX[#1][#2] \right \rangle} 
\NewDocumentCommand{\bracketsig}{O{t} O{W} m}{\left \langle #3, \sig[#1][#2] \right \rangle}   
\NewDocumentCommand{\bracketsigtrunc}{O{M} O{t} O{W} m}{\left \langle #4, \sig[#2][#3]^{\leq #1} \right \rangle}   
\NewDocumentCommand{\bracketsigE}{O{t} O{W} m}{\left \langle #3, \sigE[#1][#2] \right \rangle} 
\newcommand{\bsigma}{\bm{\sigma}}
\newcommand{\bpsi}{\bm{\psi}}
\newcommand{\bgamma}{\bm{\gamma}}
\newcommand{\bell}{\bm{\ell}}
\newcommand{\bp}{\bm{p}}
\newcommand{\bq}{\bm{q}}
\newcommand{\sign}[1]{\textnormal{sign} \left( #1 \right)}
\title{Signature volatility models:\ pricing and hedging with Fourier}
\author[1]{Eduardo Abi Jaber \thanks{eduardo.abi-jaber@polytechnique.edu. The first author is grateful for the financial support from the Chaires FiME-FDD, Financial Risks, Deep Finance \& Statistics and Machine Learning and systematic methods in finance at École Polytechnique.}}
\author[2,3]{Louis-Amand Gérard \thanks{louis-amand.gerard@etu.univ-paris1.fr. We would like to thank Olivier Guéant for fruitful discussions, and Yuxing Huang for his precious comments.}}
\affil[1]{École Polytechnique, CMAP}
\affil[2]{Université Paris 1 Panthéon-Sorbonne, CES}
\affil[3]{Gefip}
\date{February 10, 2025}
\begin{document}
\maketitle

\begin{abstract}
    We consider a stochastic volatility model where the dynamics of the volatility are given by a possibly infinite linear combination of the elements of the time extended signature of a Brownian motion. First, we show that the model is remarkably universal, as it includes, but is not limited to, the celebrated Stein-Stein, Bergomi, and Heston models, together with some path-dependent variants. Second, we derive the joint characteristic functional of the log-price and integrated variance provided that some infinite-dimensional extended tensor algebra valued Riccati equation admits a solution. This allows us to price and (quadratically) hedge certain European and path-dependent options using Fourier inversion techniques. We highlight the efficiency and accuracy of these Fourier techniques in a comprehensive numerical study.
\end{abstract}

\begin{description} 
    \item[MSC 2020:] 60L10, 91G20, 91G60 
    \item[Keywords:]{stochastic volatility, path signature, pricing, hedging, calibration, Fourier methods}
\end{description}

\bigskip

\section{Introduction}

    An important challenge of stochastic volatility modeling is the construction of realistic models that remain tractable for option pricing, risk hedging, trading and calibration purposes. Notable realistic features encompass various aspects, such as inter-temporal and path dependencies, which are inherent phenomena in financial markets. These phenomena have been established empirically at different timescales through extensive research, either in the form of long/short range dependence since \citet{mandelbrot1968fractional} and as documented in \cite{andersen1997intraday, comte1998long, gatheral2018volatility, guyon2022volatility, cont-stylizedfacts} or in the form of self-excitation of financial markets \cite{bacry2015hawkes}. They can also be understood more strategically using the transitory nature of the decisions of market participants and their impact on prices \cite{bouchaud2003fluctuations, bouchaud2009markets}.

    ~\\
    Incorporating path dependencies and memory effects into the modeling framework gives rise to non-Markovian models, which generally pose computational challenges. Recent developments have identified interesting mathematical classes of models that address this issue. One such class comprises stochastic Volterra models, which provide a flexible framework for capturing certain inter-temporal dependencies while maintaining computational tractability under specific affine and quadratic structures \cite{abi2022characteristic, abi2019affine, cuchiero2019markovian, cuchiero2020generalized, el2019characteristic}. Another avenue worth exploring lies in the application of signature-based methods. The signature of a path, initially introduced by Chen \cite{chen1957integration} in 1957, consists of the (infinite) sequence of iterated integrals of a path. It plays a crucial role in the theory of rough paths \cite{friz_victoir_2010, lyons2014rough} and is recently gaining considerable momentum in the fields of Machine Learning \cite{chevyrev2016primer, fermanian2021embedding, arribas-bipolardisorder} and Mathematical Finance \cite{arribas2020sig, bayer2023optimal, bayer2023primal, buehler2020generating, arribas-optiexec, cuchiero2022theocalib, cuchiero2023spvix, dupire2022functional, arribas-nonparam}, namely due to its universal linearization property: any functional of the path can be approximated by a linear combination of the elements of the signature of the path, provided some regularity. In our present work, we explore the modeling and numerical aspects of these signature-based approaches for stochastic volatility modeling. We uncover their potential in addressing the computational challenges associated with path-dependent models and we highlight their versatility and universality. 
    
    ~\\
    We consider a class of stochastic volatility models for a stock price $S$ in the form 
    $$ \d S_t = S_t \Sigma_t \left( \rho \d W_t + \sqrt{1 - \rho^2} \d W_t^\perp \right), $$
    where the stochastic volatility process $\Sigma$ is of a general path-dependent form
    $$ f(t, W_{0 \leq s \leq t}), \quad t \leq T, $$
    for a certain class of measurable functions $f$ and $(W, W^\perp)$ a two-dimensional Brownian motion. The correlation $\rho \in [-1, 1]$ accounts for the leverage effect. More precisely, we assume that the volatility process $\Sigma_t$ is a (possibly infinite) linear combination of the elements of the signature process $\sig$ of the time extended Brownian motion $\widehat{W}_t := (t, W_t)$ defined by the infinite sequence of iterated \citet{stratonovich} integrals: 
    
    \begin{align} \label{eq:introsig}
        \sig &
        = \left( 1,
        \begin{pmatrix}
            t \\
            W_t
        \end{pmatrix},
        \begin{pmatrix}
            \frac{t^2}{2!} & \int_0^t s \d W_s \\
            \int_0^t W_s \d s & \frac{W_t^2}{2!}
        \end{pmatrix},
       \cdots \right).
    \end{align}
    We call such models signature volatility models. They have been introduced by \citet{arribas2020sig}, with $\Sigma_t$ a finite linear combination of elements of $\sig$, and their theoretical and empirical properties have been studied further in \citet{cuchiero2022theocalib,cuchiero2023spvix} for pricing and calibration purposes. They allow to naturally embed inter-temporal and path dependencies. The construction is very elementary, yet, from the mathematical perspective, such class of models enjoys a beautiful and powerful universality feature. So far, the universal approximation property of signatures have been invoked to argue that “the framework is universal in the sense that classical models can be approximated arbitrarily well” by signature volatility models, see for instance \cite[proposition 2.14]{cuchiero2022theocalib}. In contrast to the existing literature on signature volatility models, we allow $\Sigma_t$ to be an infinite linear combination of the elements of the signature process $\sig$. This introduces intricate theoretical challenges, particularly regarding convergence issues, but provides a more profound comprehension of the elegant universal structure inherent to these models. Our approach draws inspiration from \citet{cuchiero2023polynomial} where infinite linear combinations of signature elements are considered in the context of signature stochastic differential equations.
    
    \paragraph{Universality and flexibility of signature volatility models.} In a first step, by considering infinite linear combinations of signature elements, we go beyond the ‘approximated universality’ and we prove that the class of signature volatility models is universal in the sense of exact representations. We show that many classical and popular Markovian models, and even more advanced not necessarily Markovian models, belong to the class of signature volatility models, using novel exact representations formulas derived in the sequel and in the accompanying paper \cite{linearfbm}. This includes:
    
    \begin{enumerate}[label=(\roman*)]
        \item \label{introi}
        \textbf{Affine Markovian models:} The models of \citet{stein-stein}, \citet{schobel1999stochastic} and \citet{heston} which became popular because of the explicit knowledge of the characteristic function of the log-price, allowing for fast and accurate pricing and hedging using Fourier inversion techniques. 
        
        \item \label{introii}
        \textbf{Non-affine Markovian models:} The models of \citet{BergomiSmileII} and \citet{hull-white}, which are more flexible than their affine counterparts but less tractable. In addition, the recently introduced Quintic Ornstein-Uhlenbeck model of \citet{quintic}, which is able to jointly capture SPX and VIX smiles, also belongs to the class of signature volatility models. 
        
        \item \label{introiii} 
        \textbf{Non-Markovian models:} A large class of models based on stochastic delayed equations and Volterra processes, including the class of polynomial Gaussian volatility models \cite{abi2022joint}, Volterra Stein-Stein model \cite{abi2022characteristic}, the Volterra \cite{abi2024volatility} and rough \cite{bayer2016pricing} Bergomi models.
    \end{enumerate}
    
    The above list is far from being exhaustive and we believe that more known and important volatility models can be exactly re-written as a signature stochastic volatility model. In addition, when such exact representation cannot be found for a specific model, building an approximated signature volatility model is possible thanks to the universal approximation property of the signature. The representations are derived in Section \ref{sec:model} and illustrated on numerical examples.
    
    \paragraph{Tractability of signature volatility models.} In a second step, we develop a generic framework for pricing and quadratic hedging certain vanilla and path-dependent options on the log-price and the integrated variance using Fourier inversion technology. More specifically, we obtain in Theorem~\ref{theo:charfun} that for any signature volatility model, the joint characteristic functional of the log-price and the integrated variance is known up to the solution of an infinite dimensional Riccati equation. This result opens the door to fast and accurate Fourier pricing and hedging going beyond the standard affine classes \ref{introi} including the whole list of Markovian and non-Markovian models in \ref{introii}-\ref{introiii} above, for which pricing and hedging is an intricate task.  
    Our representation formula for the characteristic function can be directly related to the ones that have appeared in \citet{cuchiero2023polynomial}, and share similarities with the formulas in \citet{friz2022forests} and \citet{lyons2024pde}. \\
    
    Using a comprehensive numerical study we highlight the efficiency and the accuracy of the Fourier techniques for pricing and hedging in signature volatility models{ in Sections \ref{sec:pricing} and \ref{sec:hedging}}. We stress that the numerics are not straightforward, since they involve several approximations and truncations. We use ideas in the spirit of ‘control variate’ with Black-Scholes prices and deltas to stabilize the Fourier inversions and reduce the number of evaluations of the characteristic functional. We also point out that the proposed implementation is both generic and scalable, as it can be applied, or fine-tuned if needed, to any signature volatility model. It only requires as inputs the coefficients (i.e.~the parameters) of the signature volatility model, which essentially define the model itself whether it is Markovian/non-Markovian/non-semimartingale, and it generates as outputs option prices and hedging strategies by Fourier methods. \\
    
    \textbf{Outline.}
    Section \ref{sec:signature} introduces the framework of signatures with a focus on infinite linear combinations of signature elements. In Section \ref{sec:model}, we define signature volatility models and study their representation properties. Section \ref{sec:charfun} derives the characteristic functional of the log-price and integrated variance. As applications, pricing and hedging of various European and Asian options under the signature volatility model are implemented in Sections \ref{sec:pricing} and \ref{sec:hedging}, where we also include calibration examples on simulated and market data. Finally, Appendix \ref{apn:linrep} collects some proofs for the representations found in Subsection \ref{subsec:representations}.

\section{A primer on signatures} \label{sec:signature}

\subsection{Tensor algebra}

    In this section, we setup the framework for dealing with signatures of semimartingales. One can also refer to the first sections in \cite{bayer2023optimal, cuchiero2023spvix, arribas-nonparam}. \\
    
    Let $d \in \N$ and denote by $\otimes$ the tensor product over $\R^d$, e.g. $(x \otimes y \otimes z)_{ijk} = x_i y_j z_k$, for $i, j, k = 1, \dots, d$, for $x, y, z \in \R^d$. For $n\geq 1$, we denote by $(\R^d) \conpow{n}$ the space of tensors of order $n$ and by $(\R^d) \conpow{0} = \R$. In the sequel, we will consider mathematical objects, path signatures, that live on the extended tensor algebra space $\eTA $ over $\R^d$, that is the space of (infinite) sequences of tensors defined by
    
    $$ \eTA := \left\{ \bell = (\bell^n)_{n=0}^\infty : \bell^n \in (\R^d) \conpow{n} \right\}. $$
    
    Similarly, for $M \geq 0$, we define the truncated tensor algebra $\tTA{M}$ as the space of sequences of tensors of order at most $M$ defined by
    
    $$ \tTA{M} := \left\{ \bell \in \eTA : \bell^n = 0, \text{ for all } n > M \right\}, $$
    
    and the tensor algebra $\TA$ as the space of all finite sequences of tensors defined by
    
    $$ \TA := \bigcup_{M \in \N } \tTA{M}. $$
    
    We clearly have $\TA \subset \eTA$.
    For $\bell = (\bell^n)_{n \in \N}, \bp = (\bp^n)_{n \in \N} \in \eTA$ and $\lambda \in \R$, we define the following operations:
    
    \begin{align*}
        \bell + \bp :&
        = (\bell^n + \bp^n)_{n \in \N}, \quad \bell \otimes \bp :
        = \left( \sum_{k=0}^n \bell^k \otimes \bp^{n-k} \right)_{n \in \N}, \quad 
        \lambda \bell :
        = (\lambda \bell^n)_{n \in \N}.
    \end{align*}
    
    These operations induce analogous operations on $\tTA{M}$ and $\TA$. \\
    
    \textbf{Important notations.} Let $\{ e_1, \dots, e_d \} \subset \R^d$ be the canonical basis of $\mathbb{R}^d$ and $\alphabet = \{ \word{1}, \word{2}, \dots, \word{d} \}$ be the corresponding alphabet. To ease reading, for $i \in \{ 1, \dots, d \}$, we write $e_{i}$ as the blue letter $\word{i}$ and for $n \geq 1, i_1, \dots, i_n \in \{ 1, \dots, d \}$, we write $e_{i_1} \otimes \cdots \otimes e_{i_n} $ as the concatenation of letters $\word{i_1 \cdots i_n}$, that we call a word of length $n$. We note that $(e_{i_1} \otimes \cdots \otimes e_{i_n})_{(i_1, \dots, i_n) \in \{ 1, \dots, d \}^n}$ is a basis of $(\R^d) \conpow{n}$ that can be identified with the set of words of length $n$ defined by 
    
    \begin{equation} \label{eq:sig_basis}
        V_n := \{ \word{i_1 \cdots i_n}: \word{i_k} \in \alphabet \text{ for } k = 1, 2, \dots, n \}.
    \end{equation}
    
    Moreover, we denote by $\emptyword$ the empty word and by $V_0 = \{ \emptyword \}$ which serves as a basis for $(\R^d) \conpow{0} = \R$. It follows that $V := \cup_{n \geq 0} V_n$ represents the standard basis of $\eTA$. In particular, every $\bell \in \eTA$ can be decomposed as
    
    \begin{equation} \label{eq:sig_expansion}
        \bell = \sum_{n=0}^\infty \sum_{\word{v} \in V_n} \bell^{\word{v}} \word{v},
    \end{equation}
    where $\bell^\word{v}$ is the real coefficient of $\bell$ at coordinate $\word{v}$. Representation \eqref{eq:sig_expansion} will be frequently used in the paper. We stress again that in the sequel, every blue `word` $\word{v} \in V$ represents an element of the canonical basis of $\eTA$, i.e. there exists $n \geq 0$ such that $\word{v}$ is of the form $\word{v} = \word{i_1 \cdots i_n}$, which represents the element $e_{i_1} \otimes \cdots \otimes e_{i_n} $. The concatenation $\bell \word{v}$ of elements $\bell \in \eTA$ and the word $\word{v} = \word{i_1 \cdots i_n}$ means $\bell \otimes e_{i_1} \otimes \cdots \otimes e_{i_n}$. \\

    In addition to the decomposition \eqref{eq:sig_expansion} of elements $\bell \in \eTA$, we introduce the projection $\bell \proj{u} \in \eTA$ as
    \begin{equation} \label{eq:projection}
        \bell \proj{u} := \sum_{n=0}^\infty \sum_{\word{v} \in V_n} \bell^\word{vu} \word{v}
    \end{equation}
    for all $\word{u} \in V$. The projection plays an important role in the space of iterated integrals as it is closely linked to partial differentiation, in contrast with the concatenation that relates to integration. It will be used throughout the paper.
    
    \begin{sqremark} \label{rem:proj-decomposition}
        The projection allows us to decompose elements of the extended tensor algebra $\bell \in \eTA$ as 
        \begin{align}
            \bell = \bell^\emptyword \emptyword + \sum_{\word{i} \in \alphabet} \bell \proj{i} \word{i}.
        \end{align}
        This decomposition is quite natural as, when iterated, it gives back the decomposition in \eqref{eq:sig_expansion}.
    \end{sqremark}
    
    \begin{sqexample}
        Take the alphabet $\alphabet[3] = \{ \word{1}, \word{2}, \word{3} \}$ and let $\bell = 4 \cdot \emptyword + 3 \cdot \word{1} - 1 \cdot \word{12} + 2 \cdot \word{2212}$, then
        $$ \bell^{\emptyword} = 4 \cdot \emptyword, \quad \bell \proj{1} = 3 \cdot \emptyword, \quad \bell \proj{2} = - 1 \cdot \word{1} + 2 \cdot \word{221}, \quad \bell \proj{3} = 0. $$
    \end{sqexample}
    
    We now define the bracket between $\bell \in \TA$ and $\bp \in \eTA$ by
    \begin{align} \label{eq:bracket}
        \langle \bell, \bp \rangle 
        := \sum_{n=0}^{\infty} \sum_{\word{v} \in V_n} \bell^\word{v} \bp^\word{v}. 
    \end{align}
    
    Notice that it is actually well defined as $\bell$ has finitely many non-zero terms. For $\bell \in \eTA$, the series in \eqref{eq:bracket} involves infinitely many terms and requires special care, this is discussed in Subsection~\ref{S:infinite}. 
    
    We will also consider another operation on the space of words, the shuffle product. The shuffle product plays a crucial role for an integration by parts formula on the space of iterated integrals, see Proposition~\ref{prop:shufflepropertyextended} below.
    
    \begin{definition}[Shuffle product] \label{def:shuffleprod}
        The shuffle product $\shuprod: V \times V \to \TA$ is defined inductively for all words $\word{v}$ and $\word{w}$ and all letters $\word{i}$ and $\word{j}$ in $\alphabet$ by
        
        \begin{align*}
            (\word{v} \word{i}) \shuprod (\word{w} \word{j}) &
            = (\word{v} \shuprod (\word{w} \word{j})) \word{i} + ((\word{v} \word{i}) \shuprod \word{w}) \word{j},
            \\ \word{w} \shuprod \emptyword &
            = \emptyword \shuprod \word{w} = \word{w}.
        \end{align*}
        
        With some abuse of notation, the shuffle product on $\eTA$ induced by the shuffle product on $V$ will also be denoted by $\shuprod$. The shuffle product is clearly commutative. See \cite{reeshuffles} and \cite{gainesshuffle} for more information on the shuffle product.
    \end{definition}
    
    The shuffle product corresponds to all riffle shuffles of two decks of cards together, which keeps the order of each single deck, e.g. $\word{12} \shuprod \word{34} = \word{1234} + \word{1324} + \word{3124} + \word{1342} + \word{3142} + \word{3412}$.

\subsection{Resolvent and linear equation}

    For $n \in \N$ and $\bell \in \eTA$, we define the concatenation power of $\bell$ by
    
    $$ \bell \conpow{n} := \overbrace{\bell \otimes \bell \otimes \cdots \otimes \bell}^{\text{$n$ times}}, $$
    
    with the convention that $\bell \conpow{0} = \emptyword$. For $\bell \in \eTA$ such that $\bell^\emptyword=0$, we define the \textit{resolvent} of $\bell$ by 
    \begin{equation} \label{nota:coninv}
        \coninv{\emptyword - \bell} := \sum_{n=0}^\infty \bell \conpow{n}.
    \end{equation}
    The assumption $\bell^\emptyword = 0$ assures that it is well-defined. \\

    The resolvent allows us to solve linear algebraic equations, for instance:
    \begin{proposition} \label{prop:resolvent}
        Let $\bp, \bq \in \eTA$ such that $\bq^\emptyword = 0$, then the unique solution $\bell \in \eTA$ to the linear algebraic equation 
        \begin{align} \label{eq:resolvent}
            \bell = \bp + \bell \bq 
        \end{align}
        is given by 
        \begin{align}
            \bell = \bp \coninv{\emptyword - \bq},
        \end{align}
        with $\coninv{\emptyword - \cdot}$ as defined in \eqref{nota:coninv}.
    \end{proposition}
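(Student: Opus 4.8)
The plan is to verify that the proposed $\bell = \bp\coninv{\emptyword - \bq}$ genuinely solves \eqref{eq:resolvent} and then to establish uniqueness through a grading (minimal-order) argument, the single recurring ingredient being that $\bq^\emptyword = 0$ strictly raises the lowest-order term under concatenation with $\bq$. First I would check well-definedness: since $\bq^\emptyword = 0$, the tensor $\bq\conpow{n}$ has vanishing components of every order strictly less than $n$, so for a fixed word $\word{v} \in V_m$ only the finitely many terms with $n \leq m$ can contribute to the coordinate $\bigl(\coninv{\emptyword - \bq}\bigr)^\word{v}$. Hence the series in \eqref{nota:coninv} reduces coordinatewise to a finite sum and defines a bona fide element of $\eTA$, and the same finiteness justifies distributing the concatenation with $\bp$ over the sum, so that $\bell = \sum_{n \geq 0} \bp \otimes \bq\conpow{n}$ holds coordinate by coordinate.

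Second, for existence I would substitute this $\bell$ into the right-hand side of \eqref{eq:resolvent}, interpreting the product $\bell\bq$ as $\bell \otimes \bq$. Writing $\bp = \bp \otimes \bq\conpow{0}$ and using associativity of $\otimes$, a telescoping computation gives
\[ \bp + \bell \bq = \bp \otimes \bq\conpow{0} + \sum_{n \geq 0} \bp \otimes \bq\conpow{n} \otimes \bq = \bp \otimes \bq\conpow{0} + \sum_{n \geq 1} \bp \otimes \bq\conpow{n} = \sum_{n \geq 0} \bp \otimes \bq\conpow{n} = \bell, \]
which shows that $\bp\coninv{\emptyword - \bq}$ is indeed a solution.

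Third, for uniqueness, suppose $\bell_1, \bell_2 \in \eTA$ both solve \eqref{eq:resolvent}. By linearity their difference $\bm{r} := \bell_1 - \bell_2$ satisfies $\bm{r} = \bm{r}\bq$, and iterating yields $\bm{r} = \bm{r} \otimes \bq\conpow{n}$ for every $n \geq 0$. Introducing the minimal order $m(\bm{r}) := \min\{ n : \bm{r}^n \neq 0 \}$ (with $m(0) = +\infty$), the hypothesis $\bq^\emptyword = 0$ forces $m(\bm{r} \otimes \bq) \geq m(\bm{r}) + 1$; so if $\bm{r} \neq 0$ we would get $m(\bm{r}) = m(\bm{r}\bq) \geq m(\bm{r}) + 1$, a contradiction. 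Therefore $\bm{r} = 0$ and the solution is unique.

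The only genuinely delicate point throughout is the order bookkeeping: every interchange of an infinite sum with the tensor product, and every iteration step, has to be legitimised by the fact that components at a fixed word come from only finitely many summands. Once this grading observation—that $\bq^\emptyword = 0$ raises the minimal order under each concatenation with $\bq$—is isolated, well-definedness, existence, and uniqueness all follow from it directly, so I do not expect any substantive obstacle beyond stating this fact cleanly.
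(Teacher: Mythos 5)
Your proposal is correct. The paper actually states Proposition~\ref{prop:resolvent} without any proof (it is presented as an immediate consequence of the well-definedness of the resolvent in \eqref{nota:coninv}), and your argument — coordinatewise finiteness of the series since $\bq^\emptyword = 0$ forces $\bq\conpow{n}$ to vanish below order $n$, telescoping for existence, and the minimal-order grading argument for uniqueness — is precisely the standard verification the authors leave implicit, carried out cleanly and with the one genuinely necessary observation (that concatenation with $\bq$ strictly raises the minimal order) correctly isolated.
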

    
    Interestingly, whenever $\bell$ is a linear combination of single letters, the resolvent of $\bell$ is equal to the shuffle exponential $\shuexp{\bell}$ defined by
    \begin{equation} \label{nota:shuexp}
        \shuexp{\bell} := \sum_{n=0}^{\infty} \frac{\bell \shupow{n}}{n!},
    \end{equation}
    where 
    $$ \bell \shupow{n} := \overbrace{\bell \shuprod \bell \shuprod \cdots \shuprod \bell}^{\text{$n$ times}}, \quad n \geq 1, \quad \bell \shupow{0} = \emptyword. $$

\subsection{Signatures}

    We define the (path) signature of a semimartingale process as the sequence of iterated stochastic integrals in the sense of Stratonovich. Throughout the paper, the Itô integral is denoted by $\int_0^\cdot Y_t \d X_t$ and the Stratonovich integral by $\int_0^\cdot Y_t \circ \d X_t.$ If both $X$ and $Y$ are semimartingales then, we have the relation $\int_0^\cdot Y_t \circ \d X_t = \int_0^\cdot Y_t \d X_t + \frac{1}{2} [X,Y]_\cdot $.

    \begin{definition}[Signature] \label{def:sig}
        Fix $T > 0$. Let $(X_t)_{t \geq 0}$ be a continuous semimartingale in $\R^d$ on some filtered probability space $(\Omega, \F, (\F_t)_{t \geq 0}, \P)$. The signature of $X$ is defined by
        \begin{align*}
            \mathbb{X}: \Omega \times [0, T] &
            \to \eTA
            \\ (\omega, t) &
            \mapsto \sigX (\omega) := (1, \sigX^1(\omega), \dots, \sigX^n(\omega), \dots),
        \end{align*}
        where
        $$ \sigX^n := \int_{0 < u_1 < \cdots < u_n < t} \circ \d X_{u_1} \otimes \cdots \otimes \circ \d X_{u_n} $$
        takes value in $(\R^d)^{\otimes n}$, $n \geq 0$. Similarly, the truncated signature of order $M \in \N$ is defined by
        \begin{align} \label{def:sig-trunc}
            \mathbb{X}^{\leq M}: [0, T] &
            \to \tTA{M}
            \\ (\omega, t) &
            \mapsto \sigX^{\leq M}(\omega) := (1, \sigX^1(\omega), \dots, \sigX^M(\omega), 0, \dots, 0, \dots).
        \end{align}
    \end{definition}
    
    The signature plays a similar role to polynomials on path-space. Indeed, in dimension $d=1$, the signature of $X$ is the sequence of monomials $\left( \frac{1}{n!} (X_t - X_0)^n \right)_{n \in \N}$. In particular, any finite combination of elements of the signature $\bracketsigX{\bell}$, defined in \eqref{eq:bracket} for $\bell \in \tTA{M}$, is a polynomial of degree $M$ in $X_t$.
    
    \begin{sqremark} \label{rmk:sig_iteration_def}
        Explicitly we can write the term $\sigX^n$ as $(\sigX^\word{i_1 \cdots i_n})_{(\word{i_1 \cdots i_n}) \in V_n}$. So the definition can be written in iterated form as
        \begin{align} \label{eq:signdef2}
            \sigX^\word{i_1 \cdots i_n} = \int_0^t \sigX[s]^\word{i_1 \cdots i_{n-1}} \circ \d X_s^{\word{i_n}}.
        \end{align} 
    \end{sqremark}

    In what will follow, we are exclusively interested in the case $d=2$ and $X_t = \widehat{W}_t := (t, W_t)$ where $W$ is a 1-dimensional Brownian motion.
    Its first few signature orders are given by
    \begin{equation}
        \sig^0 = 1,
        \quad
        \sig^1 =
        \begin{pmatrix}
            t \\
            W_t
        \end{pmatrix},
        \quad
        \sig^2 =
        \begin{pmatrix}
            \frac{t^2}{2!} & \int_0^t s \d W_s \\
            \int_0^t W_s \d s & \frac{W_t^2}{2!}
        \end{pmatrix}.
    \end{equation}

\subsection{Infinite linear combinations of signature elements} \label{S:infinite}

    In this section, we recall some results on infinite linear combinations $\bracketsig{\bell}$ for certain admissible $\bell \in \eTA$ for which the infinite series will make sense. Two crucial ingredients for our paper are the shuffle product (Proposition~\ref{prop:shufflepropertyextended}) and an Itô's formula (Lemma~\ref{lem:sig-ito}). We follow the presentation in \cite[Section 2]{linearfbm}, and we refer to \cite{cuchiero2023polynomial} for more general results. \\

    We first introduce the space $\A$ of admissible elements $\bell$ below, using the associated semi-norm:
    
    $$ \norm{\bell}_t^\A := \sum_{n=0}^\infty \left| \sum_{\word{v} \in V_n} \bell^\word{v} \sig^\word{v} \right|, \quad t \geq 0, $$
    
    recall the definition of $V_n$ in \eqref{eq:sig_basis} and the decomposition \eqref{eq:sig_expansion}. Whenever, $\norm{\bell}_t^\A < \infty$ a.s., the infinite linear combination 
    
    $$ \bracketsig{\bell} = \sum_{n=0}^\infty \sum_{\word{v} \in V_n} \bell^{\word{v}} \sig^{\word{v}} $$
    
    is well-defined. This leads to the following definition for the admissible set $\A$:
    
    $$ \A := \left\{ \bell \in \eTA[2] : \norm{\bell}_t^\A < \infty \text{ for all } t \in [0, T], \text{ a.s.} \right\}.\label{eq:defA} $$ 
    We stress that the null set in $\mathcal{A}$ does not depend on $t$. Note that $\TA[2] \subset \A$ and that $\bracketsig{\bell}$ is an extension of \eqref{eq:bracket}, as the two bracket operations $\langle \cdot, \cdot \rangle$ coincide whenever $\bell \in \TA[2]$. \\
    
    The admissible set $\A$ has another very interesting property, as it allows us to linearize polynomials on infinite linear combination of the signature, see Proposition \ref{prop:shufflepropertyextended}. This is what most of the literature refers to when putting forth the linearization power of the signature.

    \begin{proposition}[Shuffle property] \label{prop:shufflepropertyextended}
        If $\bell_1, \bell_2 \in \A$, then $\bell_1 \shuprod \bell_2 \in \A$ and
        $$ \bracketsig{\bell_1} \bracketsig{\bell_2} = \bracketsig{\bell_1 \shuprod \bell_2}. $$
    \end{proposition}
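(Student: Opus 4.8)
The plan is to reduce the infinite statement to the classical finite-level shuffle identity for signatures, and then to recognize that passing to infinite linear combinations is precisely a Cauchy-product computation, governed by the absolute convergence built into the definition of $\norm{\cdot}_t^\A$.

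First I would establish the word-level identity $\sig^\word{v}\,\sig^\word{w} = \bracketsig{\word{v} \shuprod \word{w}}$ for all words $\word{v}, \word{w} \in V$, where the right-hand side is a finite sum over the shuffles of $\word{v}$ and $\word{w}$ (counted with multiplicity). This is the standard shuffle property, proved by induction on $|\word{v}| + |\word{w}|$: using $\sig^\word{vi} = \int_0^t \sig[s]^\word{v} \circ \d \widehat{W}_s^\word{i}$ and the Stratonovich Leibniz rule $\d(\sig^\word{v}\sig^\word{w}) = \sig^\word{v} \circ \d\sig^\word{w} + \sig^\word{w} \circ \d\sig^\word{v}$, the inductive step reproduces verbatim the recursion defining $\shuprod$ in Definition~\ref{def:shuffleprod}. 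By bilinearity of the bracket this immediately gives $\bracketsig{\bell_1}\bracketsig{\bell_2} = \bracketsig{\bell_1 \shuprod \bell_2}$ whenever $\bell_1, \bell_2 \in \TA[2]$.

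The key step is then to organize everything by signature level. Writing $a_k := \sum_{\word{v} \in V_k} \bell_1^\word{v} \sig^\word{v}$ and $b_k := \sum_{\word{w} \in V_k} \bell_2^\word{w} \sig^\word{w}$, admissibility reads exactly as $\norm{\bell_1}_t^\A = \sum_k |a_k| < \infty$ and $\norm{\bell_2}_t^\A = \sum_k |b_k| < \infty$ almost surely. Since $\word{v} \shuprod \word{w}$ contains only words of length $|\word{v}| + |\word{w}|$, the coefficient $(\bell_1 \shuprod \bell_2)^\word{u}$ for $\word{u} \in V_n$ collects exactly the pairs with $|\word{v}| + |\word{w}| = n$; applying the word-level identity level by level I expect the level-$n$ contribution of the shuffle to be the Cauchy product
$$ \sum_{\word{u} \in V_n} (\bell_1 \shuprod \bell_2)^\word{u}\, \sig^\word{u} = \sum_{k=0}^n a_k\, b_{n-k}. $$
From here the triangle inequality and a reindexing of the resulting double sum give submultiplicativity of the semi-norm,
$$ \norm{\bell_1 \shuprod \bell_2}_t^\A = \sum_{n=0}^\infty \left| \sum_{k=0}^n a_k\, b_{n-k} \right| \le \Big( \sum_k |a_k| \Big)\Big( \sum_j |b_j| \Big) = \norm{\bell_1}_t^\A\, \norm{\bell_2}_t^\A < \infty, $$
which proves $\bell_1 \shuprod \bell_2 \in \A$. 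Finally, since $\sum_k a_k = \bracketsig{\bell_1}$ and $\sum_j b_j = \bracketsig{\bell_2}$ converge absolutely almost surely, the Cauchy-product theorem yields $\bracketsig{\bell_1}\bracketsig{\bell_2} = \sum_{n=0}^\infty \sum_{k=0}^n a_k\, b_{n-k} = \bracketsig{\bell_1 \shuprod \bell_2}$, which is the claim.

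Conceptually, the main obstacle is spotting that the level-grouped partial sums $a_k, b_k$ turn the shuffle into an honest Cauchy product of series; once this is in place, the whole proposition collapses onto the elementary fact that absolutely convergent series multiply term by term. The remaining technical care goes into the level-by-level bookkeeping of shuffle multiplicities needed to justify the displayed Cauchy-product identity, and into checking that all manipulations are carried out on the common almost-sure event where both $\norm{\bell_i}_t^\A$ are finite — which is legitimate precisely because, as noted after \eqref{eq:defA}, the null set defining $\A$ does not depend on $t$.
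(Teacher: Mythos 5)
Your proof is correct, but it is worth noting that the paper itself gives no internal argument for Proposition~\ref{prop:shufflepropertyextended}: its ``proof'' is a citation, deferring to the shuffle-compatible-partitions lemma of \cite{cuchiero2023polynomial} (Lemma~4.4 there) and to Proposition~3.1 of the companion paper \cite{linearfbm}. Your write-up therefore supplies a self-contained alternative along the most elementary possible lines, and each step holds up under scrutiny: (i) the word-level identity $\sig^\word{v}\,\sig^\word{w} = \bracketsig{\word{v} \shuprod \word{w}}$ is the classical statement that the Stratonovich signature is a shuffle character, and your induction via the Stratonovich Leibniz rule (no It\^o correction term, which is exactly why the recursion of Definition~\ref{def:shuffleprod} closes) is the standard proof; (ii) the level-$n$ Cauchy-product identity is legitimate because the shuffle is graded and, the alphabet being finite, only finitely many pairs $(\word{v},\word{w})$ with $|\word{v}|+|\word{w}|=n$ contribute to any coefficient of $\bell_1 \shuprod \bell_2$, so all manipulations at a fixed level are finite sums; (iii) the bound $\norm{\bell_1 \shuprod \bell_2}_t^\A \le \norm{\bell_1}_t^\A \, \norm{\bell_2}_t^\A$ follows from Tonelli for nonnegative double series, and the product formula from the multiplication theorem for absolutely convergent series, both applied pathwise on the common almost-sure event (independent of $t$) where both semi-norms are finite. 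What the paper's citation route buys is generality — shuffle-compatible partitions handle broader classes of functionals and higher-dimensional settings in one stroke; what your route buys is transparency: the infinite-dimensional statement collapses onto the finite shuffle identity plus the elementary fact that absolutely convergent series multiply term by term, with the measure-theoretic bookkeeping made explicit rather than inherited from a reference.
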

    
    \begin{proof}
       This follows from a particular instance of shuffle compatible partitions, see \cite[Lemma 4.4]{cuchiero2023polynomial}. Alternatively, we refer to \cite[Proposition 3.1]{linearfbm}.
    \end{proof}


    For elements $\bell \in \A$, the process $(\langle \bell, \sig \rangle)_{t \leq T}$ is well-defined. An important question is to know whether it is a semimartingale and compute its Itô's decomposition. The answer is positive, thanks to an Itô's formula in Lemma~\ref{lem:sig-ito} below, for elements in the set 
    \begin{equation} \label{eq:I}
        \I := \left\{ \bell \in \A: \text{ for all } t \in [0, T], \: \norm{\bell}_t^\I < \infty \text{ and } \int_0^T \norm{\bell}_t^\I \d t < \infty \text{ a.s.} \right \},
    \end{equation}
    where
    $$ \norm{\bell}_t^{\I} := \norm{\bell \proj{1}}_t^\A + \norm{\tfrac{1}{2} \bell \proj{22}}_t^\A + \left( \norm{\bell \proj{2}}_t^\A \right)^2. $$
    More generally, we state the result for time dependent linear combinations $(\langle \bell_t, \sig \rangle)_{t \leq T}$ with $\bell:[0,T] \to \A$ in the set
    
    \begin{equation} \label{eq:Iprime}
        \I' := \left\{ \bell: [0, T] \to \I:
        \begin{matrix}
            \text{ for all } t \in [0, T], \: \bell_t^\word{v} \in C^1([0, T]) \text{ for all } \word{v} \in V,
            \\ \text{ and } \norm{\dot{\bell}_t}_t^\A < \infty \text{ and } \int_0^T \norm{\dot{\bell}_t}_t^\A \d t < \infty \text{ a.s.} \hfill
        \end{matrix}
        \right \},
    \end{equation}
    where $\dot{\bell_t} := \sum_{\word{v} \in V} \frac{\d}{\d t} \bell_t^\word{v} \word{v}$ for all $t \in [0, T]$.
    
    \begin{lemma}[Itô's decomposition] \label{lem:sig-ito}
        Let $\bell \in \I$, then
        \begin{align} \label{eq:ItoW}
            \d \bracketsig{\bell} = \bracketsig{\bell \proj{1} + \tfrac{1}{2} \bell \proj{22}} \d t + \bracketsig{\bell \proj{2}} \d W_t. 
        \end{align} 
        Let $\bell \in \I'$, then
        \begin{align} \label{eq:ItohatW}
            \d \bracketsig{\bell_t} = \bracketsig{\dot{\bell}_t + \bell_t \proj{1} + \tfrac{1}{2} \bell_t \proj{22}} \d t + \bracketsig{\bell_t \proj{2}} \d W_t. 
        \end{align} 
    \end{lemma}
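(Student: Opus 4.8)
The plan is to first establish the formula for \emph{finite} $\bell \in \TA[2]$ by direct Stratonovich calculus, and then to obtain the general case $\bell \in \I$ by truncating $\bell$ at order $M$ and passing to the limit; the seminorms defining $\A$ and $\I$ are tailored precisely so that the three resulting limits (left-hand side, drift integral, stochastic integral) all go through. The time-dependent statement \eqref{eq:ItohatW} is then obtained by the same truncation scheme, the only new ingredient being an ordinary product rule for the deterministic coefficients.

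For the finite case I would start from the iterated-integral definition of Remark~\ref{rmk:sig_iteration_def}, which in Stratonovich form reads $\d \sig^\word{i_1 \cdots i_n} = \sig^\word{i_1 \cdots i_{n-1}} \circ \d \widehat{W}^\word{i_n}$. Grouping the finitely many words by their last letter exactly as in the decomposition $\bell = \bell^\emptyword \emptyword + \sum_{\word{i}} \bell\proj{i}\word{i}$ of Remark~\ref{rem:proj-decomposition}, and using $(\bell\proj{i})^\word{w} = \bell^\word{wi}$, summation gives the clean Stratonovich identity $\d\bracketsig{\bell} = \sum_{\word{i}\in\alphabet[2]}\bracketsig{\bell\proj{i}}\circ \d\widehat{W}^\word{i} = \bracketsig{\bell\proj{1}}\,\d t + \bracketsig{\bell\proj{2}}\circ\d W_t$, since $\widehat{W}^\word{1}_s = s$ and $\widehat{W}^\word{2}_s = W_s$. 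It then remains to convert the Stratonovich integral against $W$ into an Itô integral, the correction being $\tfrac12\,\d[\bracketsig{\bell\proj{2}}, W]_t$. To evaluate it I need the diffusion coefficient of $\bracketsig{\bell\proj{2}}$; applying the same identity one level down shows this coefficient is $\bracketsig{(\bell\proj{2})\proj{2}} = \bracketsig{\bell\proj{22}}$, so the correction equals $\tfrac12\bracketsig{\bell\proj{22}}\,\d t$, and collecting terms yields \eqref{eq:ItoW} for finite $\bell$.

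For general $\bell\in\I$ I would write this finite formula in integral form for the truncation $\bell^{\leq M} := \sum_{n\leq M}\sum_{\word{v}\in V_n}\bell^\word{v}\word{v}$ and let $M\to\infty$. The left-hand side $\bracketsig{\bell^{\leq M}}$ is a partial sum of the absolutely convergent series defining $\bracketsig{\bell}$, hence converges a.s. for each $t$ by the very definition of $\norm{\bell}_t^\A<\infty$. Because truncation commutes with projection up to an index shift, e.g. $(\bell^{\leq M})\proj{1} = (\bell\proj{1})^{\leq M-1}$, the drift integrands $\bracketsig[s]{(\bell^{\leq M})\proj{1}}$ and $\bracketsig[s]{(\bell^{\leq M})\proj{22}}$ are partial sums dominated pathwise by $\norm{\bell\proj{1}}_s^\A$ and $\norm{\bell\proj{22}}_s^\A$, which are integrable on $[0,T]$ by the definition of $\I$, so ordinary dominated convergence handles the $\d t$-integrals. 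The delicate term is the stochastic integral: here I would use that $\bracketsig[s]{(\bell^{\leq M})\proj{2}}\to\bracketsig[s]{\bell\proj{2}}$ pointwise with $|\bracketsig[s]{(\bell^{\leq M})\proj{2}}|\le \norm{\bell\proj{2}}_s^\A$ and $\int_0^T(\norm{\bell\proj{2}}_s^\A)^2\,\d s<\infty$ a.s. (this is exactly the squared term in $\norm{\cdot}_t^\I$), whence $\int_0^T|\bracketsig[s]{(\bell^{\leq M})\proj{2}}-\bracketsig[s]{\bell\proj{2}}|^2\,\d s\to 0$ a.s. By the stochastic dominated convergence theorem the Itô integrals converge in probability, and passing to a subsequence makes all three terms converge a.s. simultaneously for each $t$; continuity in $t$ then upgrades the identity to an equality of processes.

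Finally, for the time-dependent case $\bell\in\I'$ I would repeat the truncation argument. The only new feature is that on finite words the product rule for the deterministic, finite-variation, $C^1$ coefficients $t\mapsto\bell_t^\word{v}$ produces the extra drift $\bracketsig{\dot\bell_t^{\leq M}}\,\d t$ with no covariation correction, and the additional requirements $\norm{\dot\bell_t}_t^\A<\infty$ and $\int_0^T\norm{\dot\bell_t}_t^\A\,\d t<\infty$ in the definition of $\I'$ supply exactly the dominating function needed to pass $\bracketsig[s]{\dot\bell_s^{\leq M}}\to\bracketsig[s]{\dot\bell_s}$ through the time integral. The main obstacle throughout is the martingale term: unlike the drift, it cannot be controlled pathwise, and one must verify $L^2(\d s)$-in-probability convergence of the integrands — this is precisely why the admissible set $\I$ carries the squared seminorm $(\norm{\bell\proj{2}}_t^\A)^2$ rather than just $\norm{\bell\proj{2}}_t^\A$.
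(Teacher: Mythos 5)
Your proposal is correct, but there is nothing in the paper to compare it against line by line: the paper states Lemma~\ref{lem:sig-ito} without proof, importing it from the companion paper (``We follow the presentation in [Section 2, linearfbm]''), with \cite{cuchiero2023polynomial} cited for more general results. Your argument is the natural self-contained route and matches what the surrounding definitions are visibly engineered for: differentiate the iterated-integral definition for finite $\bell \in \TA[2]$, group words by their last letter to produce $\bracketsig{\bell\proj{1}}\,\d t + \bracketsig{\bell\proj{2}}\circ\d W_t$, convert the Stratonovich correction into $\tfrac12\bracketsig{\bell\proj{22}}\,\d t$ via $(\bell\proj{2})\proj{2} = \bell\proj{22}$, and then pass to the limit over truncations $\bell^{\leq M}$. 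The three limits are handled correctly, and you rightly single out the real point of the definition of $\I$: the drift terms are controlled pathwise by $\norm{\bell\proj{1}}_s^\A$ and $\norm{\bell\proj{22}}_s^\A$, while the martingale term needs $L^2(\d s)$ control, which is exactly why $\norm{\cdot}_t^\I$ carries $\left(\norm{\bell\proj{2}}_t^\A\right)^2$ rather than $\norm{\bell\proj{2}}_t^\A$; convergence of the stochastic integrals in probability plus a subsequence and continuity in $t$ then upgrades to an identity of processes, using also that the paper fixes a $t$-independent null set for $\A$. One point you should make explicit in the time-dependent case: your dominated-convergence step for the drift needs the \emph{diagonal} integrability $\int_0^T \norm{\bell_s\proj{1}}_s^\A\,\d s < \infty$ (and similarly for $\proj{22}$, $\proj{2}$), whereas the definition of $\I'$ literally only requires $\bell_t \in \I$ for each fixed $t$ together with the diagonal condition on $\dot{\bell}$; this is a looseness in the paper's definition rather than a flaw in your scheme, but your proof should state that it reads $\I'$ as supplying these diagonal bounds.
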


\section{The signature volatility model} \label{sec:model}

    Let $(\Omega, \F, \Q)$ be a probability space supporting a two-dimensional Brownian motion $(W, W^{\perp})$. We denote by $(\F_t)_{t \geq 0}$ the filtration generated by $(W, W^{\perp})$. We set 
    \begin{align} \label{eq:BM}
        B = \rho W + \sqrt{1 - \rho^2} W^{\perp},
    \end{align}
    for some $\rho \in [-1, 1]$. We define the time-augmented process $\widehat{W}_t = (t, W_t)$ and we consider that the dynamics of the risky asset $S$, under the risk neutral probability measure $\Q$, are given by a stochastic volatility model where the volatility process $\Sigma$ is a (possibly infinite) linear combination of the signature of $\widehat{W}$: 
    \begin{align}
        \frac{\d S_t}{S_t} &= \Sigma_t \d B_t, \label{eq:sigmodel1} \\
        \Sigma_t &= \bracketsig{\bsigma_t}, \label{eq:sigmodel2}
    \end{align}
    where $\bsigma: [0, T] \to \A$ corresponds to the parameters of the volatility process and is such that 
    \begin{align} \label{eq:Sigma2}
        \int_0^T \E \left[ \Sigma_t^2 \right] \d t < \infty.
    \end{align}
    We recall the definition of the set $\A$ in \eqref{eq:defA}. 
    The condition \eqref{eq:Sigma2} ensures that the stochastic integral
    $$ \int_0^\cdot \Sigma_s \d B_s $$
    is well defined as an Itô integral, so that there exists a unique solution to \eqref{eq:sigmodel1} given by
    $$ S_t = S_0 \exp \left( -\frac{1}{2} \int_0^t \Sigma_s^2 \d s + \int_0^t \Sigma_s \d B_s \right), \quad t \geq 0. $$
    
    Condition \eqref{eq:Sigma2} can be made more explicit by observing that the instantaneous variance $\Sigma^2_t$ is also linear in the signature, i.e. 
    \begin{align} \label{eq:shufflesigma2}
        \Sigma_t^2 = \bracketsig{\bsigma_t \shupow{2}},
    \end{align}
    thanks to the shuffle product in Proposition \ref{prop:shufflepropertyextended}. We note that the time-independent case $\bsigma_t = \bsigma$ for all $t \in [0, T]$ and some $\bsigma \in \A$, leads to $\int_0^T \E \left[ \Sigma_t^2 \right] \d t = \langle \bsigma \shupow{2} \word{1}, \sigE[T] \rangle$ where the quantity $\sigE[T]$ can be computed explicitly using Fawcett's formula \cite{fawcett} extended to time-augmented Brownian motions in \cite[Proposition 4.10]{lyonsvictoir}:
    \begin{align} \label{eq:fawcet}
        \sigE = \sum_{n \geq 0} \frac{t^n}{n!} \left( \word{1} + \frac{1}{2} \word{22} \right) \conpow{n}.
    \end{align}
    
    In practice, we will usually be interested in truncated elements $\bsigma: [0, T] \to \tTA[2]{M}$ for some $M \in \N$, which automatically satisfy the condition \eqref{eq:Sigma2} since in this case $\bsigma$ has only a finite number of non-zero terms and all the integrated moments of the Brownian motion are finite.  \\

    Notice that for general $\bsigma: [0, T] \to \A$ the process $\Sigma$ is not necessarily Markovian nor a semimartingale. It is a semimartingale if in addition $\bsigma \in \I'$ thanks to Itô's formula in Lemma~\ref{lem:sig-ito}. For truncated elements $\bsigma: [0, T] \to \tTA[2]{M}$ the process $\Sigma$ is a semimartingale. So far, truncated elements have been considered in the related literature \cite{arribas2020sig,cuchiero2023spvix}. \\
    
    In the next subsection, we highlight the flexibility introduced by infinite linear combinations of signature elements in terms of exact representations and provide numerical implementations of their truncated form, i.e.
    $$ \Sigma_t^{\leq M} := \bracketsig{\bsigma^{\leq M}}, \quad M \geq 0, $$
    where $\bsigma^{\leq M}: [0, T] \to \tTA[2]{M}$ is the truncated form of $\bsigma$ at order $M$, i.e.~its $M$ first levels coincide with $\bsigma$ and everything else is set to $0$. Note that $\bracketsig{\bsigma^{\leq M}} = \bracketsigtrunc[M]{\bsigma}$ where $\sig^{\leq M}$ is defined in \eqref{def:sig-trunc}.

\subsection{Examples of exact representations} \label{subsec:representations}

    We first highlight the flexibility of the signature volatility model \eqref{eq:sigmodel1}-\eqref{eq:sigmodel2} by showing that it subsumes several known and useful Markovian and non-Markovian models based on Ornstein-Uhlenbeck processes, mean-reverting geometric Brownian motions, square-root processes\footnote{The theoretical justification for the representation of the square-root process is still open, it is validated numerically in Section~\ref{S:CIR} below.} and processes with path-dependent dynamics including stochastic Volterra processes. Our approach is closely related to the stochastic Taylor expansion, see for instance \cite[Chapter 5]{taylorsto_kloeden} and \cite[Theorem 6.3.6]{taylorsto_foster_cir} for expansions on Ornstein-Uhlenbeck and Cox-Ingersoll-Ross processes respectively, with two main advantages: first, it is not limited to Markovian processes, and second, it gives an algebraic formulation that is easy to derive and work with.

\subsubsection{Models based on the Ornstein-Uhlenbeck process}
    
    The Ornstein-Uhlenbeck (OU) process $X$ given by
    \begin{align} \label{sde:OU}
        \d X_t = \kappa (\theta - X_t) \d t + \eta \d W_t, \quad X_0 = x \in \R,
    \end{align}
    for $\kappa, \theta, \eta \in \R$, can be represented as an infinite linear combination of the signature elements of the time-extended Brownian motion, either in a time-independent or time-dependent way as shown in the next Lemma.
    
    \begin{proposition} \label{prop:rep-OU}
        The unique solution $X$ to \eqref{sde:OU} is given by
        \begin{align} \label{eq:linear-OU}
            X_t = \bracketsig{\bell^\textnormal{OU}},
            \quad
            \bell^\textnormal{OU} = (x \emptyword + \kappa \theta \word{1} + \eta \word{2}) \shuexp{-\kappa \word{1}},
        \end{align}
        such that $\bell^\textnormal{OU} \in \A$, with $\shuexp{~}$ the exponential shuffle defined in \eqref{nota:shuexp}.
        Furthermore, $X$ can also be written in terms of time-dependent coefficients:
        \begin{align} \label{eq:linear-OUtime}
           X_t = \bracketsig{\tilde{\bell}_t^\textnormal{OU}},
           \quad
           \tilde{\bell}_t^\textnormal{OU} = \theta \emptyword + e^{-\kappa t} \left( (x - \theta) \emptyword + \eta \shuexp{\kappa \word{1}} \word{2} \right),
        \end{align}
    \end{proposition}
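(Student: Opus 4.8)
The plan is to verify that each candidate element pairs with the signature to a process solving \eqref{sde:OU}, and then to conclude by pathwise uniqueness of the OU equation. For the time-independent representation I would check three things: (i) $\bell^\textnormal{OU} \in \I$, so that Lemma~\ref{lem:sig-ito} applies to $X_t := \bracketsig{\bell^\textnormal{OU}}$; (ii) the right-projections of $\bell^\textnormal{OU}$ reproduce the OU coefficients, namely $\bell^\textnormal{OU}\proj{2} = \eta\emptyword$, $\bell^\textnormal{OU}\proj{22}=0$ and $\bell^\textnormal{OU}\proj{1} = \kappa\theta\emptyword - \kappa\,\bell^\textnormal{OU}$, so that \eqref{eq:ItoW} gives $\d X_t = \kappa(\theta - X_t)\,\d t + \eta\,\d W_t$; and (iii) the initial condition $(\bell^\textnormal{OU})^\emptyword = x$, i.e.\ $X_0 = x$.

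The algebra in (ii)--(iii) is the heart of the argument, and it becomes transparent after expanding the shuffle exponential. Since $\word{1}\shupow{n} = n!\,\word{1}\conpow{n}$, one has $\shuexp{-\kappa\word{1}} = \sum_{n\geq 0}(-\kappa)^n\,\word{1}\conpow{n}$, which is also the resolvent $\coninv{\emptyword + \kappa\word{1}}$. Writing $A := x\emptyword + \kappa\theta\word{1} + \eta\word{2}$, this gives $\bell^\textnormal{OU} = \sum_{n\geq 0}(-\kappa)^n\, A\,\word{1}\conpow{n}$. The key structural fact is that the right tail $\word{1}\conpow{n}$ consists only of the letter $\word{1}$, so the projections follow from the elementary rule $(\bell\,\word{i})\proj{i} = \bell$ and $(\bell\,\word{i})\proj{j} = 0$ for $j\neq i$. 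Projecting onto $\word{2}$ kills every $n\geq 1$ term and leaves $A\proj{2}=\eta\emptyword$; projecting onto $\word{1}$ sends $A\,\word{1}\conpow{n}$ to $A\,\word{1}\conpow{n-1}$ for $n\geq 1$ and contributes $A\proj{1}=\kappa\theta\emptyword$ at $n=0$, which telescopes to $\kappa\theta\emptyword - \kappa\,\bell^\textnormal{OU}$; finally $\bell^\textnormal{OU}\proj{22} = (\bell^\textnormal{OU}\proj{2})\proj{2} = (\eta\emptyword)\proj{2}=0$. Together with $(\bell^\textnormal{OU})^\emptyword = x$ (the only empty-word contribution is $x$, from the $n=0$ term), this is exactly what (ii)--(iii) require.

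The main obstacle is step (i), the admissibility and integrability needed for Lemma~\ref{lem:sig-ito}. Collecting terms, $\bell^\textnormal{OU}$ is supported on the words $\word{1}\conpow{k}$, with coefficient $(-\kappa)^{k-1}\kappa(\theta - x)$ for $k\geq 1$ (and $x$ for $k=0$), and on the words $\word{2}\,\word{1}\conpow{m}$ with coefficient $(-\kappa)^m\eta$. Pairing against the signature uses $\sig^{\word{1}\conpow{k}} = t^k/k!$ together with the Gaussian iterated integrals $\sig^{\word{2}\word{1}\conpow{m}}$, so I would bound $\norm{\bell^\textnormal{OU}}_t^\A$ level by level: the deterministic part sums to an $e^{|\kappa| t}$-type series, while for the stochastic part I would control $\sum_m |\kappa|^m|\eta|\,|\sig^{\word{2}\word{1}\conpow{m}}|$ in $L^1$ using the factorial decay of the iterated-integral moments, yielding $\E[\norm{\bell^\textnormal{OU}}_t^\A] < \infty$ and hence finiteness a.s. Because the projections computed above are scalar multiples of $\emptyword$ or of $\bell^\textnormal{OU}$ itself, membership in $\I$ then follows from the same bound; $\bell^\textnormal{OU}\in\A$ is the part to state carefully.

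For the time-dependent representation I would argue analogously via \eqref{eq:ItohatW}, or, more directly, simply identify the resulting process (well-defined since $\tilde{\bell}_t^\textnormal{OU}\in\A$) with the classical explicit solution. With $\shuexp{\kappa\word{1}} = \sum_{m\geq 0}\kappa^m\word{1}\conpow{m}$ one computes $\tilde{\bell}_t^\textnormal{OU}\proj{1} = 0$, $\tilde{\bell}_t^\textnormal{OU}\proj{22}=0$ and $\tilde{\bell}_t^\textnormal{OU}\proj{2} = \eta e^{-\kappa t}\shuexp{\kappa\word{1}}$, the last pairing to $\eta e^{-\kappa t}\sum_m \kappa^m t^m/m! = \eta$; since $\bracketsig{\shuexp{\kappa\word{1}}\word{2}} = \int_0^t e^{\kappa s}\,\d W_s$, one gets $\bracketsig{\tilde{\bell}_t^\textnormal{OU}} = \theta + (x-\theta)e^{-\kappa t} + \eta\int_0^t e^{-\kappa(t-s)}\,\d W_s$, which is precisely the OU solution. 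The drift coming from $\dot{\tilde{\bell}}_t$ then matches $\kappa(\theta - \bracketsig{\tilde{\bell}_t^\textnormal{OU}})$, confirming the representation (and, a posteriori, the equality of the two representations). Uniqueness for \eqref{sde:OU} closes the argument in both cases.
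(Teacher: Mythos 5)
Your proof is correct, and for the time-independent representation it follows essentially the paper's own strategy: compute the projections $(\bell^\textnormal{OU})^\emptyword = x\emptyword$, $\bell^\textnormal{OU}\proj{1} = \kappa\theta\emptyword - \kappa\bell^\textnormal{OU}$, $\bell^\textnormal{OU}\proj{2} = \eta\emptyword$, $\bell^\textnormal{OU}\proj{22} = 0$, feed them into Lemma~\ref{lem:sig-ito}, and conclude by uniqueness for \eqref{sde:OU}; your word-by-word expansion $\bell^\textnormal{OU} = \sum_{n\geq 0}(-\kappa)^n A\,\word{1}\conpow{n}$ with re-indexing is the same computation as the paper's use of the fixed-point identity $\bell^\textnormal{OU} = A - \kappa\,\bell^\textnormal{OU}\word{1}$ coming from Proposition~\ref{prop:coninv-shuexp}. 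Two things genuinely differ. First, admissibility: the paper cites \cite[Theorem 4.2]{linearfbm} for $\bell^\textnormal{OU}\in\A$, while you prove it directly via moment bounds on $\sig^{\word{2}\word{1}\conpow{m}} = \int_0^t\frac{(t-s)^m}{m!}\,\d W_s$; this is more self-contained, but as written it yields $\norm{\bell^\textnormal{OU}}_t^\A<\infty$ a.s.\ for each \emph{fixed} $t$, i.e.\ a null set depending on $t$, whereas the definition of $\A$ requires a single null set valid for all $t\in[0,T]$; the fix is standard (bound $\E\bigl[\sup_{t\leq T}|\sig^{\word{2}\word{1}\conpow{m}}|\bigr]$ by expanding $(t-s)^m$ binomially and applying Doob to $\int_0^t s^j\,\d W_s$), but should be stated. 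Second, the time-dependent representation: the paper never leaves the algebra, combining Proposition~\ref{prop:resolvent}, the commutation Lemma~\ref{lem:elli-exp} to get $(\bell^\textnormal{OU}-\theta\emptyword)\shuprod\shuexp{\kappa\word{1}} = (x-\theta)\emptyword + \eta\shuexp{\kappa\word{1}}\word{2}$, and the shuffle property of Proposition~\ref{prop:shufflepropertyextended}; you instead evaluate $\bracketsig{\tilde{\bell}_t^\textnormal{OU}}$ directly and recognize the variation-of-constants solution $\theta+(x-\theta)e^{-\kappa t}+\eta\int_0^t e^{-\kappa(t-s)}\,\d W_s$. Your route is shorter and avoids Lemma~\ref{lem:elli-exp} altogether (the only point to justify is the $L^2$ interchange $\sum_m\kappa^m\int_0^t\frac{s^m}{m!}\,\d W_s=\int_0^t e^{\kappa s}\,\d W_s$), but the paper's algebraic route is the one that scales: the identical template produces the time-dependent mGBM representation in Proposition~\ref{prop:rep-mGBM}, where recognizing a closed form is less clean, and it delivers $\A$-membership of $\tilde{\bell}_t^\textnormal{OU}$ as a by-product of the shuffle property rather than by inspection.
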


    \begin{proof}
        The proof is detailed in Appendix \ref{apn:OU}. 
    \end{proof}
    
    \begin{sqexample} \label{ex:elements_OU}
        To be more explicit, up to order 2, the linear form of the Ornstein-Uhlenbeck process in \eqref{eq:linear-OU} reads
        $$ \bell^\textnormal{OU} = \left( x,
        \begin{pmatrix}
            -\kappa (x - \theta) \\
            \eta
        \end{pmatrix},
        \begin{pmatrix}
            \kappa^2 (x - \theta) & 0 \\
            -\kappa \eta & 0
        \end{pmatrix},
        \cdots \right). $$
    \end{sqexample}

    Although the representations \eqref{eq:linear-OU} and \eqref{eq:linear-OUtime} are equivalent, from a numerical perspective \eqref{eq:linear-OUtime} is more advantageous. In the time-independent representation \eqref{eq:linear-OU}, $e^{-\kappa t}$ is approximated by the first terms of its Taylor expansion $\sum_{n \geq 0} \frac{(-\kappa t)^n}{n!}$, which becomes numerically unstable for lower orders of truncation in the region $t > 1 / |\kappa|$, see Figure \ref{fig:ex-traj-OU}. However, in the time-dependent representation \eqref{eq:linear-OUtime}, $e^{-\kappa t}$ is exact and even though the approximate solution deteriorates with time, it stays stable and converges towards $\theta$, see Figure \ref{fig:ex-traj-OUtime}.
    
    \begin{figure}[H]
        \centering
        \subfloat[\centering $\kappa=1, \theta=0.25, \eta=1.2$]{{\includegraphics[width=\twoplotswidth]{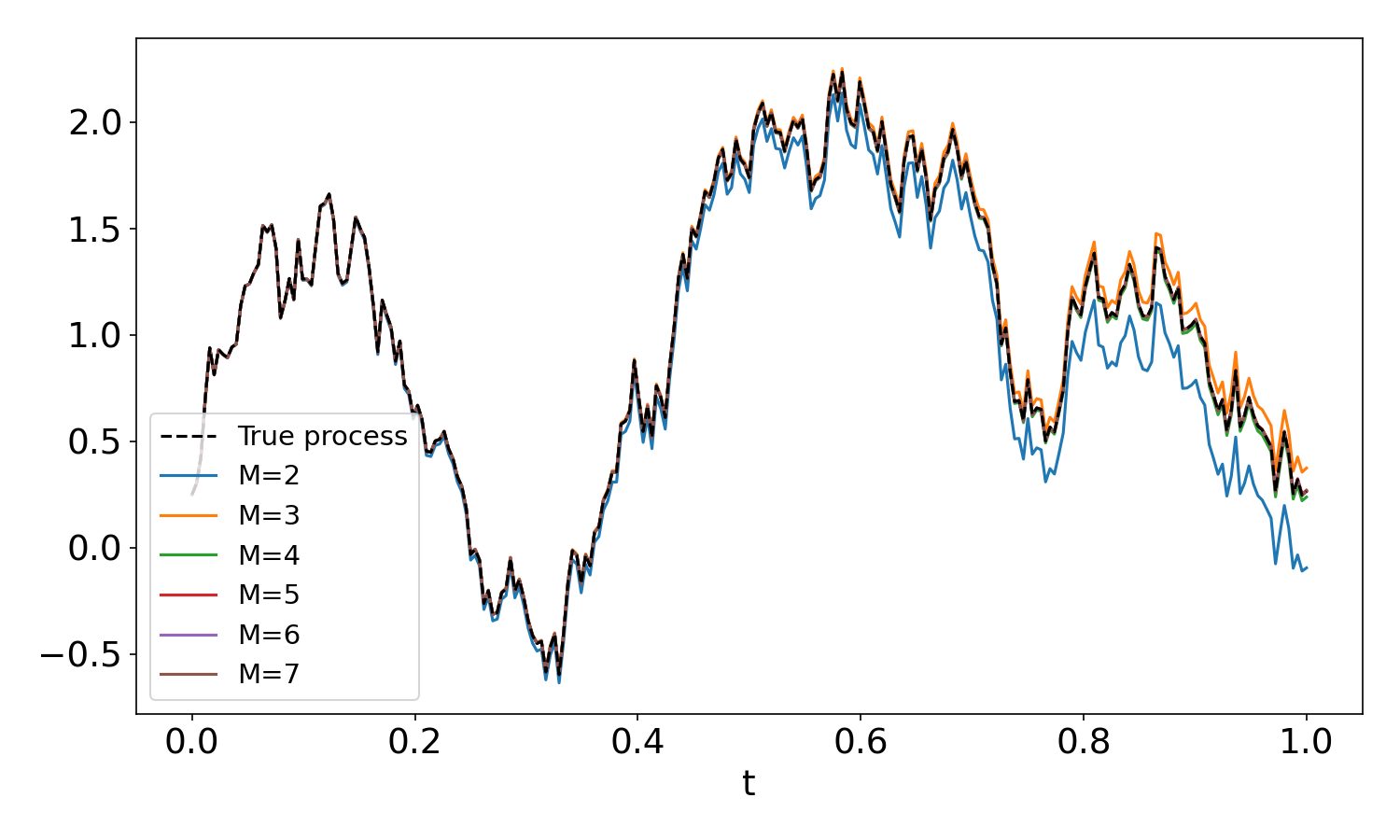}}} 
        \quad
        \subfloat[\centering $\kappa=4, \theta=0.25, \eta=2$]{{\includegraphics[width=\twoplotswidth]{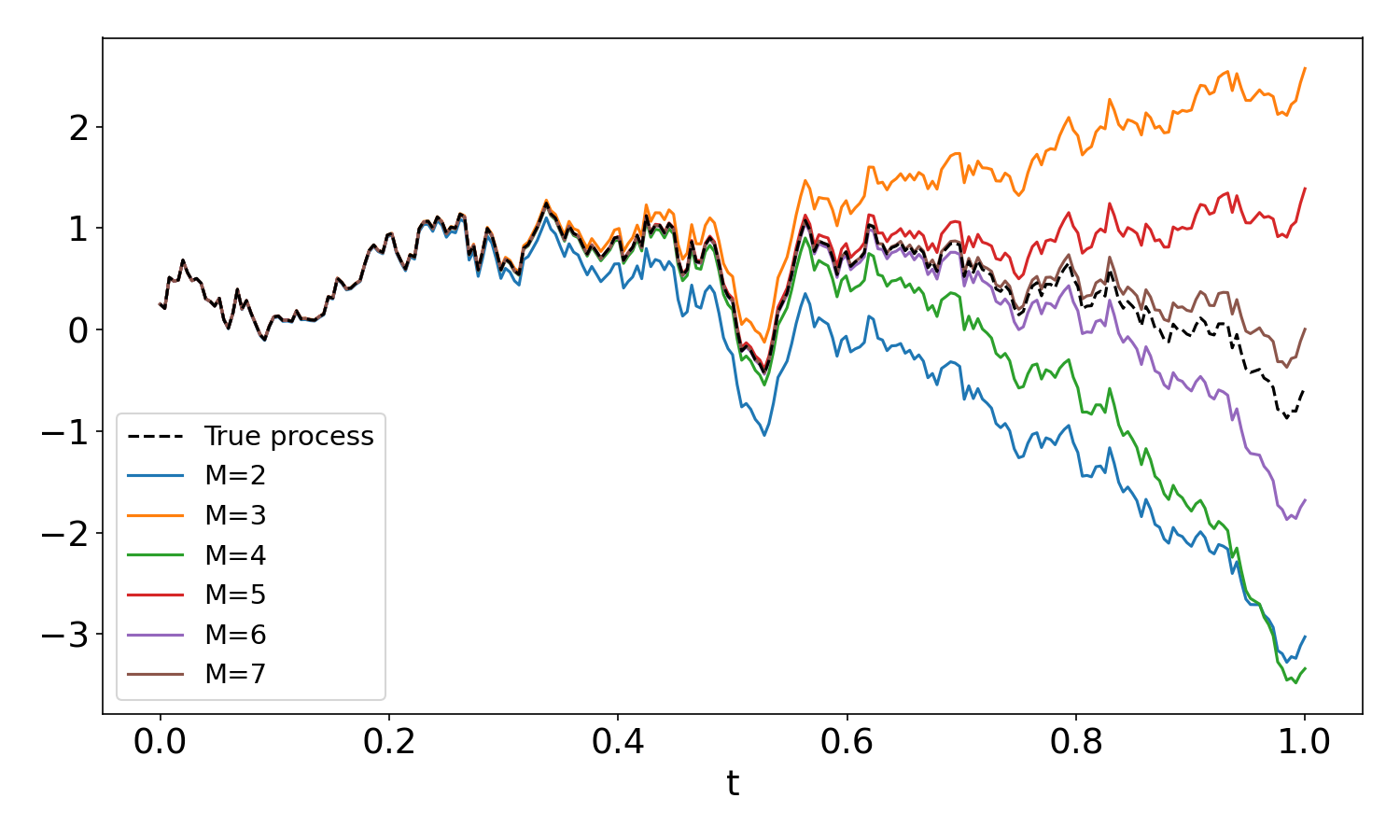}}}  
        \caption{Trajectories of an Ornstein-Uhlenbeck process against their truncated time-independent linear representation \eqref{eq:linear-OU}, i.e. $\bracketsigtrunc[M]{\bell^\textnormal{OU}}$, for several truncation orders $M$.}
        \label{fig:ex-traj-OU}
    \end{figure}

    \begin{figure}[H]
        \centering
        \subfloat[\centering $\kappa=1, \theta=0.25, \eta=1.2$]{{\includegraphics[width=\twoplotswidth]{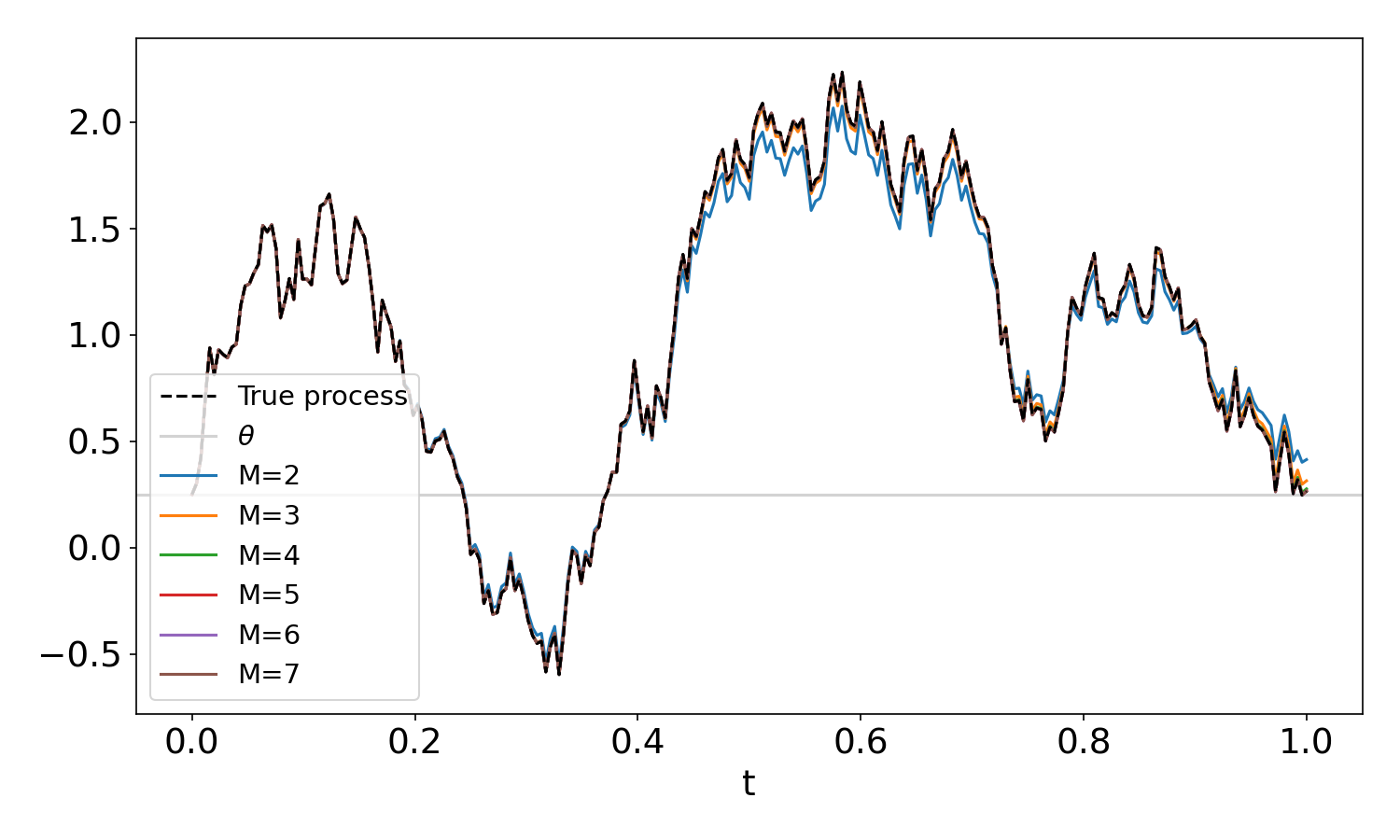}}} 
        \quad
        \subfloat[\centering $\kappa=4, \theta=0.25, \eta=2$]{{\includegraphics[width=\twoplotswidth]{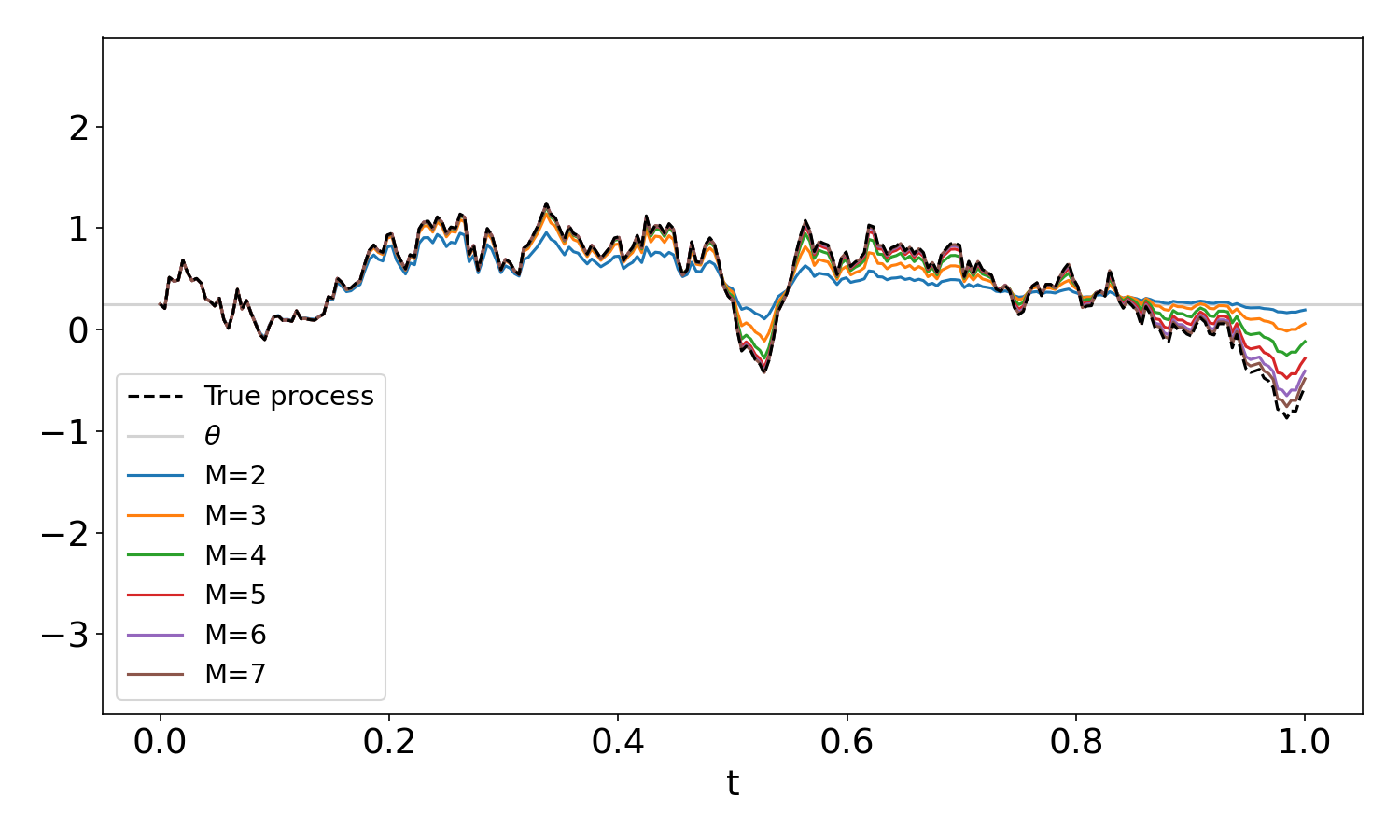}}} 
        \caption{Trajectories of an Ornstein-Uhlenbeck process against their truncated time-dependent linear representation \eqref{eq:linear-OUtime}, i.e. $\bracketsigtrunc[M]{\tilde{\bell}_t^\textnormal{OU}}$, for several truncation orders $M$.}
        \label{fig:ex-traj-OUtime}
    \end{figure}

    If $|\kappa t| < 1$, we can see in the left-hand side of Figure \ref{fig:ex-traj-OU} and Figure \ref{fig:ex-traj-OUtime} that the truncated linear representations seems to converge quite quickly to the explicit solution of the Ornstein-Uhlenbeck and a truncation order $M=4$ is sufficient to get a relatively close fit. \\

    Going back to our signature volatility model, the representations of the Ornstein-Uhlenbeck process in Proposition~\ref{prop:rep-OU} combined with the shuffle property of Proposition~\ref{prop:shufflepropertyextended} show that the model \eqref{eq:sigmodel1}-\eqref{eq:sigmodel2} nests any stochastic volatility model based on an Ornstein-Uhlenbeck process of the form
    
    $$ \frac{\d S_t}{S_t} = f(t, X_t) \d B_t, \quad f(t, x) = \sum_{k \geq 0} \alpha_k(t) x^k, $$
    
    for some coefficients $\alpha_k: [0, T] \to \R$ for which the series has an infinite radius of absolute convergence for all $t \in [0, T]$. More precisely, using the representation \eqref{eq:linear-OU} or \eqref{eq:linear-OUtime} and the shuffle property in Lemma~\ref{prop:shufflepropertyextended}, we can write 
    \begin{align} \label{eq:fanalytic}
        f(t, X_t) = \sum_{k \geq 0} \alpha_k(t) \bracketsig{\bell^\textnormal{OU}}^k = \bracketsig{\bsigma_t} \quad \text{with } \bsigma_t := \sum_{k \geq 0} \alpha_k(t) \left( \bell^\textnormal{OU} \right) \shupow{k}.
    \end{align}
    It is easily shown that $\bsigma_t \in \A$ for all $t \in [0, T]$ since $\norm{\bsigma_t}_s^\A \leq \sum_k |\alpha_k(t)| \cdot \norm{(\bell^\textnormal{OU}) \shupow{k}}_s^\A$ for all $s \in [0, T]$ and $\norm{\bell \shupow{k}} \leq \norm{\bell}^k$ for all $k$, see \cite[Appendix C]{linearfbm}, implying $(\alpha_k(t))_k$ with infinite radius of absolute convergence for all $t \in [0, T]$ is enough to imply $\bsigma_t \in \A$.
    
    This clearly includes:
    \begin{itemize}
        \item
        The Stein-Stein model \cite{stein-stein} for $f(t, x) = x$,
        
        \item 
        The Bergomi model \cite{dupire1993,BergomiSmileII} for 
        $$ f(t, x) = \xi_0(t) e^{\eta x} = \xi_0(t) \sum_{k \geq 0} \frac{(\eta x)^k}{k!}, $$
        for some $\eta \in \R$ and some deterministic input curve $\xi_0$,

        \item
        The Quintic OU model \cite{quintic} for 
        $$ f(t, x) = \xi_0(t) (\alpha_0 + \alpha_1 x + \alpha_3 x^3 + \alpha_5 x^5), $$
        for some $\alpha_i \geq 0$ and some deterministic input curve $\xi_0$, and any other finite polynomial of the Ornstein-Uhlenbeck process.
    \end{itemize}

\subsubsection{Models based on the mean-reverting geometric Brownian motion}

    More generally, the mean-reverting geometric Brownian motion (mGBM) $Y$, given by 
    \begin{align} \label{sde:mGBM}
        \d Y_t = \kappa (\theta - Y_t) \d t + (\eta + \alpha Y_t) \d W_t, \quad Y_0 = y \in \R,
    \end{align}
    for $\kappa, \theta, \eta, \alpha \in \R$ can be represented as an infinite  linear combination of the signature of the time-extended Brownian motion, either in a time-independent or time-dependent way.
    
    \begin{proposition} \label{prop:rep-mGBM}
        The unique solution $Y$ to \eqref{sde:mGBM} is given by
        \begin{align} \label{eq:linear-mGBM}
            Y_t = \bracketsig{\bell^\textnormal{mGBM}},
            \quad
            \bell^\textnormal{mGBM} = \left( y \emptyword + \left( \kappa \theta - \frac{\alpha \eta}{2} \right) \word{1} + \eta \word{2} \right) \shuexp{\left(-\left( \kappa + \frac{\alpha^2}{2} \right) \word{1} + \alpha \word{2} \right)},
        \end{align}
        such that $\bell^\textnormal{mGBM} \in \A$, with $\shuexp{~}$ as defined in \eqref{nota:shuexp}.
        Equivalently, $Y$ can also be written in terms of time-dependent coefficients:
        \begin{align} \label{eq:linear-mGBMtime}
           Y_t = \bracketsig{\tilde{\bell}_t^\textnormal{mGBM}},
           \quad
           \tilde{\bell}_t^\textnormal{mGBM} = \theta \emptyword + e^{-\lambda t} \left( (\bell^\textnormal{mGBM} - \theta \emptyword) \shuprod \shuexp{\lambda \word{1}} \right),
        \end{align}
        for some $\lambda \in \R$.
    \end{proposition}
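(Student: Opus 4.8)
The plan is to reduce the representation to two purely \emph{algebraic} identities in $\eTA[2]$ and then invoke the Itô decomposition of Lemma~\ref{lem:sig-ito}. Write $\bell := \bell^\textnormal{mGBM} = \bp\,\shuexp{\bq}$ with $\bp = y\emptyword + (\kappa\theta - \tfrac{\alpha\eta}{2})\word{1} + \eta\word{2}$ and $\bq = -(\kappa+\tfrac{\alpha^2}{2})\word{1} + \alpha\word{2}$, and recall that, since $\bq$ is a combination of single letters, $\shuexp{\bq} = \coninv{\emptyword - \bq}$. If I can show that $\bell \in \I$ and that it satisfies
\[
\bell\proj{1} + \tfrac{1}{2}\bell\proj{22} = \kappa\theta\,\emptyword - \kappa\,\bell, \qquad \bell\proj{2} = \eta\,\emptyword + \alpha\,\bell,
\]
then Lemma~\ref{lem:sig-ito} applied to the constant-in-time element $\bell$ gives $\d\bracketsig{\bell} = \bracketsig{\kappa\theta\emptyword - \kappa\bell}\,\d t + \bracketsig{\eta\emptyword + \alpha\bell}\,\d W_t = \kappa(\theta - \bracketsig{\bell})\,\d t + (\eta + \alpha\bracketsig{\bell})\,\d W_t$, i.e.\ $Y_t := \bracketsig{\bell}$ solves \eqref{sde:mGBM}. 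The initial condition is immediate since $\bracketsig[0]{\bell} = \bell^\emptyword = \bp^\emptyword = y$, and uniqueness follows from the global Lipschitz property of the affine coefficients of \eqref{sde:mGBM}.

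The two identities will be established from the behaviour of the projection $\proj{i}$ under concatenation. First I would record the Leibniz-type rule $(\bp\bq)\proj{i} = \bp(\bq\proj{i}) + \bq^\emptyword(\bp\proj{i})$, which follows directly from the deconcatenation formula for the coefficients of a concatenation product. Applying it to the resolvent identity $\shuexp{\bq} = \emptyword + \shuexp{\bq}\,\bq$ coming from Proposition~\ref{prop:resolvent}, together with $\word{j}\proj{i} = \delta_{ij}\emptyword$, yields $\shuexp{\bq}\proj{i} = q_i\,\shuexp{\bq}$, where $q_1 = -(\kappa+\tfrac{\alpha^2}{2})$ and $q_2 = \alpha$. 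Feeding this back through the Leibniz rule for $\bell = \bp\,\shuexp{\bq}$ (using $(\shuexp{\bq})^\emptyword = 1$, $\bp\proj{1} = (\kappa\theta - \tfrac{\alpha\eta}{2})\emptyword$ and $\bp\proj{2} = \eta\emptyword$) gives the compact expressions $\bell\proj{1} = q_1\bell + (\kappa\theta - \tfrac{\alpha\eta}{2})\emptyword$ and $\bell\proj{2} = \alpha\bell + \eta\emptyword$; the latter is exactly the diffusion identity. Iterating once more gives $\bell\proj{22} = \alpha^2\bell + \alpha\eta\emptyword$, so that $\bell\proj{1} + \tfrac12\bell\proj{22} = (q_1 + \tfrac{\alpha^2}{2})\bell + \kappa\theta\emptyword = -\kappa\bell + \kappa\theta\emptyword$, which is the drift identity. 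Here the specific shifts $-\tfrac{\alpha\eta}{2}$ in $\bp$ and $-\tfrac{\alpha^2}{2}$ in $\bq$ are precisely the corrections needed to absorb the $\tfrac12\bell\proj{22}$ term.

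I expect the genuine work to lie not in the algebra but in the analytic conditions $\bell\in\A$ and $\bell\in\I$, which guarantee that $\bracketsig{\bell}$ is well-defined and that Lemma~\ref{lem:sig-ito} actually applies. For this I would proceed as in the OU case of Proposition~\ref{prop:rep-OU}: the shuffle property (Proposition~\ref{prop:shufflepropertyextended}) formally identifies $\bracketsig{\shuexp{\bq}} = e^{\bracketsig{\bq}} = e^{q_1 t + q_2 W_t}$, a geometric Brownian motion with finite moments of all orders, and one controls the seminorms $\norm{\cdot}_t^\A$ and $\norm{\cdot}_t^\I$ of $\bp\,\shuexp{\bq}$ and of its projections through the submultiplicativity estimate $\norm{\bell\shupow{k}} \le \norm{\bell}^k$ and the related bounds in \cite[Appendix C]{linearfbm}. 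This is the step I would flag as the main obstacle, since it requires honest convergence estimates for the infinite series defining the seminorms rather than formal manipulation.

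Finally, the time-dependent representation \eqref{eq:linear-mGBMtime} need not be re-derived from the SDE. Since $\shuexp{\lambda\word{1}} = \coninv{\emptyword - \lambda\word{1}} = \sum_{n\ge0}\lambda^n\word{1\cdots1}$ and $\sig^{\word{1\cdots1}} = t^n/n!$ (with $n$ letters $\word{1}$), one has $\bracketsig{\shuexp{\lambda\word{1}}} = e^{\lambda t}$; this also shows $\shuexp{\lambda\word{1}}\in\A$, so $\tilde{\bell}_t = \theta\emptyword + e^{-\lambda t}\big((\bell - \theta\emptyword)\shuprod\shuexp{\lambda\word{1}}\big) \in \A$ by Proposition~\ref{prop:shufflepropertyextended}. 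The same shuffle property then gives $\bracketsig{\tilde{\bell}_t} = \theta + e^{-\lambda t}(\bracketsig{\bell} - \theta)\,e^{\lambda t} = \bracketsig{\bell} = Y_t$ for every $t$ and every $\lambda\in\R$, which is the claimed equivalence of the two representations.
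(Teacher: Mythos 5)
Your proposal is correct and follows essentially the same route as the paper: establish the projection identities $\bell\proj{1}$, $\bell\proj{2}$, $\bell\proj{22}$ (the paper states them directly, you derive them via a concatenation Leibniz rule, which is a valid and slightly more explicit way to get the same formulas), apply the It\^o decomposition of Lemma~\ref{lem:sig-ito}, conclude by uniqueness of the solution to \eqref{sde:mGBM}, and obtain the time-dependent representation from Proposition~\ref{prop:shufflepropertyextended} exactly as the paper does. The analytic conditions you flag as the main obstacle ($\bell^\textnormal{mGBM} \in \A$ and the integrability needed for $\I$) are precisely the ones the paper itself does not prove in-house but delegates to the companion paper \cite[Theorem 4.2 and Section 5]{linearfbm}, so your assessment of where the real work lies matches the paper's treatment.
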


    \begin{proof}
        The proof is detailed in Appendix \ref{apn:mGBM}. 
    \end{proof}

    \begin{sqexample}
        Up to order 2, the linear form of a mean-reverting geometric Brownian motion reads
        $$ \bell^\textnormal{mGBM} = \left( y,
        \begin{pmatrix}
            \beta \\
            \gamma
        \end{pmatrix},
        \begin{pmatrix}
            \beta \mu & \beta \alpha \\
            \gamma \mu & \gamma \alpha
        \end{pmatrix},
        \begin{pmatrix}
            \beta \mu^2 & \beta \mu \alpha & \\
            \gamma \mu^2 & \gamma \mu \alpha & \\
            & \beta \mu \alpha & \beta \alpha^2 \\
            & \gamma \mu \alpha & \gamma \alpha^2
        \end{pmatrix},
        \cdots \right), $$
        where $\mu = -\left( \kappa + \frac{\alpha^2}{2} \right), \beta = \mu y + \left( \kappa \theta - \frac{\alpha \eta}{2} \right)$ and $\gamma = \alpha y + \eta$.
    \end{sqexample}
    
   The behavior and numerical limitations of the linear representations of the mean-reverting geometric Brownian motion are similar to those of the Ornstein-Uhlenbeck, as shown in Figure \ref{fig:ex-traj-mGBM} and Figure \ref{fig:ex-traj-mGBMtime}. \\

    \begin{figure}[H]
        \centering
        \subfloat[\centering $\kappa=1, \theta=0.25, \eta=0.5, \alpha=1.6$]{{\includegraphics[width=\twoplotswidth]{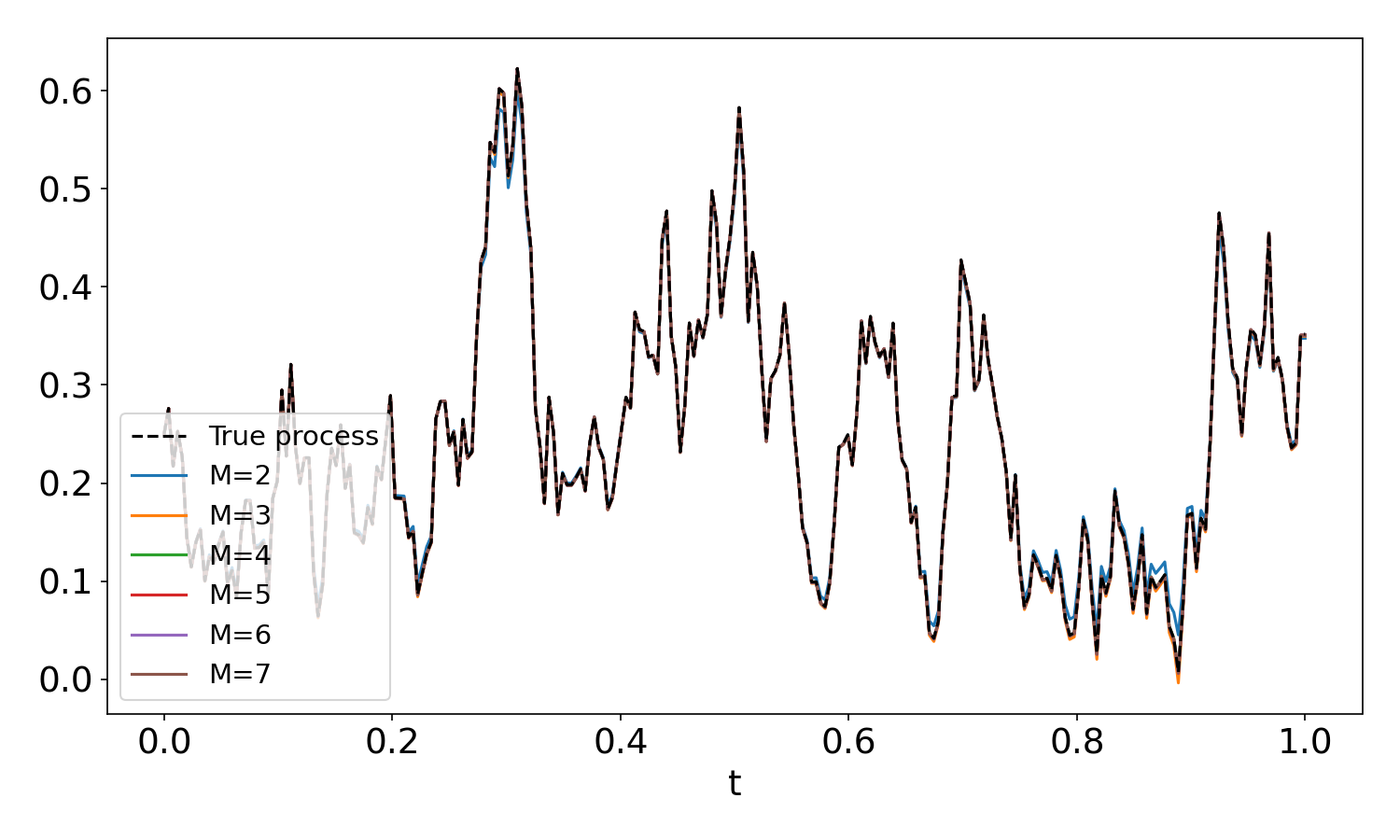}}} 
        \quad
        \subfloat[\centering $\kappa=4, \theta=0.25, \eta=0.5, \alpha=2$]{{\includegraphics[width=\twoplotswidth]{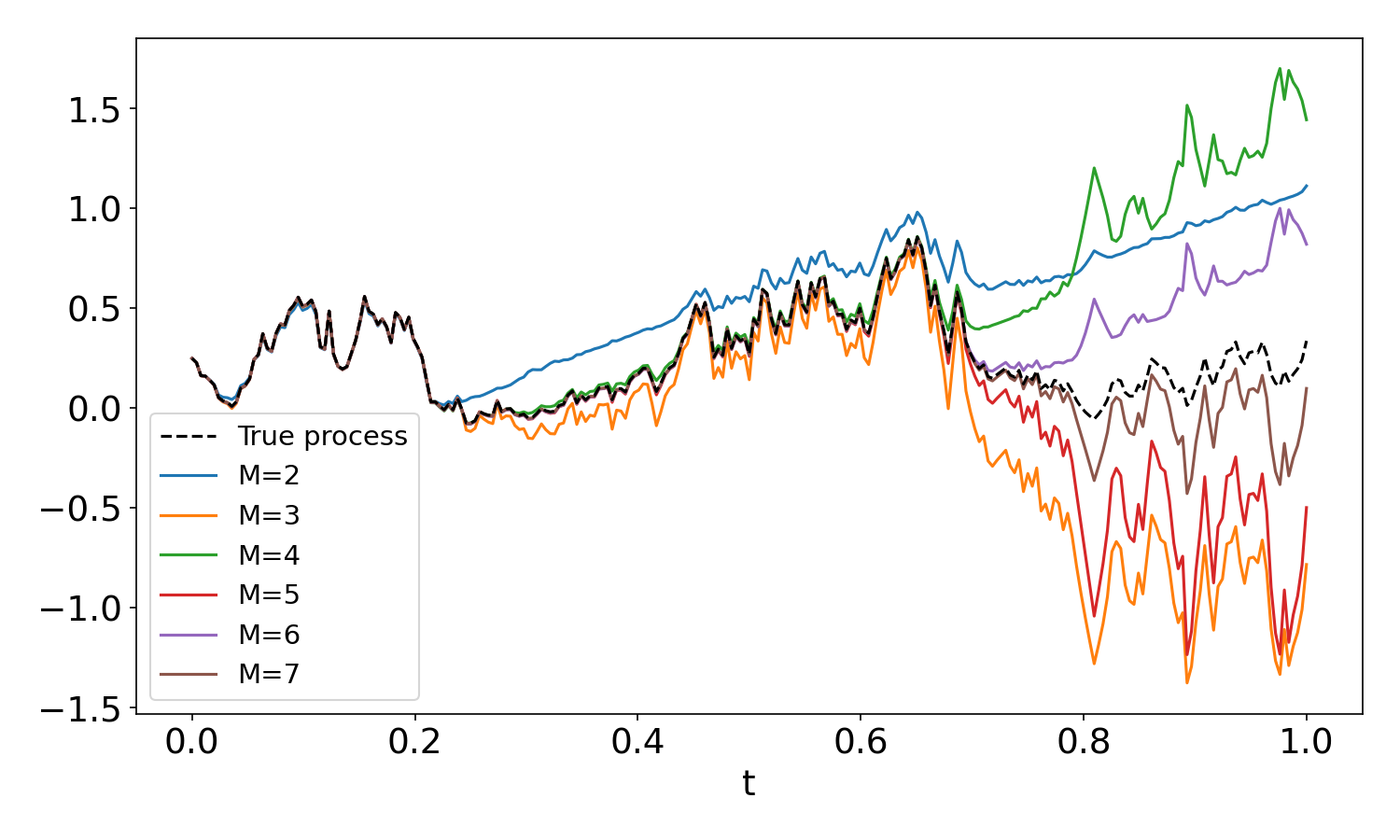}}} 
        \caption{Trajectories of a mean-reverting geometric Brownian motion against their truncated time-independent linear representation \eqref{eq:linear-mGBM}, i.e. $\bracketsigtrunc[M]{\bell^\textnormal{mGBM}}$, for several truncation orders $M$.}
        \label{fig:ex-traj-mGBM}
    \end{figure}

    \begin{figure}[H]
        \centering
        \subfloat[\centering $\kappa=1, \theta=0.25, \eta=0.5, \alpha=1.6, \lambda = \kappa + \frac{\alpha^2}{2}$]{{\includegraphics[width=\twoplotswidth]{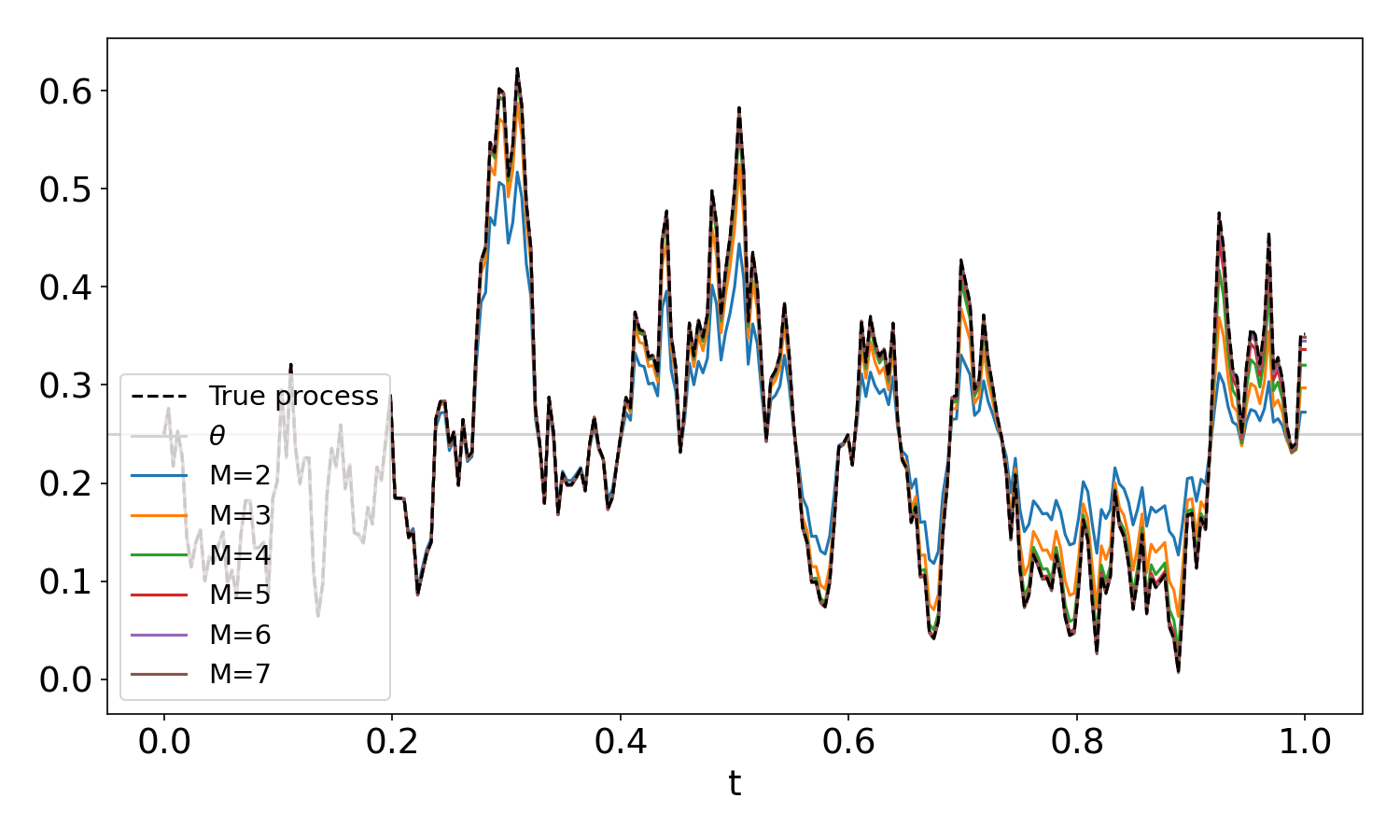}}} 
        \quad
        \subfloat[\centering $\kappa=4, \theta=0.25, \eta=0.5, \alpha=2, \lambda = \kappa + \frac{\alpha^2}{2}$]{{\includegraphics[width=\twoplotswidth]{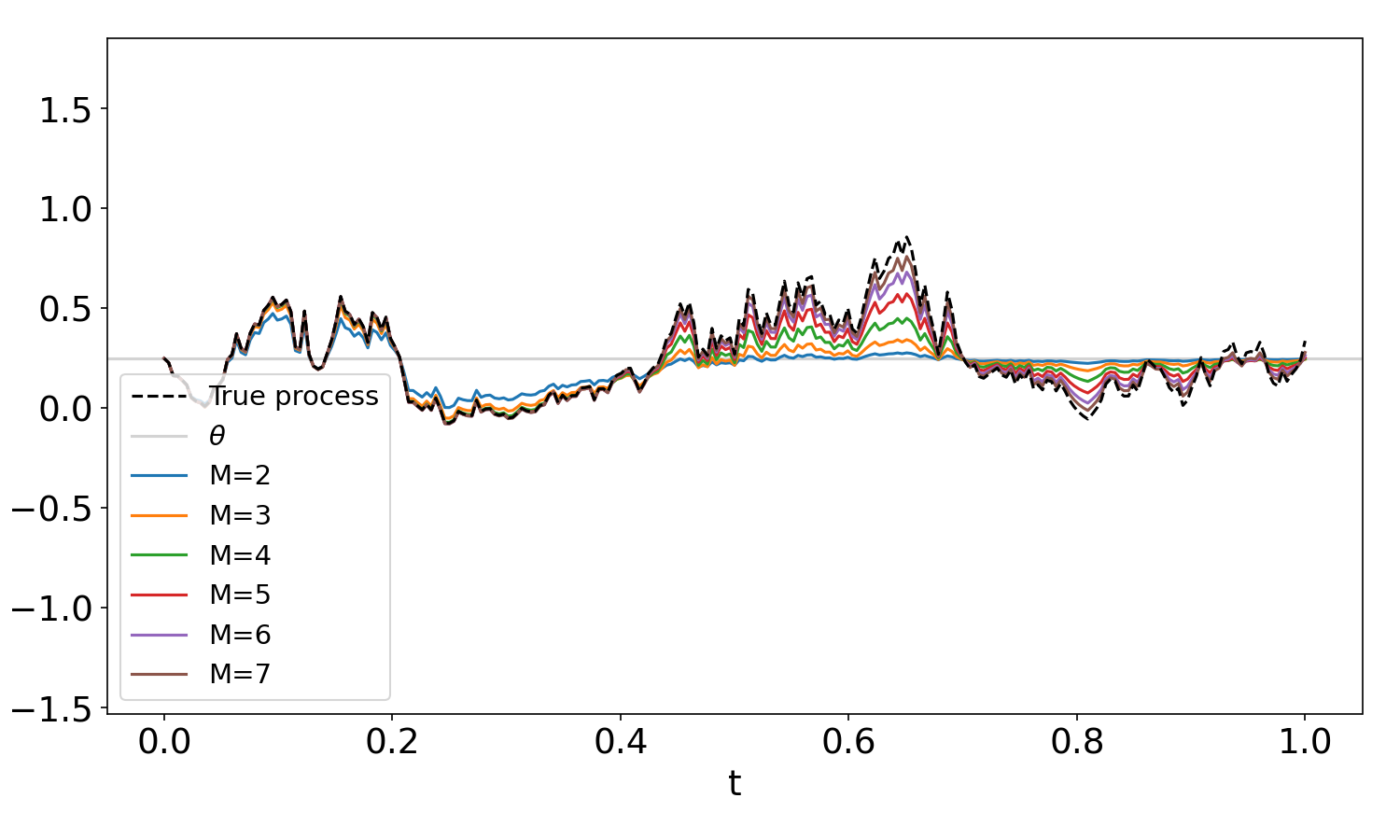}}} 
        \caption{Trajectories of a mean-reverting geometric Brownian motion against their truncated time-dependent linear representation \eqref{eq:linear-mGBMtime}, i.e. $\bracketsigtrunc[M]{\tilde{\bell}_t^\textnormal{mGBM}}$, for several truncation orders $M$.}
        \label{fig:ex-traj-mGBMtime}
    \end{figure}

    \begin{sqremark}
        For Monte Carlo simulations, it is better to have explicit processes. When $\alpha \neq 0$, the mGBM solution to \eqref{sde:mGBM} can also be formulated explicitly with
        \begin{align} \label{eq:mGBM-montecarlo}
            Y_t := \left( y + \frac{\eta}{\alpha} + \kappa \left( \theta + \frac{\eta}{\alpha} \right) \int_0^t e^{\left( \kappa + \frac{\alpha^2}{2} \right) s - \alpha W_s} \d s \right) e^{-\left( \kappa + \frac{\alpha^2}{2} \right) t + \alpha W_t} - \frac{\eta}{\alpha}.
        \end{align}
        When $\alpha = 0$, the mGBM is an Ornstein-Uhlenbeck process.
    \end{sqremark}
    
    Going back to the signature volatility model \eqref{eq:sigmodel2}. The mGBM representation encompasses the following volatility processes 
    \begin{itemize}
        \item The volatility in the Hull-White model \cite{hull-white}, i.e.
        \begin{align} \label{sde:hullwhite}
            \frac{\d \sigma_t}{\sigma_t} = \left( \mu - \tfrac{1}{2} \xi^2 \right) \d t + \xi \d W_t, \quad \sigma_0 \in \R.
        \end{align}
        
        \item The volatility in \citet{dupire1993}, i.e.
        \begin{align} \label{sde:dupireforwardvariance}
            \frac{\d \sigma_t}{\sigma_t} = \frac{1}{2} \left(\frac{\partial \log V_t(0)}{\partial t} - \frac{b^2}{4} \right) \d t + \frac{b}{2} \d W_t, \quad \sigma_0 \in \R^+,
        \end{align}
        for some deterministic forward variance curve $V_t(0)$. Note that the linear representation of $\sigma$ would be time-dependent.
    \end{itemize}

\subsubsection{Models based on the square-root process} \label{S:CIR}

    A square-root or Cox-Ingersoll-Ross \cite{cox-ingersoll-ross} (CIR) process $V$, driven by
    \begin{align} \label{sde:CIR}
        \d V_t = \kappa (\theta - V_t) \d t + \eta \sqrt{V_t} \d W_t, \quad V_0 = v > 0,
    \end{align}
    seems to admit a conjectured linear representation
    \begin{align} \label{eq:linear-CIR}
        V_t = \bracketsig{\bell^\textnormal{CIR}} = \left( \bracketsig{\bsigma^\textnormal{CIR}} \right)^2,
    \end{align}
    where $\bell^\textnormal{CIR} := \left( \bsigma^\textnormal{CIR} \right) \shupow{2}$ with $\bsigma^\textnormal{CIR}$ satisfying  the non-linear algebraic equation
    $$ \left( \bsigma^\textnormal{CIR} \right) \shupow{2} = v \emptyword + \left( \left( \kappa \theta - \frac{\eta^2}{4} \right) \emptyword - \kappa \left( \bsigma^\textnormal{CIR} \right) \shupow{2} \right) \word{1} + \eta \bsigma^\textnormal{CIR} \word{2}. $$

    \begin{sqremark}
        The theoretical convergence, i.e.~proving that $\bsigma^\textnormal{CIR} \in \A$, seems intricate to obtain and is still an open problem. From the numerical perspective, the truncated version of \eqref{eq:linear-CIR} is well-defined and Figure~\ref{fig:ex-traj-CIR} suggests a certain type of convergence under the \citet{feller} condition $2 \kappa \theta > \eta$, with $V$ in \eqref{eq:linear-CIR} being the strong solution. Moreover, Figure~\ref{fig:sig-pricing}~(b) below also provides a numerical convergence of pricing of our truncated representation \eqref{eq:linear-CIR} in the context of the Heston model, where the Feller condition is not satisfied. In this case, we expect the representation in \eqref{eq:linear-CIR} to yield a weak solution.
    \end{sqremark}

    \begin{sqexample} \label{ex:elements_CIR}
        Up to order 3, the linear representation of the Cox-Ingersoll-Ross process reads
        \begin{align*}
            \bell^{\textnormal{CIR}} &
            = \left( v,
            \begin{pmatrix}
                \beta \\
                \gamma
            \end{pmatrix},
            \begin{pmatrix}
                \beta \mu & \beta \alpha \\
                \gamma \mu & \gamma \alpha
            \end{pmatrix},
            \begin{pmatrix}
                \beta \mu^2 & \beta \left( \mu - \frac{\beta}{2 v} \right) \alpha & \\
                \gamma \mu^2 & \gamma \left( \mu - \frac{\beta}{2 v} \right) \alpha & \\
                & \beta \mu \alpha & 0 \\
                & \gamma \mu \alpha & 0
            \end{pmatrix},
            \cdots \right),
        \end{align*}
        where $\mu = -\kappa, \beta = \mu v + \left( \kappa \theta - \frac{\eta^2}{4} \right), \gamma = \eta \sqrt{v}$ and $\alpha = \frac{\eta}{2 \sqrt{v}}$.
    \end{sqexample}

    \begin{figure}[H]
        \centering
        \subfloat[\centering $\kappa=1, \theta=0.25, \eta=0.7$]{{\includegraphics[width=\twoplotswidth]{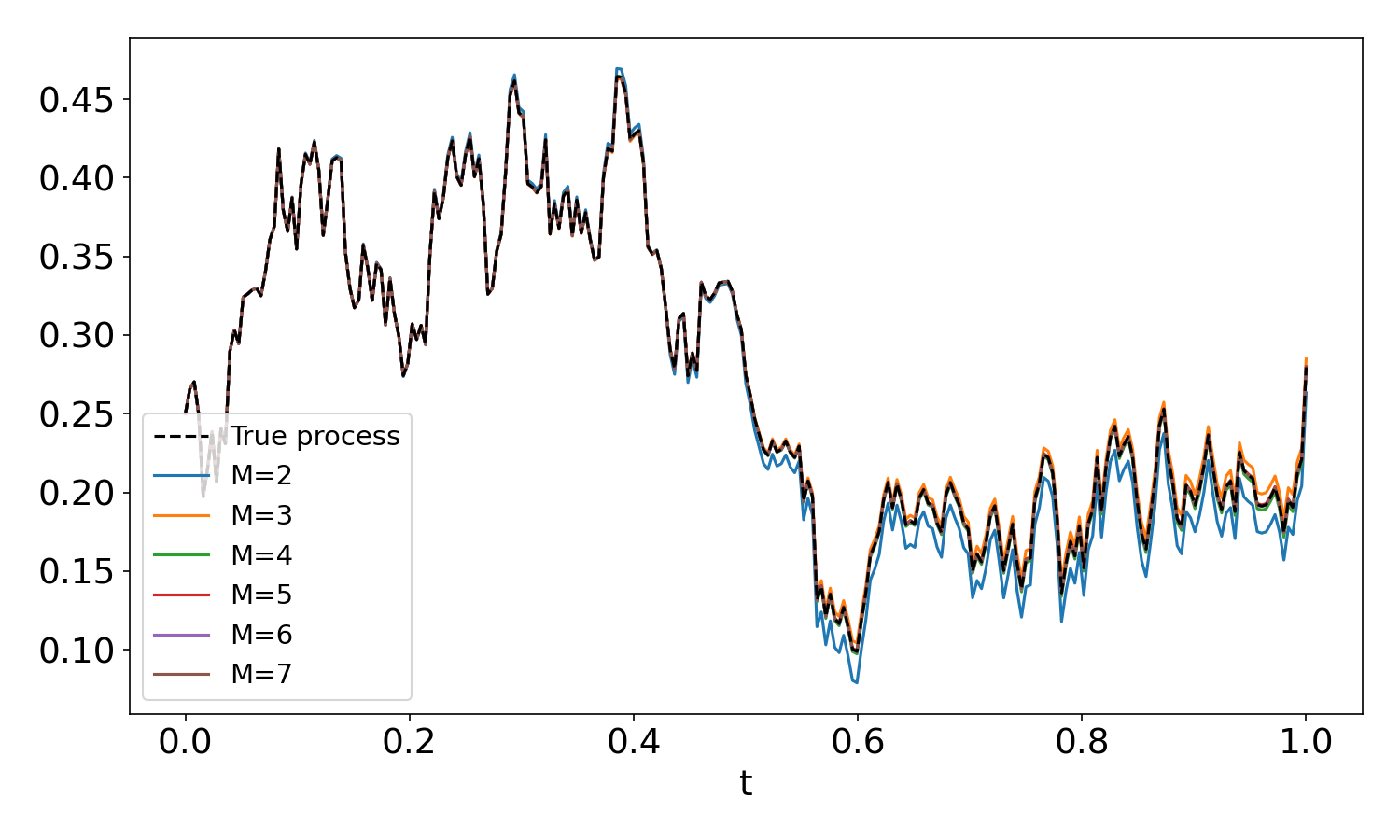}}} 
        \quad
        \subfloat[\centering $\kappa=4, \theta=0.25, \eta=1.4$]{{\includegraphics[width=\twoplotswidth]{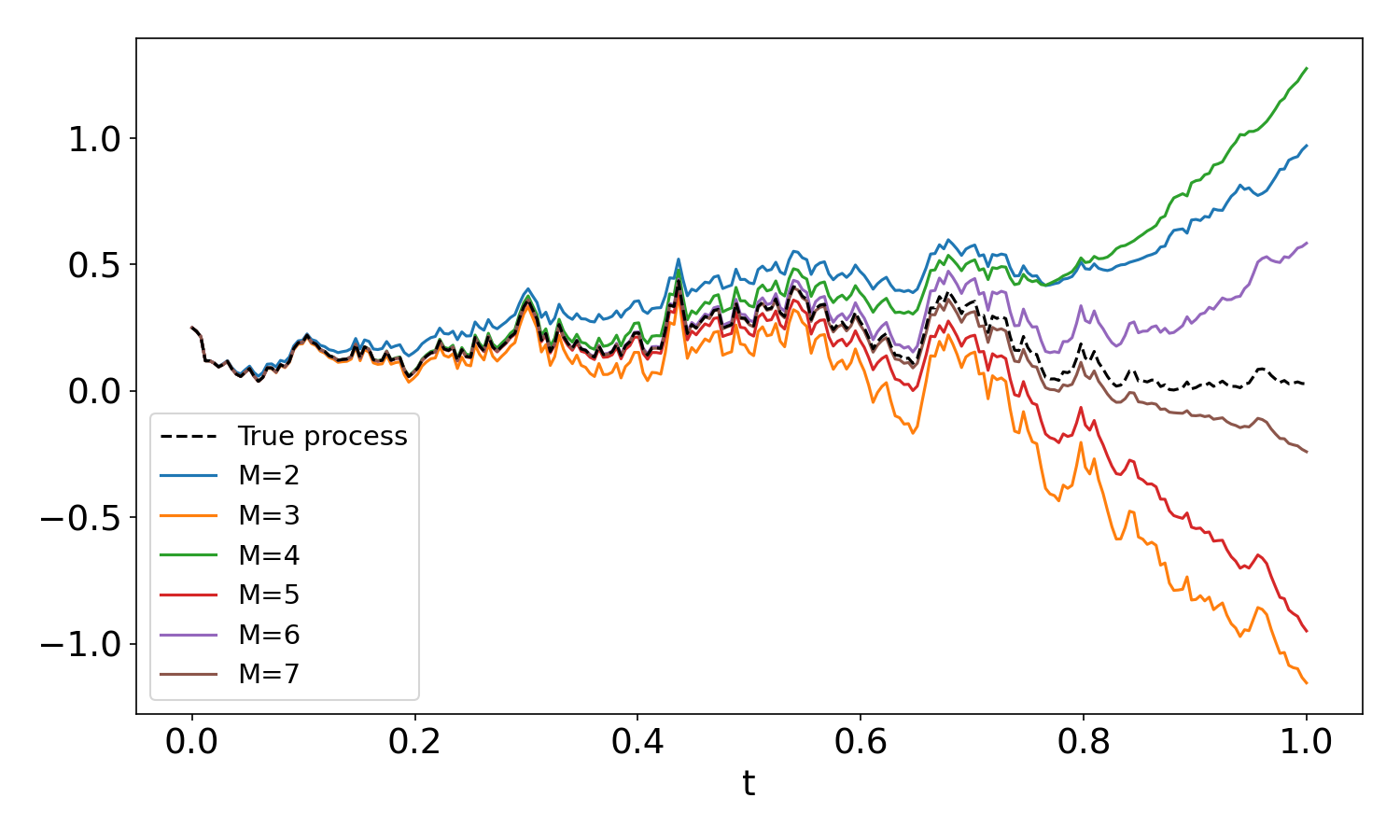}}} 
        \caption{Trajectories of a Cox-Ingersoll-Ross process against their truncated time-independent linear representation \eqref{eq:linear-CIR}, i.e. $\bracketsigtrunc[M]{\bell^{\textnormal{CIR}}}$, for several truncation orders $M$.}
        \label{fig:ex-traj-CIR}
    \end{figure}
    
   Going back to our signature volatility model, the representation \eqref{eq:linear-CIR} allows us to include in our framework the volatility of the Heston model \cite{heston}.

\subsubsection{Models based on path-dependent processes}

    As shown in \cite{linearfbm}, several path-dependent stochastic Volterra processes:
    
    $$ Z_t = Z_0 + \int_0^t K(t-s) (a_0 + a_1 Z_s) \d s + \int_0^t K(t-s) (b_0 + b_1 Z_s) \d W_s, $$
    
    with $Z_0,a_0,a_1,b_0,b_1 \in \R$, for certain locally square-integrable kernels $K$ also admit linear representations in the form $Z_t = \bracketsig{\bell_t}$. This includes non-semimartingale processes such as the Riemann-Liouville fractional Brownian motion
    
    $$ W_t^H = \int_0^t (t-s)^{H-1/2} \d W_s, \quad H \in (0,1). $$
    
    For instance, the (time-dependent) representation of $W^H$ reads
    
    $$ W_t^H = \bracketsig{\bell_t^{\textnormal{RL}}}, \quad \bell_t^{\textnormal{RL}} = t^{H - \frac{1}{2}} \sum_{n=0}^\infty t^{-n} \left( H - \tfrac{1}{2} \right)^{\bar{n}} \word{1} \conpow{n} \word{2}, $$
    
    where $(\cdot)^{\bar{n}}$ is the falling factorial. Please refer to \cite[Theorem 4.5]{linearfbm} for more details and illustrations on such representations. Again any volatility process that is an analytic function of such processes falls into the framework of signature volatility models \eqref{eq:sigmodel1}-\eqref{eq:sigmodel2} thanks to the shuffle property, recall \eqref{eq:fanalytic}. This includes for instance the class of Volterra polynomial models \cite{abi2022joint}, in particular Volterra \cite{abi2024volatility} and rough Bergomi models \cite{bayer2016pricing}. \\

    As as final example, the delayed equation (DE) process $U$, given by 
    \begin{align} \label{eq:sde-DE-exp}
        \d U_t &
        = \left( a_1 + b_1 U_t + c_1 \int_0^t e^{\lambda_1 (t - s)} U_s \d s \right) \d t + \left( a_2 + c_2 \int_0^t e^{\lambda_2 (t - s)} U_s \d s \right) \d W_t, \quad U_0 = u \in \R,
    \end{align}
    
    for some $a_i, b_i, c_i, \alpha_i \in \R$, can be represented as a linear combination of the signature of the time-extended Brownian motion with
    \begin{align} \label{eq:repdelay}
        U_t = \bracketsig{\bell^{\textnormal{DE}}},
        \quad
        \bell^{\textnormal{DE}} = (u \emptyword + a_1 \word{1} + a_2 \word{2}) \coninv{\emptyword - b_1 \word{1} - \word{1} \left( c_1 \shuexp{\lambda_1 \word{1}} \word{1} + c_2 \shuexp{\lambda_2 \word{1}} \word{2} \right)}.
    \end{align}
    
    The reader can refer to \cite[Theorem 4.4]{linearfbm} for more details and illustrations on the linear delayed equation process.
    Similar types of volatility dynamics have been studied in \citet{delayed_drift} in the framework of option pricing when $c_2=0$.
    
    \begin{sqexample} \label{ex:elements_DE}
        Up to order 2, the linear form of a delayed equation process reads
        $$ \bell^{\textnormal{DE}} = \left( u,
        \begin{pmatrix}
            a_1 + b_1 u \\
            a_2
        \end{pmatrix},
        \begin{pmatrix}
            c_1 u + b_1 (a_1 + b_1 u) & c_2 u \\
            a_2 b_1 & 0
        \end{pmatrix},
        \cdots \right), $$
        see Figure \ref{fig:ex-traj-DE} for a numerical illustration of the signature representation.
    \end{sqexample}
    
    \begin{figure}[H]
        \centering
        \subfloat[\centering $a_1=1, b_1=1, c_1=-2, \alpha_1=1, a_2=-0.5, c_2=-2, \alpha_2=1.5, z=0.25$]{{\includegraphics[width=\twoplotswidth]{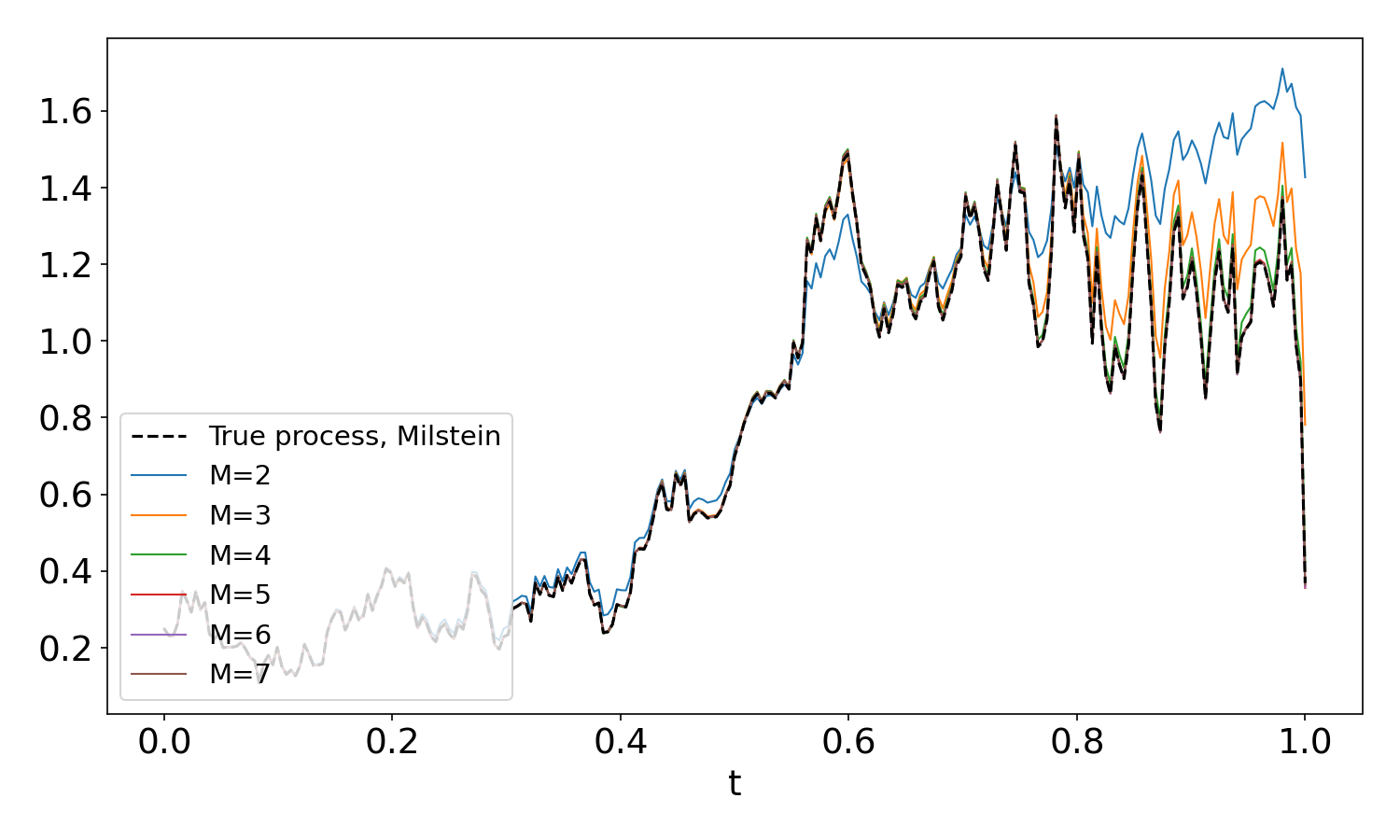}}} 
        \quad
        \subfloat[\centering $a_1=2, b_1=2, c_1=-2, \alpha_1=4, a_2=1, c_2=-1, \alpha_2=-4, z=0.25$]{{\includegraphics[width=\twoplotswidth]{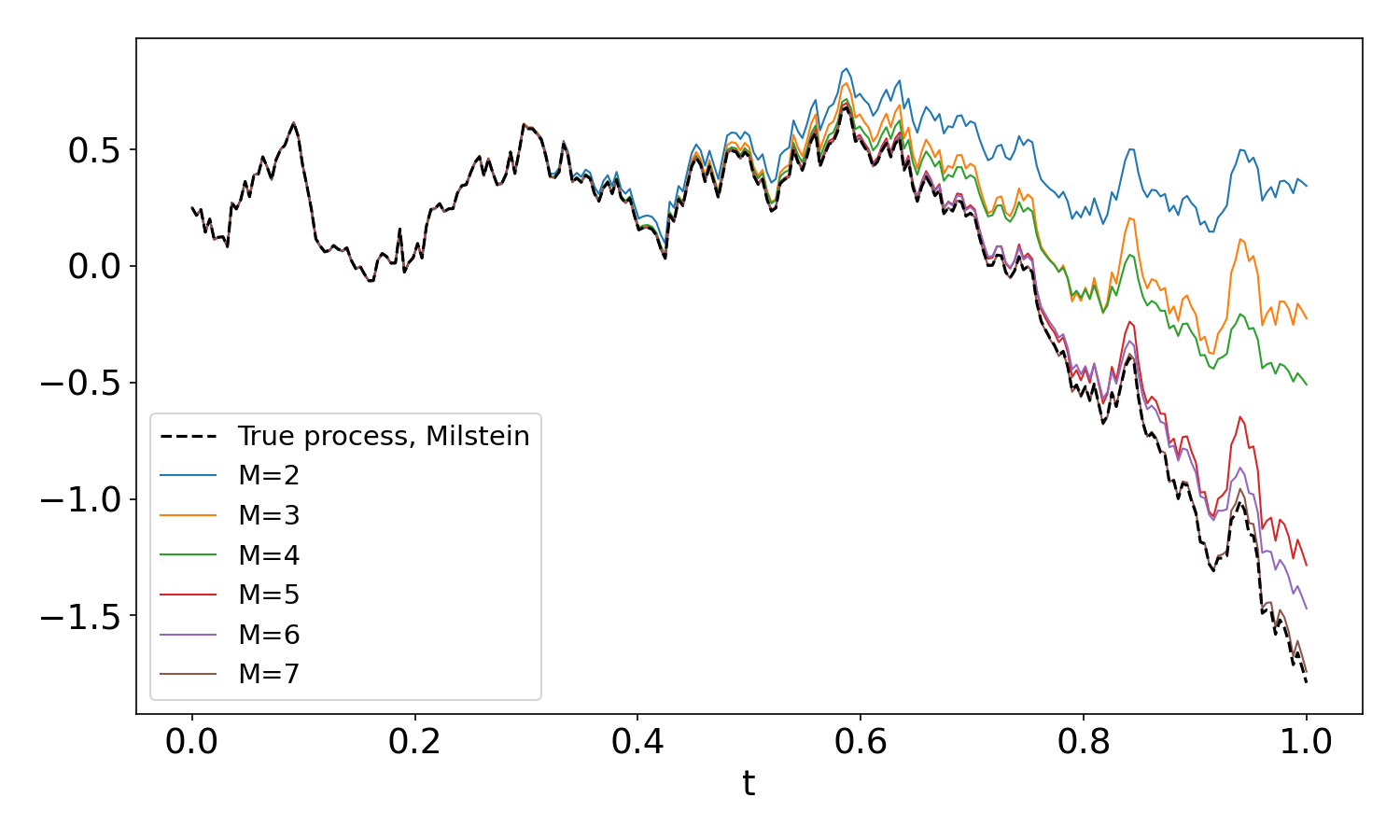}}} 
        \caption{Trajectories of a delayed equation process against their truncated time-independent linear representation \eqref{eq:repdelay}, i.e. $\bracketsigtrunc[M]{\bell^{\textnormal{DE}}}$, for several truncation orders $M$.}
        \label{fig:ex-traj-DE}
    \end{figure}

    \begin{remark}
        The signature framework also allows for multi-factor volatilities, by considering a multi-dimensional Brownian motion, see for instance \cite[Remarks 4.6 and 4.8]{linearfbm}. However the inclusion of a multi-dimensional underlying Brownian motion induces a significant numerical challenge as the number of elements in truncated linear functionals increases exponentially with the dimension.
    \end{remark}

\subsection{Leverage effect}

    When $\Sigma$ is a semimartingale, the \textit{leverage coefficient} in the model \eqref{eq:sigmodel1}-\eqref{eq:sigmodel2} is defined as the instantaneous correlation between the log-price $S$ and its instantaneous volatility $|\Sigma|$, with 
    \begin{align*}
        \frac{\d [\log S, |\Sigma|]_t}{\sqrt{\d [\log S]_t} \sqrt{\d [|\Sigma|]_t}},
    \end{align*}
    
    In practice, the leverage coefficient is negative on equity markets and one usually would like to control its sign and intensity via the correlation parameter $\rho$ between $B$ and $W$ in \eqref{eq:BM}. Since $\Sigma$ is not necessarily positive at all time, the leverage coefficient can flip sign in the model \eqref{eq:sigmodel1}-\eqref{eq:sigmodel2}, which is not realistic. The next Lemma provides a necessary and sufficient condition on the coefficients $\bsigma$ to control the sign of the leverage coefficient and its driving intensity $\rho$ at all time.
    
    \begin{lemma} \label{lem:leveff}
        Assume $\bsigma \in \I'$, then the covariation between the log-price and the instantaneous volatility is proportional to $\rho$, and of the same sign if $\bracketsig{\bsigma_t \proj{2}} \geq 0$, i.e.
        \begin{align} \label{eq:leveff}
            \frac{\d}{\d t} [\log S, |\Sigma|]_t = \rho \, \bracketsig{\bsigma_t \proj{2}} |\Sigma_t|.
        \end{align} 
    \end{lemma}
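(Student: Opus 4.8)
The plan is to isolate the continuous local-martingale parts of $\log S$ and of $|\Sigma|$, since only these contribute to the quadratic covariation, and then read off the instantaneous bracket. From the explicit solution of \eqref{eq:sigmodel1} one has $\d \log S_t = -\tfrac12 \Sigma_t^2 \,\d t + \Sigma_t\,\d B_t$, so the continuous martingale part of $\log S$ is driven by $\Sigma_t\,\d B_t$. Since $\bsigma \in \I'$, Lemma~\ref{lem:sig-ito} applies to $\Sigma_t = \bracketsig{\bsigma_t}$ and gives the Itô decomposition
\begin{align*}
\d \Sigma_t = \bracketsig{\dot{\bsigma}_t + \bsigma_t\proj{1} + \tfrac12 \bsigma_t\proj{22}}\,\d t + \bracketsig{\bsigma_t\proj{2}}\,\d W_t,
\end{align*}
so that $\Sigma$ is a continuous semimartingale whose diffusion coefficient against $W$ is $\bracketsig{\bsigma_t\proj{2}}$.

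The only genuine subtlety is that $x \mapsto |x|$ is not $C^2$, so in place of the plain Itô formula I would invoke Tanaka's formula,
\begin{align*}
|\Sigma_t| = |\Sigma_0| + \int_0^t \sign{\Sigma_s}\,\d \Sigma_s + L_t^0(\Sigma),
\end{align*}
where $L^0(\Sigma)$ denotes the local time of $\Sigma$ at $0$. The key point is that $L^0(\Sigma)$ is continuous and non-decreasing, hence of finite variation, so it contributes nothing to any quadratic covariation; the same holds for the drift part of $\int_0^\cdot \sign{\Sigma_s}\,\d\Sigma_s$. Consequently the continuous martingale part of $|\Sigma|$ is $\int_0^\cdot \sign{\Sigma_s}\,\bracketsig{\bsigma_s\proj{2}}\,\d W_s$.

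It then remains to combine the two martingale parts using bilinearity of the covariation together with $\d[B,W]_t = \rho\,\d t$, which follows from \eqref{eq:BM} and the independence of $W$ and $W^\perp$. Using the elementary identity $\Sigma_t\,\sign{\Sigma_t} = |\Sigma_t|$, this yields
\begin{align*}
\d[\log S, |\Sigma|]_t = \Sigma_t\,\sign{\Sigma_t}\,\bracketsig{\bsigma_t\proj{2}}\,\d[B,W]_t = \rho\,\bracketsig{\bsigma_t\proj{2}}\,|\Sigma_t|\,\d t,
\end{align*}
which is exactly \eqref{eq:leveff}. Since $|\Sigma_t| \geq 0$, the covariation is proportional to $\rho$ and shares its sign precisely when $\bracketsig{\bsigma_t\proj{2}} \geq 0$, as claimed. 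I expect the handling of the local-time term in Tanaka's formula to be the only step requiring care, though it is mild since local time never enters a covariation; everything else is a direct application of the Itô decomposition from Lemma~\ref{lem:sig-ito} and the correlation structure \eqref{eq:BM}.
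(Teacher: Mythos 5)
Your proof is correct and follows essentially the same route as the paper: the dynamics of $\log S$, the Itô decomposition of $\Sigma$ via Lemma~\ref{lem:sig-ito}, and the Itô--Tanaka formula for $|\Sigma|$ with the observation that the local time and drift terms do not contribute to the covariation. You merely spell out the final combination step (bilinearity, $\d[B,W]_t = \rho\,\d t$, and $\Sigma_t \,\textnormal{sign}(\Sigma_t) = |\Sigma_t|$) that the paper leaves as ``trivially follows.''
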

    
    \begin{proof} 
        The dynamics of $\log S$ are in the form
        \begin{align} \label{eq:shufflelogS}
            \d \log S_t = -\frac{1}{2} \Sigma_t^2 \d t + \Sigma_t \d B_t = -\frac{1}{2} \bracketsig{\bsigma_t \shupow{2}} \d t + \bracketsig{\bsigma_t} \d B_t.
        \end{align}
        The instantaneous volatility of the model is $|\Sigma|$. By an application of Itô-Tanaka's formula on $|\Sigma|$ and of Lemma~\ref{lem:sig-ito}, we have
        \begin{align*}
            \d |\Sigma|_t &= \sign{\Sigma_t} \d \Sigma_t + \d L^0(t)
            \\ &
            = \sign{\Sigma_t} \bracketsig{\bsigma_t \proj{2}} \d W_t + \sign{\Sigma_t} \bracketsig{\bsigma_t \proj{1} + \tfrac{1}{2} \bsigma_t \proj{22} + \dot{\bsigma}_t} \d t + \d L^0(t),
        \end{align*}
        where $L^0$ is the local time of $\Sigma$ at 0. \eqref{eq:leveff} then trivially follows.
    \end{proof}

    Moreover controlling the sign of $\bracketsig{\bsigma_t \proj{2}}$ can be made explicit in the Stein-Stein, Quintic and Bergomi models as shown in the next example.  
    \begin{sqexample}
        ~
        \begin{itemize}
            \item
            For the Stein-Stein model driven by a Brownian motion, $\bsigma \proj{2} = \eta \emptyword$, see \eqref{eq:linear-OU}, and $\bracketsig{\bsigma \proj{2}} = \eta \geq 0$, 
            
            \item
            For the Quintic  model of \cite{quintic} constructed on a Brownian motion, $\bsigma = \alpha_0 \emptyword + \alpha_1 \word{2} + \alpha_3 \word{222} + \alpha_5 \word{22222}$ so that $\bsigma \proj{2} = \alpha_1 \emptyword + \alpha_3 \word{22} + \alpha_5 \word{2222}$, which gives
            $$ \bracketsig{\bsigma \proj{2}} = \alpha_1 + \frac{\alpha_3}{2} W^2_t + \frac{\alpha_5}{4!} W^4_t \geq 0, \quad t \geq 0, $$
            as long as $\alpha_1, \alpha_3, \alpha_5$ are non negative,
            
            \item 
            For the Bergomi model, see \cite{dupire1993, BergomiSmileII}, $\bsigma_t = \xi_0(t) \shuexp{\eta \word{2}}$ so that $\bsigma_t \proj{2} = \eta \xi_0(t) \shuexp{\eta \word{2}}$, which gives
            $$ \bracketsig{\bsigma_t \proj{2}} = \eta \xi_0(t) e^{\eta W_t} \geq 0, $$
            as long as $\xi_0(t), \eta$ are non negative, for all $ t \leq T$.
        \end{itemize}
    \end{sqexample}

\subsection{Approximated representations and comparison with exact representations} \label{sssec:regress}

    More generally, if exact linear representations are not available for certain processes, approximate representations can be obtained thanks to the universal approximation property \cite{fermanian2021embedding} of path-signatures. Informally any continuous functional $F$ can be approximated by a linear combination of signature elements
    $$ F(t, (W_s)_{0 \leq s \leq t}) \approx \bracketsig{\bm{f}},$$
    for $\bm{f} \in \tTA{M}$ and a given truncation order $M$, see \cite[Theorem 1]{kiraly2019kernels} for more details. \\

    In practice, one would perform a linear regression on trajectories of a given process $X$ against a finite linear combination of the signature elements. Such approach was used in \cite[Section 4.1.2]{cuchiero2022theocalib} to regress trajectories of the price process in the Heston model and the volatility process in a SABR-type model. In the same spirit, \citet[Section 4.3]{arribas-nonparam} regressed certain option payoffs on the signature of price trajectories to get an approximation. \\
    
    We are going to illustrate this regression approach on trajectories of an inverse CIR process and fractional Brownian motions and show some of its limitations. We will also compare it with a couple of exact representations from the previous Section.
    
    Let us now explicit the algorithm used to get the regress linear representation:
    
    \begin{algorithm}[H] \label{alg:regression}
        \caption{Regression against truncated signature}
        Assume the spot volatility is of the following form
        $$ X_t = F(t, (W_s)_{s \leq t}). $$
        
        \textbf{Input}:
        \begin{itemize}
            \item Fix $J > 0$ the number of points in the discretization of $[0, T]$,
            \item Fix $N > 0$ the number of realisations of a Brownian motion trajectory,
            \item Fix $M \geq 0$ the truncation order of $\bm{f}$,
            \item Fix $\beta_1$ and $\beta_2$ the $L^1$ and $L^2$ regularization parameters.
        \end{itemize}
        
        \textbf{Online}:
        \begin{enumerate}
            \item Generate $N$ realizations of the Brownian motion $W$, denoted by $W^{(1)}, \ldots, W^{(N)}$,
            \item For each realization $n = 1, \ldots, N$, compute $X^{(n)}$ and the truncated signature $\sig^{(n), \leq M}$ up to order $M$ for $t \in [0, T]$,
            \item
            Regress $(X^{(n)})_{1 \leq n \leq N}$ against $(\sig[~]^{(n), \leq M})_{1 \leq n \leq N}$ to learn the coefficients of $\bm{f} \in \tTA{M}$ that minimize
            $$ \mathcal{L} = \frac{1}{N} \frac{1}{J} \sum_{n=1}^N \sum_{j=1}^J \left| X^{(n)}_{t_j} - \left \langle \bm{f}, \sig[t_j]^{(n), \leq M} \right \rangle \right|^2 + \sum_{k=0}^M \sum_{\word{v} \in V_k} \left( \beta_1 \left| \bell^{\word{i_1 \cdots i_k}} \right| + \beta_2 \left( \bell^{\word{i_1 \cdots i_k}} \right)^2 \right). $$
        \end{enumerate}
    \end{algorithm}
    Note that the $L^1$ regularization serves the purpose of sparsifying the representation, which might be of interest. For example most elements of $\bell^\textnormal{OU}$ and some of $\bell^\textnormal{CIR}$ and $\bell^\textnormal{DE}$ are 0, see Examples~\ref{ex:elements_OU}, \ref{ex:elements_CIR} and \ref{ex:elements_DE}.
    
    \begin{figure}[H]
        \centering
        \subfloat[\centering $\kappa=1, \theta=0.5, \eta=1$.]{{\includegraphics[width=\twoplotswidth]{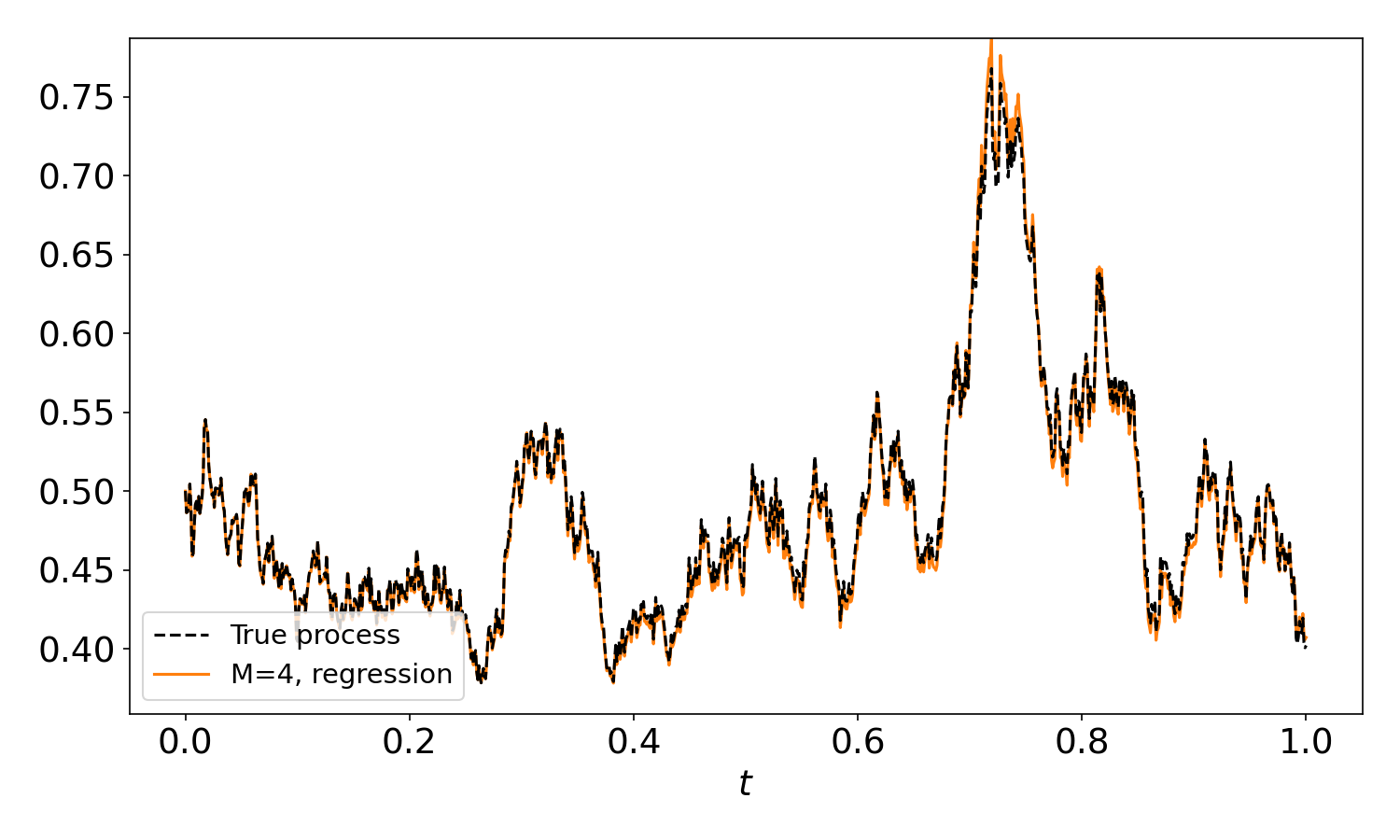}}}
        \quad
        \subfloat[\centering $\kappa=4, \theta=0.5, \eta=2$.]{{\includegraphics[width=\twoplotswidth]{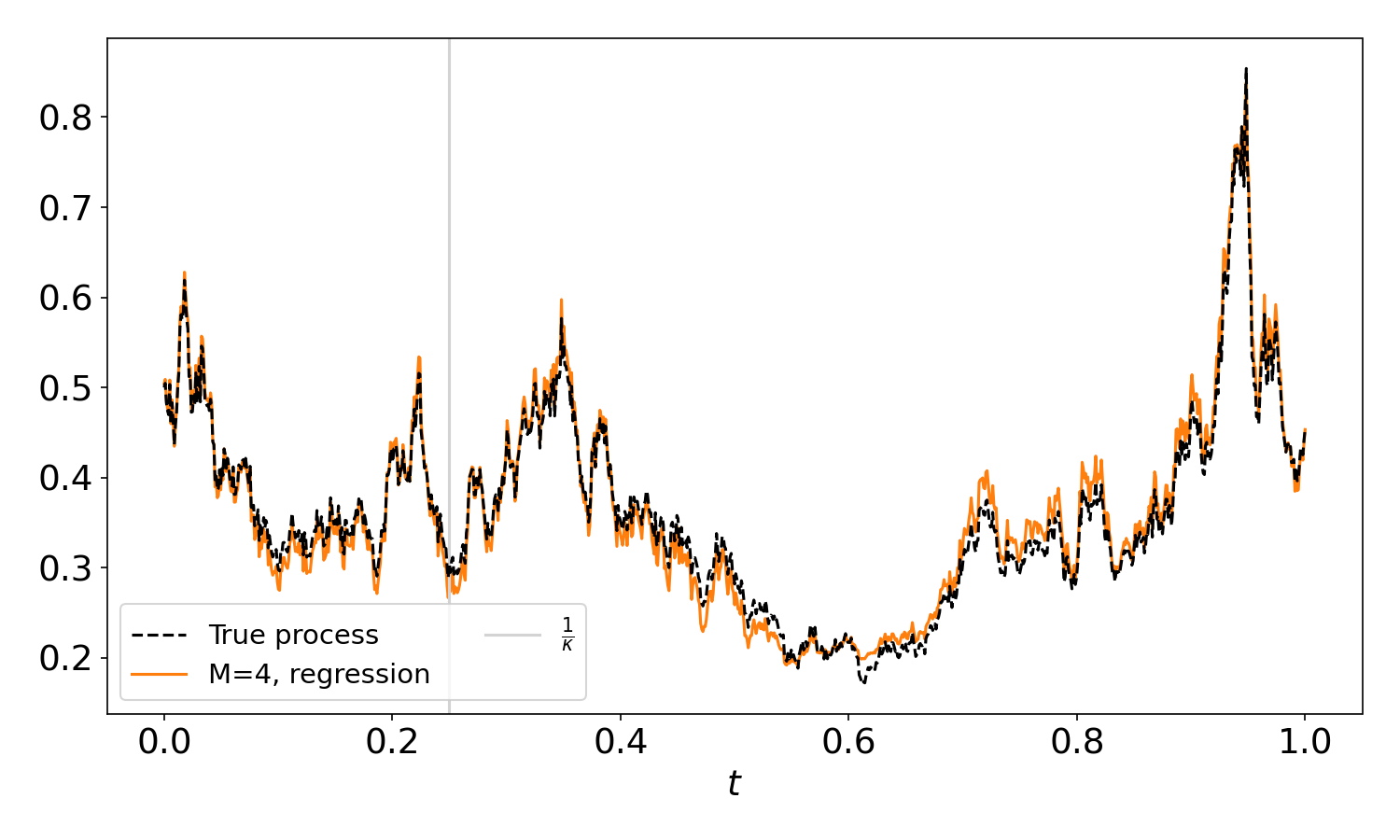}}}
        \caption{Trajectories of an inverse CIR process against their linear regression for $M=4$.}
        \label{fig:calib-32}
    \end{figure}
    
    We recall the dynamics of the inverse CIR process, being the variance in the 3/2 model \cite{32model}, driven by
    $$ d V_t = \kappa V_t (\theta - V_t) \d t + \eta V_t^{\frac{3}{2}} \d W_t. $$
    
    We see in Figure \ref{fig:calib-32} that a process without known linear representation can still be approximated quite well through a linear regression.

    \begin{figure}[H]
        \centering
        \subfloat[\centering $H=0.1$.]{{\includegraphics[width=\twoplotswidth]{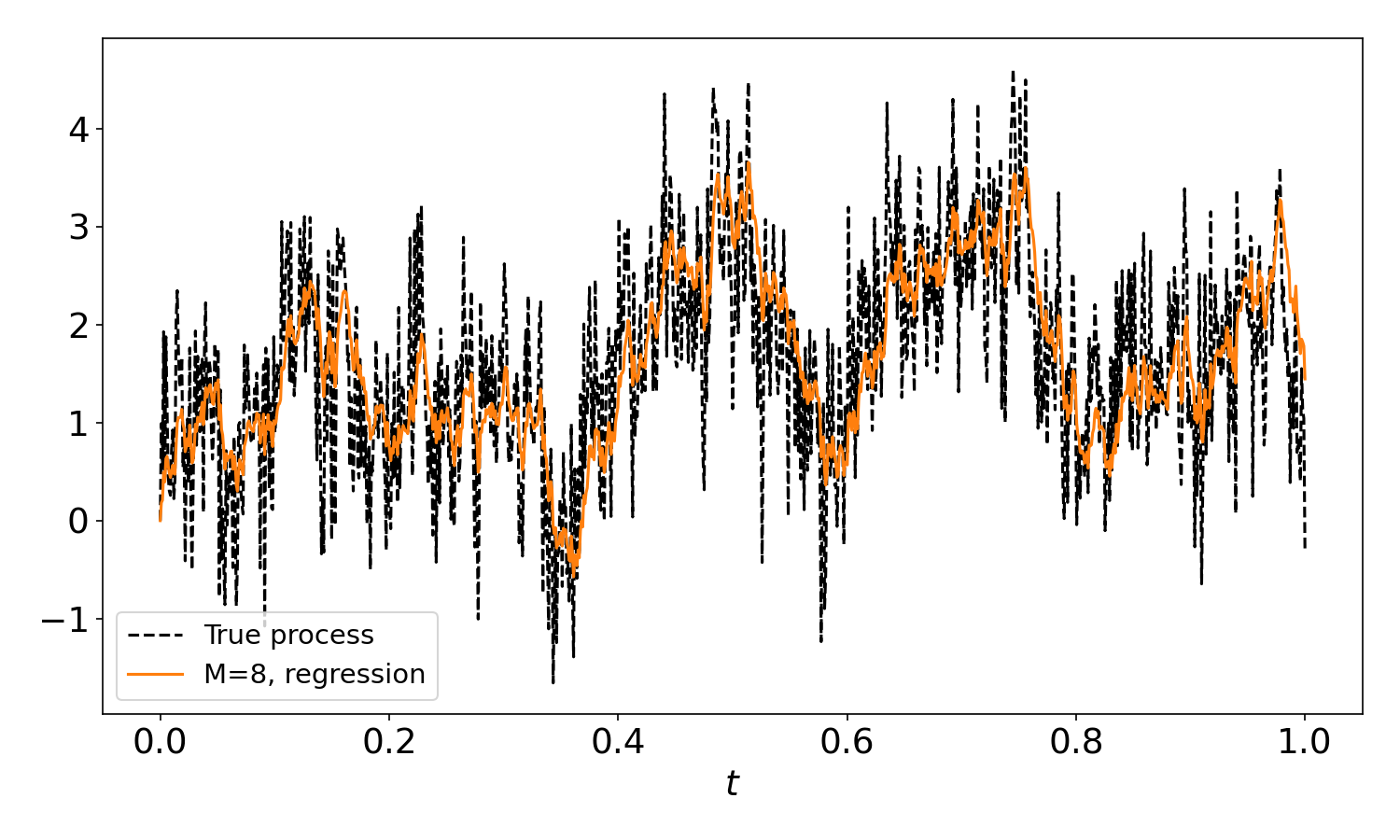}}}
        \quad
        \subfloat[\centering $H=0.3$.]{{\includegraphics[width=\twoplotswidth]{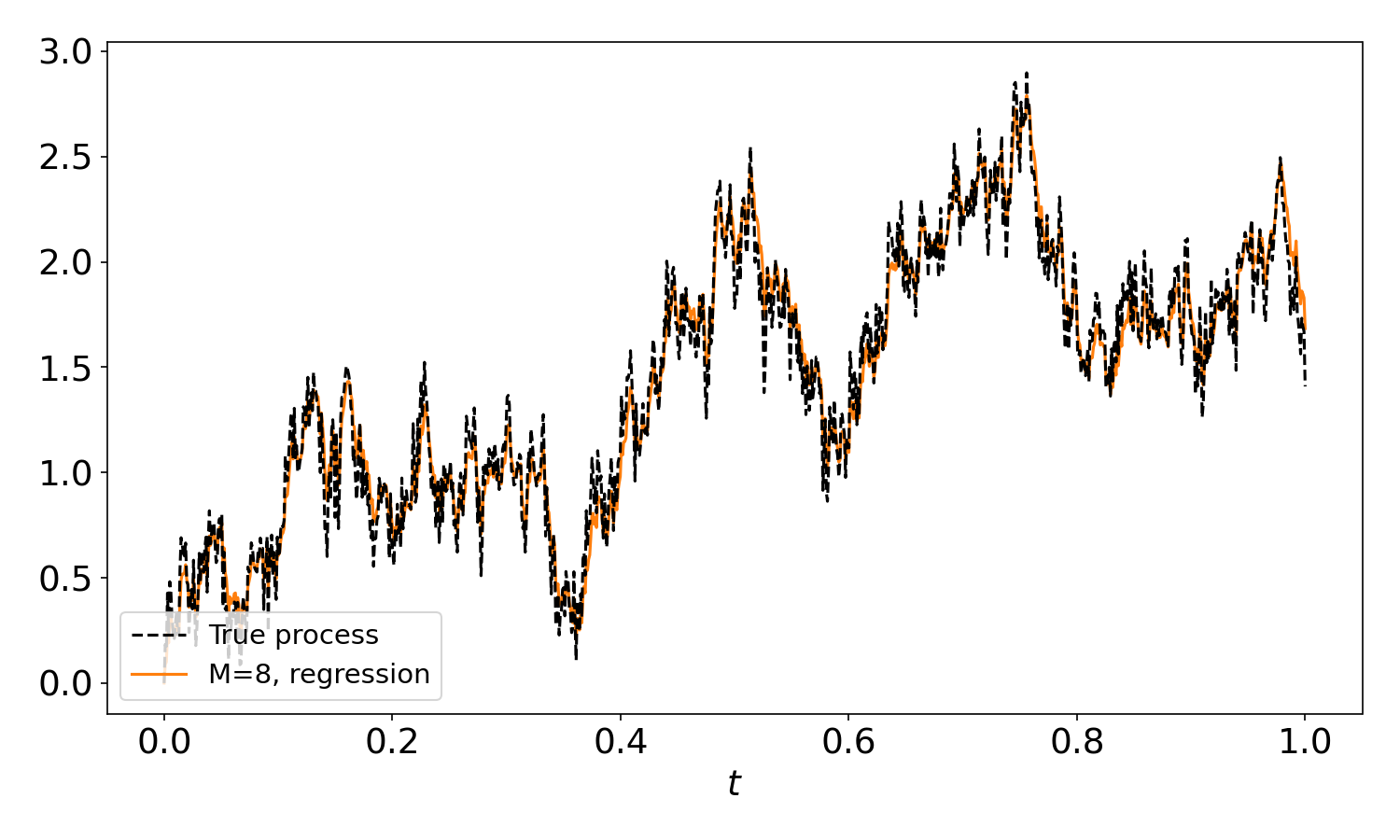}}}
        \caption{Trajectories of a fractional Brownian motion against their linear regression for $M=8$.}
        \label{fig:calib-rough}
    \end{figure}
    
    However, as shown on Figure \ref{fig:calib-rough}, the regression approach does not always work with small truncation levels. For instance, the ability to capture high roughness in a linear functional is not trivial to achieve whilst keeping the natural embedding $\widehat{W}_t = (t, W_t)$. \\

    We end this section with a brief comparison of approximate vs. exact representations. 
    As seen in the previous subsections, the convergence of the linear representation is quick for short horizons. However, when the horizon gets too large, relatively to the parameters of the represented models, e.g. the mean reversion rate, the truncated representations drastically deteriorates, see Figures \ref{fig:ex-traj-OU} to \ref{fig:ex-traj-DE}. Yet, as seen previously, linear regressions make quite stable representations over their training horizon and can thus be made over targeted horizons to control the stability. In Figure~\ref{fig:calib-OU}~(a) we can remark that the linear regression doesn't fit as well as the linear representation for short horizons (relative to $\kappa$), but does get a better fit when the linear representation starts loosing its stability. However, the stability of the regression up to the training horizon could be interpreted as overfitting, as illustrated in Figure~\ref{fig:calib-OU}~(b), where the regression too looses its stability when tested beyond its training horizon.
    
    \begin{figure}[H]
        \centering
        \subfloat[\centering $T=0.5$.]{{\includegraphics[width=\twoplotswidth]{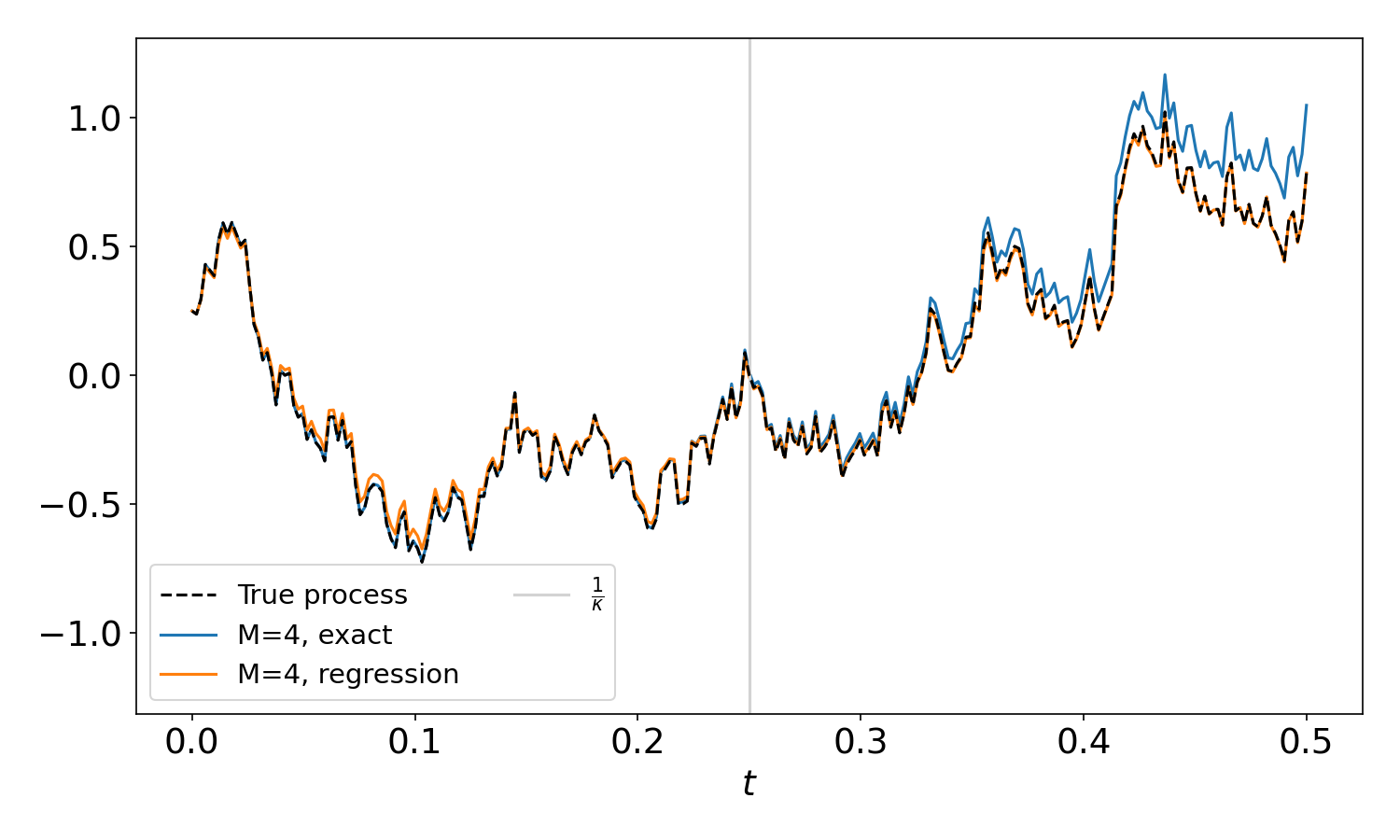}}}
        \quad
        \subfloat[\centering $T=2$.]{{\includegraphics[width=\twoplotswidth]{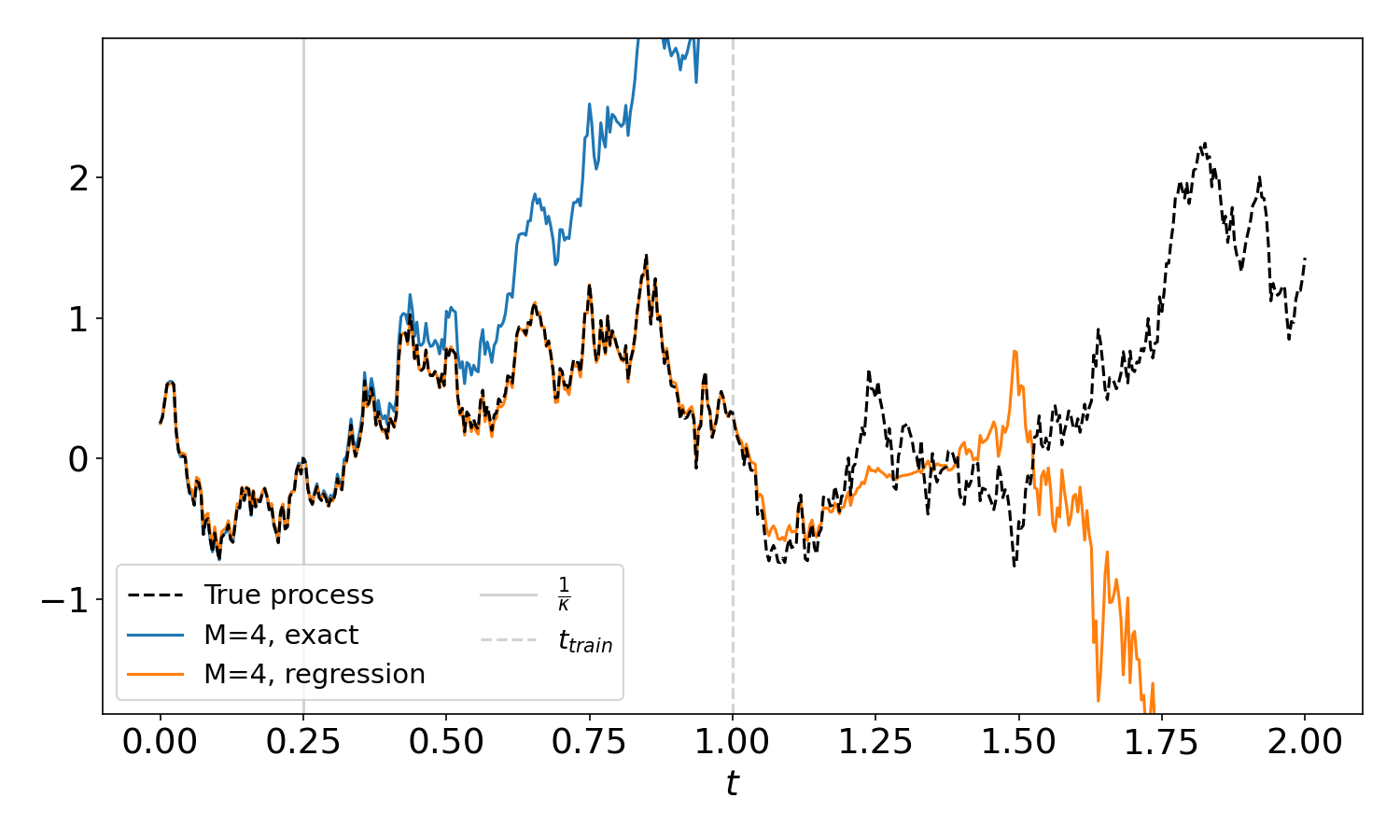}}}
        \caption{Trajectories of an Ornstein-Uhlenbeck process against their linear representation and linear regression. $\kappa=4, \theta=0.25, \eta=2$}
        \label{fig:calib-OU}
    \end{figure}
    
    Table~\ref{tab:calib-mse-OU} below displays a more complete comparison of both regression and exact representation. It displays the mean squared error between explicit simulations of an Ornstein-Uhlenbeck process and their linear representations (exact truncated and regressed against truncated signature) where $\kappa = 4, \theta = 0.25, \eta = 2$ and $v = 0.25$. The number of simulations and training horizon have been set to 100,000 and 1 year with 252 time steps respectively.
    
    \begin{table}[!ht]
        \begin{center}
            \begin{tabular}{|c|c|c|c|c|c|c|}
                \hline
                \multicolumn{2}{|c|}{\multirow{2}{*}{MSE}}
                & \multicolumn{5}{|c|}{Test horizon} \\
                \cline{3-7}
                \multicolumn{2}{|c|}{}
                & 3 months & 6 months & 1 year & 2 years & 4 years \\
                \hline
                \multirow{2}{*}{$M=2$}
                & Exact      & \textbf{3.065e$-$05} & \textbf{5.370e$-$03} & 1.969e$+$00 & 2.536e$+$01 & 2.625e$+$02 \\ 
                & Regression & 6.149e$-$03 & 8.564e$-$03 & \textbf{2.405e$-$02} & \textbf{6.239e$-$01} & \textbf{1.189e$+$01} \\
                \hline
                \multirow{2}{*}{$M=4$}
                & Exact      & \textbf{6.541e$-$08} & \textbf{1.736e$-$05} & 1.718e$+$00 & 4.243e$+$02 & 8.177e$+$04 \\
                & Regression & 8.002e$-$06 & 3.801e$-$05 & \textbf{1.079e$-$04} & \textbf{1.516e$+$00} & \textbf{1.213e$+$03} \\
                \hline
                \multirow{2}{*}{$M=6$}
                & Exact      & \textbf{5.908e$-$08} & \textbf{1.167e$-$06} & 3.173e$-$01 & 1.410e$+$03 & 4.872e$+$06 \\
                & Regression & 5.735e$-$06 & 8.426e$-$06 & \textbf{1.596e$-$07} & \textbf{1.066e$+$00} & \textbf{3.188e$+$04} \\
                \hline
            \end{tabular}
        \end{center}
        \caption{Mean-squared-error between simulations of an Ornstein-Uhlenbeck process against its linear representation and linear regression when trained over only 1 year. $\kappa = 4, \theta = 0.25, \eta = 2$.}
        \label{tab:calib-mse-OU}
    \end{table}
    
    We can see that the regression is quite good for horizons up to 1 year, but as soon as the test horizon grows too large compared to the train horizon(s), the regression doesn't fit well anymore and becomes explosive, just similar to the truncated exact representation.

\section{The joint characteristic functional} \label{sec:charfun}

    The following theorem provides the joint conditional characteristic functional of the log-price $\log S$ and the integrated variance $\bar{V} := \int_0^{\cdot} \Sigma^2_s \d s$ in the model \eqref{eq:sigmodel1}--\eqref{eq:sigmodel2} in terms of a solution to an infinite-dimensional system of time-dependent $\eTA[2]$-valued Riccati differential equations. 
    
    \begin{theorem} \label{theo:charfun}
        Let $f, g: [0, T] \to \C$ be measurable and bounded functions and $\bsigma: [0, T] \to \A$. Assume that there exists $\bpsi \in \I'$, solution to the following infinite-dimensional system of time-dependent Riccati equations
        \begin{align} \label{eq:Ric}
            -\dot{\bpsi}_t &
            = \frac{1}{2} (\bpsi_t \proj{2}) \shupow{2} + \rho f(t) (\bsigma_t \shuprod \bpsi_t \proj{2}) + \frac{1}{2} \bpsi_t \proj{22} + \bpsi_t \proj{1} + \left( \frac{f(t)^2 - f(t)}{2} + g(t) \right) \bsigma_t \shupow{2},
            \\ \bpsi_T &
            = 0.
        \end{align}
    
        Define the processes
        \begin{align} \label{eq:U}
            U_t &
            = \bracketsig{\bpsi_t} + \int_0^t f(s) \d \log S_s + \int_0^t g(s) \d \bar{V}_s,\\
            M_t &
            = e^{U_t}. \label{eq:M}
        \end{align}
        Then $M$ is a local martingale. If in addition it is a true martingale, then the following expression holds for the joint characteristic functional of $(\log S, \bar{V})$:
        \begin{equation} \label{eq:charfun}
             \E \left[ \left. \exp \left( \int_t^T f(s) \d \log S_s + \int_t^T g(s) \d \bar{V}_s \right) \right| \mathcal{F}_t \right]
             = \exp \left( \bracketsig{\bpsi_t} \right), \quad t \leq T.
        \end{equation}
    \end{theorem}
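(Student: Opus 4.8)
The plan is to prove the local-martingale property of $M$ through a direct Itô computation, checking that the drift of $M$ vanishes precisely because $\bpsi$ solves the Riccati system \eqref{eq:Ric}, and then to read off \eqref{eq:charfun} by conditioning and using the terminal condition $\bpsi_T = 0$.

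First I would compute the dynamics of $U$, treating its three contributions separately. Since $\bpsi \in \I'$, the time-dependent Itô formula \eqref{eq:ItohatW} of Lemma~\ref{lem:sig-ito} applies to the signature part and gives
$$ \d \bracketsig{\bpsi_t} = \bracketsig{\dot{\bpsi}_t + \bpsi_t \proj{1} + \tfrac{1}{2} \bpsi_t \proj{22}} \, \d t + \bracketsig{\bpsi_t \proj{2}} \, \d W_t. $$
For the two integral terms I would substitute $\d \log S_t = -\tfrac{1}{2} \bracketsig{\bsigma_t \shupow{2}} \, \d t + \bracketsig{\bsigma_t} \, \d B_t$ (recall \eqref{eq:shufflelogS}) and $\d \bar{V}_t = \bracketsig{\bsigma_t \shupow{2}} \, \d t$, and expand $\d B_t = \rho \, \d W_t + \sqrt{1-\rho^2}\, \d W_t^\perp$. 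Collecting the diffusion terms, $U$ is an Itô process whose $\d W_t$-coefficient is $\bracketsig{\bpsi_t \proj{2}} + \rho f(t) \bracketsig{\bsigma_t}$ and whose $\d W_t^\perp$-coefficient is $\sqrt{1-\rho^2}\, f(t) \bracketsig{\bsigma_t}$.

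Next I would apply Itô's formula to $M_t = e^{U_t}$, so that $\d M_t / M_t = \d U_t + \tfrac{1}{2}\, \d[U]_t$. Since $W$ and $W^\perp$ are independent, $\d[U]_t$ is the sum of the squares of the two diffusion coefficients; the $\rho^2$ and $(1-\rho^2)$ contributions combine to give $f(t)^2 \bracketsig{\bsigma_t}^2$, leaving a cross term $2\rho f(t) \bracketsig{\bpsi_t \proj{2}} \bracketsig{\bsigma_t}$. Invoking Proposition~\ref{prop:shufflepropertyextended} to linearize products — namely $\bracketsig{\bpsi_t \proj{2}}^2 = \bracketsig{(\bpsi_t \proj{2})\shupow{2}}$, $\bracketsig{\bpsi_t \proj{2}}\bracketsig{\bsigma_t} = \bracketsig{\bsigma_t \shuprod \bpsi_t \proj{2}}$ and $\bracketsig{\bsigma_t}^2 = \bracketsig{\bsigma_t \shupow{2}}$ — the entire drift of $\d M_t/M_t$ collapses into a single bracket $\bracketsig{\cdot}$ whose argument is
$$ \dot{\bpsi}_t + \bpsi_t \proj{1} + \tfrac{1}{2} \bpsi_t \proj{22} + \tfrac{1}{2}(\bpsi_t \proj{2})\shupow{2} + \rho f(t)(\bsigma_t \shuprod \bpsi_t \proj{2}) + \left( \tfrac{f(t)^2 - f(t)}{2} + g(t) \right) \bsigma_t \shupow{2}. $$
By \eqref{eq:Ric} this argument is identically zero, so the drift vanishes and $M$ is a local martingale.

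Finally, under the extra hypothesis that $M$ is a true martingale, I would write $M_t = \E[M_T \mid \F_t]$; the terminal condition $\bpsi_T = 0$ gives $\bracketsig[T]{\bpsi_T} = 0$, whence $M_T = \exp\left( \int_0^T f(s)\, \d \log S_s + \int_0^T g(s)\, \d \bar{V}_s \right)$, and dividing both sides by the $\F_t$-measurable factor $\exp\left( \int_0^t f(s)\, \d \log S_s + \int_0^t g(s)\, \d \bar{V}_s \right)$ yields \eqref{eq:charfun}. I expect the genuine obstacle to be not the algebra but the justification that every bracket in the computation is well-defined and that the shuffle-linearization identities are admissible: one must verify that $(\bpsi_t \proj{2})\shupow{2}$, $\bsigma_t \shuprod \bpsi_t \proj{2}$ and $\bsigma_t \shupow{2}$ all lie in $\A$, which is exactly what the hypotheses $\bsigma_t \in \A$ and $\bpsi \in \I'$ secure through Proposition~\ref{prop:shufflepropertyextended} and Lemma~\ref{lem:sig-ito}; the $\d t$-bookkeeping is routine once organized so that the Riccati right-hand side is matched term by term.
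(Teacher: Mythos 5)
Your proposal is correct and follows essentially the same route as the paper's proof: apply Lemma~\ref{lem:sig-ito} to $\bracketsig{\bpsi_t}$, substitute the dynamics \eqref{eq:shufflelogS} of $\log S$ and $\d\bar{V}_t = \bracketsig{\bsigma_t \shupow{2}}\,\d t$, compute $\d[U]_t$ via Proposition~\ref{prop:shufflepropertyextended}, and observe that the drift of $\d M_t / M_t$ is exactly the bracket of the Riccati right-hand side plus $\dot{\bpsi}_t$, hence zero. The only cosmetic difference is that you expand $\d B_t = \rho\,\d W_t + \sqrt{1-\rho^2}\,\d W_t^\perp$ and sum squared diffusion coefficients, whereas the paper keeps $B$ and uses the correlation directly; the resulting quadratic variation and conclusion are identical.
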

    
    \begin{proof}
        To show that $M$ is a local martingale we show that its part in $\d t$ vanishes using Itô's formula. The dynamics of $M$ read
        \begin{align} \label{eq:dM}
            \d M_t = M_t \left( \d U_t + \frac{1}{2} \d \left[ U \right]_t \right).
        \end{align}
        We start by deriving the dynamics of $U$ in \eqref{eq:U}. By using Lemma \ref{lem:sig-ito}, since $\bpsi\in \I'$ by assumption, we have that $\left( \bracketsig{\bpsi_t} \right)_{t \geq 0}$ is a semimartingale with dynamics
        \begin{align*}
            \d \bracketsig{\bpsi_t} &
            = \bracketsig{\dot{\bpsi}_t + \bpsi_t \proj{1} + \tfrac{1}{2} \bpsi_t \proj{22}} \d t + \bracketsig{\bpsi_t \proj{2}} \d W_t.
        \end{align*}
        Combining the previous identity with $d \bar{V}_t = \langle \bsigma_t \shupow{2}, \sig \rangle \d t$, recall \eqref{eq:shufflesigma2}, and the dynamics of $\log S$ in \eqref{eq:shufflelogS}, we obtain that
        \begin{align*}
             \d U_t &
             = \d \bracketsig{\bpsi_t} + f(t) \d \log S_t + g(t) \d \bar{V}_t
             \\ &
             = \bracketsig{\dot{\bpsi}_t + \bpsi_t \proj{1} + \tfrac{1}{2} \bpsi_t \proj{22}} \d t + \bracketsig{\bpsi_t \proj{2}} \d W_t
             + \left( g(t) - \tfrac{1}{2} f(t) \right) \bracketsig{\bsigma_t \shupow{2}} \d t + f(t) \bracketsig{\bsigma_t} \d B_t
             \\ &
             = \bracketsig{\dot{\bpsi}_t + \bpsi_t \proj{1} + \tfrac{1}{2} \bpsi_t \proj{22} + \left( g(t) - \tfrac{1}{2} f(t) \right) \bsigma_t \shupow{2}} \d t 
             + \bracketsig{\bpsi_t \proj{2}} \d W_t + f(t) \bracketsig{\bsigma_t} \d B_t. 
        \end{align*}
        Using the shuffle product of Proposition~\ref{prop:shufflepropertyextended} and the fact that $B$ and $W$ are correlated, recall \eqref{eq:BM}, we get that the quadratic variation of $U$ is
        \begin{align*}
            \d \left[ U \right]_t &
            = \bracketsig{(\bpsi_t \proj{2}) \shupow{2} + 2 \rho f(t) (\bpsi_t \proj{2} \shuprod \bsigma_t) + f(t)^2 \bsigma_t \shupow{2}} \d t.
        \end{align*}
        This yields that the drift of $dM_t/M_t$ in \eqref{eq:dM} is given by 
        \begin{align*}
           & \left \langle \dot{\bpsi}_t + \bpsi_t \proj{1} + \tfrac{1}{2} \bpsi_t \proj{22} + \left( g(t) - \tfrac{1}{2} f(t) \right) \bsigma_t \shupow{2} + \tfrac{1}{2} \left[ (\bpsi_t \proj{2}) \shupow{2} + 2 \rho f(t) (\bpsi_t \proj{2} \shuprod \bsigma_t) + f(t)^2 \bsigma_t \shupow{2} \right], \sig \right \rangle,
        \end{align*}
        which is equal to $0$ from the Riccati equations \eqref{eq:Ric}. This shows that $M$ is a local martingale. By assumption $M$ is even a true martingale. After observing that the terminal value of $M$, is given by
        $$ M_T = \exp \left( \int_0^T f(s) \d \log S_s + \int_0^T g(s) \d \bar{V}_s\right), $$
        recall that $\bpsi_T=0$, we obtain 
        \begin{align*}
            \E \left[ \exp \left( \int_0^T f(s) \d \log S_s + \int_0^T g(s) \d \bar{V}_s \right) \bigg| \mathcal F_t\right] = \E \left[ M_T \big| \mathcal F_t\right] = M_t = \exp \left( U_t \right),
        \end{align*}
        which yields \eqref{eq:charfun}. 
    \end{proof}

    \begin{sqremark}
        If $\bpsi$ is such that $\Re \bracketsig{\bpsi_t} \leq 0$ for all $t \in [0, T]$ a.s., $M$ would be a martingale. We expect this relation to hold whenever $\Re f(t) \in [0, 1]$ and $\Re g(t) \leq 0$. 
    \end{sqremark}

    Theorem~\ref{theo:charfun} is a verification result to obtain the exponentially affine representation of the joint characteristic functional \eqref{eq:charfun}. It disentangles the algebraic affine structure in infinite dimension. It can be related to \cite[Theorem 4.24]{cuchiero2023polynomial} if one considers the signature of the three dimensional process $((t, W_t, B_t))_{t \geq 0}$ there. It relies on the two crucial assumptions that a well-behaved $\I$-valued solution $\bpsi$ exists to the Riccati equation \eqref{eq:Ric}, together with the true martingality of $M$. These assumptions seem very intricate to prove, even in one dimensional settings, partial results in these directions for $T((\R))$-valued Riccati equations can be found in \cite[Section 6]{cuchiero2023polynomial} and \cite{abishaun-polyou}. We note that no semimartingality assumption for $\Sigma_t = \langle \bsigma_t, \sig, \rangle$ is required in Theorem~\ref{theo:charfun}. \\

    In the following two sections, we will validate the representation \eqref{eq:charfun} numerically and we will highlight the application of Theorem~\ref{theo:charfun} to the pricing and the hedging of several contingent claims by Fourier inversion techniques. The general functions $f$ and $g$ allow for enough flexibility to cover a broad set of contingent claims. For instance:
    
    \begin{enumerate}
        \item Certain vanilla options that depend on the values $(S_T, \bar{V}_T)$, like European call and put options and volatility swaps, are recovered using constant $f$ and $g$.
        
        \item Geometric Asian options on the average $\frac{1}{T} \int_0^T \log S_s \d s$ can be recovered by setting $f(s) := iu \frac{T-s}{T}$, since 
        \begin{align}
            \int_t^T f(s) \d \log S_s &
            = iu \int_t^T \frac{T-s}{T} \d \log S_s
            = iu \frac{1}{T} \int_t^T \log S_s \d s - iu \frac{T-t}{T} \log S_t.
        \end{align}
    \end{enumerate}

\section{Pricing by Fourier methods} \label{sec:pricing}

    In this section, we show how Theorem~\ref{theo:charfun} can be applied to price European options as well as $q$-Volatility swaps using Fourier inversion techniques in our signature volatility model \eqref{eq:sigmodel1}-\eqref{eq:sigmodel2}. The study of Asian options has been included for the interested reader in Appendix~\ref{apn:asian}. All of our numerical results validate the exponentially affine representation \eqref{eq:charfun}. We also provide a calibration to market volatility surface for the S\&P 500. \\
    
    For the numerical implementation, we consider a truncated version of $S$ from \eqref{eq:sigmodel1}-\eqref{eq:sigmodel2}, denoted by $S^{\leq M}$ and defined by
    $$ \d S_t^{\leq M} = S_t^{\leq M} \bracketsig{\bsigma_t^{\leq M}} \d B_t, $$
    where $\bsigma_t^{\leq M} \in \tTA[2]{M}$ has its first $M$ levels coincide with $\bsigma_t$ and is $0$ elsewhere. Finally, in order to ease notations in the sequel we assume $S_0=1$ (recall that the short rate here is assumed to be $0$). Error estimates for truncated representations are discussed in Proposition 3.6 (\textbf{iii}) of \cite{linearfbm}.

\subsection{European options}

    Let us consider a European call option on $S$ with maturity $T>0$ and strike $K>0$. Its price at time $t \leq T$ is given by $C_t(T, K) = \E [(S_T - K)^+ | \F_t]$. From \citet*{fourierlewis}, one can price this option using the Fourier inversion formula
    \begin{equation} \label{eq:lewis}
        C_t(S_t; T, K) = S_t - \frac{K}{\pi} \int_0^\infty \Re \left[ e^{i (u - \frac{i}{2}) k_t} \phi_t \left( u - \tfrac{i}{2} \right) \right] \frac{\d u}{\left( u^2 + \tfrac{1}{4} \right)},
    \end{equation}
    where $\phi$ is the conditional characteristic function $\phi_t(u) = \E \left[ e^{iu \log \frac{S_T}{S_t}} | \F_t \right]$ and $k_t = \log \frac{S_t}{K}$.

    Moreover, we add a `control variate' to quicken the convergence, as in \citet{controlvariate-anderander}.
    Given $\sigma_{\textnormal{BS}} < \infty$, one has
    \begin{equation} \label{eq:control-variate}
        C_t(S_t; T, K) = C_t^{\textnormal{BS}}(S_t; T, K) - \frac{K}{\pi} \int_0^\infty \Re \left[ e^{i (u - \frac{i}{2}) k_t} \left( \phi_t \left( u - \tfrac{i}{2} \right) - \phi_t^{\textnormal{BS}} \left( u - \tfrac{i}{2} \right) \right) \right] \frac{\d u}{\left( u^2 + \tfrac{1}{4} \right)},
    \end{equation}
    where
    \begin{align}
        C_t^{\textnormal{BS}}(S_t; K, T) &
        = \mathcal{N}(d_1) S_t - \mathcal{N}(d_2) K,
    \end{align}
    with
    \begin{align}
        d_1 = \frac{1}{\sigma_{\textnormal{BS}} \sqrt{T-t}} \left( \log \frac{S_t}{K} + \frac{\sigma_{\textnormal{BS}}^2}{2} (T-t) \right),
        \quad
        d_2 = d_1 - \sigma_{\textnormal{BS}} \sqrt{T-t}
    \end{align}
    and
    \begin{align} \label{cf-black-scholes}
        \phi_t^{\textnormal{BS}}(u) &
        = \exp \left[ -\frac{\sigma_{\textnormal{BS}}^2}{2} \left( u^2 + iu \right) (T-t) \right].
    \end{align}
    This Black-Scholes control variate can also be applied to other products and, as will be seen in Section \ref{sec:hedging}, to Fourier hedging. The quantity $\sigma_{\textnormal{BS}}$ can be determined to ensure approximate moment matching for instance, and one can use the characteristic function $\phi_t$ to approximate (by finite differences) the second order cumulant of the distribution of the log price $\log S$.

    ~\\
    Numerically, for the discretization of the Fourier integral, our numerical experiments show that Gauss-Laguerre quadrature outperform other quadrature rules in our class of models, which is in line with the empirical findings that appeared in \cite{schmelzle}. In particular, the higher the maturity is, the lower the degree of the quadrature needs to be for the same set of parameters. In addition, the use of a control variate makes computations much quicker yielding fewer calls of the characteristic function to make the quadrature converge, see Figure \ref{fig:error_tilde_M}. Moreover, we can see that the added value of the control variate, in terms of speed of convergence, grows with the time horizon. See \cite[Section 6]{schmelzle} for more details on numerical refinements. \\

    For a given signature volatility model \eqref{eq:sigmodel1}-\eqref{eq:sigmodel2}, i.e.~a set of parameters $\bsigma: [0,T] \to \A$, an application of Theorem~\ref{theo:charfun} with $f(t)=iu$ and $g=0$ provides the characteristic function $\phi_t(u)$ modulo the solution $\bpsi$ to the $\eTA[2]$-valued Riccati equation \eqref{eq:Ric}, which allows us to compute the price of call options in the signature volatility model using \eqref{eq:control-variate} together with a suitable truncation and discretization of the Riccati equation \eqref{eq:Ric}. The truncation rule for $\bpsi$ is not trivial and requires a little bit of care, because the shuffle products cannot be exact, as each step of the discretized ODE would double the truncation order of $\bpsi$. For the numerical implementation, we decided to fix the order $\tilde{M}$ of $\bpsi$ for each step and hence only have a shuffle product projected on $\tTA[2]{\tilde{M}}$, i.e.~$\widetilde{\shuprod}: (\tTA[2]{\tilde{M}})^2 \to \tTA[2]{\tilde{M}}$. Obviously, choosing $\tilde{M}$ lower than $2M$, where $M$ is the truncation order of $\bsigma$, also induces an approximation of shuffle product in $\bsigma \shupow{2}$, which greatly deteriorated the quality of the convergence. Moreover, choosing $\tilde{M}$ greater than $2M$ does not seem to improve the quality of the convergence. Figure~\ref{fig:error_tilde_M} illustrates the error induced by the choice of $\tilde{M}$. We thus fixed $\tilde{M}=2M$ throughout our experiments. Said differently for a given signature volatility $\bsigma \in \tTA[2]{M}$ the truncated solution $\bpsi$ to the Riccati equation \eqref{eq:Ric} is an element of $\tTA[2]{2M}$.
    
    \begin{figure}[H]
        \centering
        \subfloat[\centering $M=3$.]{{\includegraphics[width=\twoplotswidth]{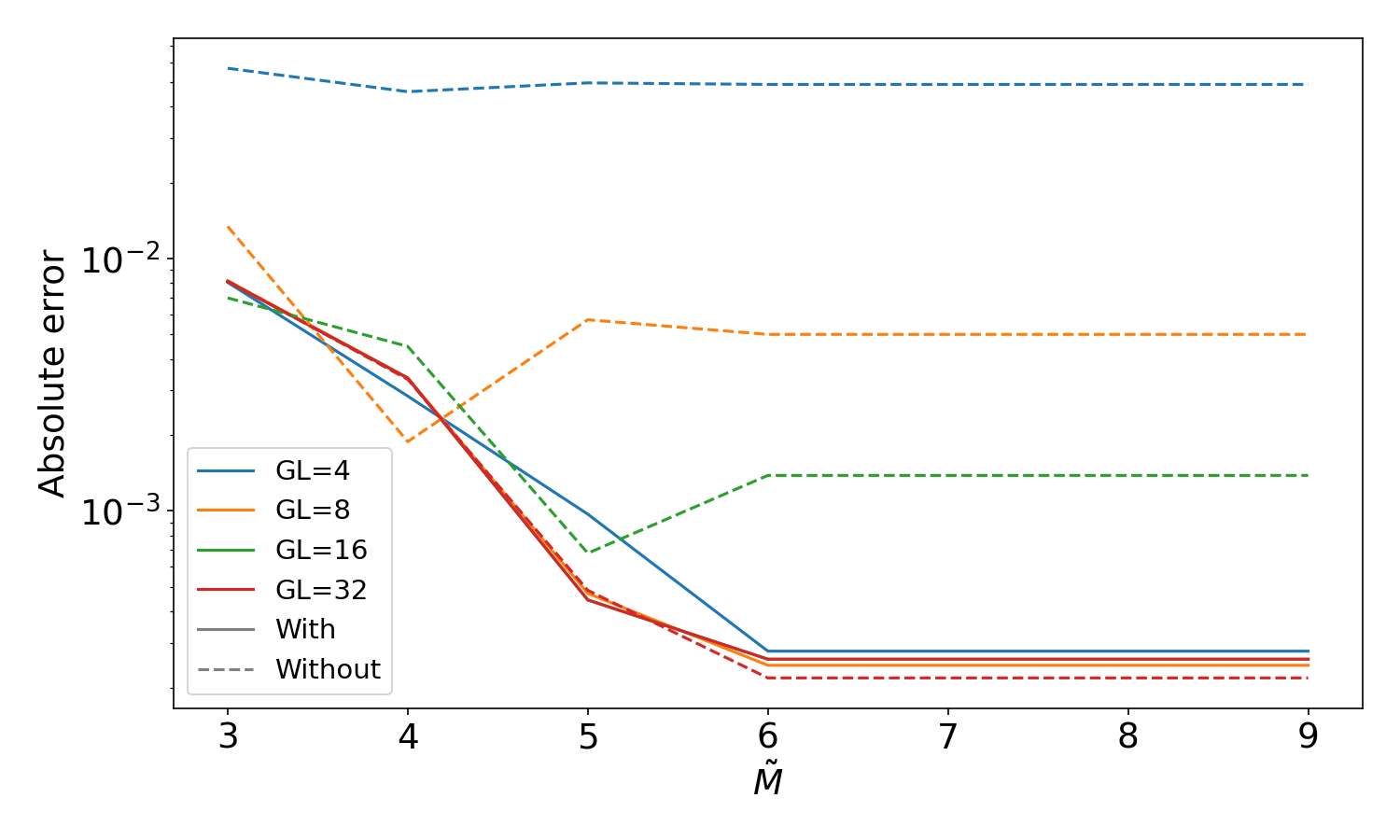}}}
        \quad
        \subfloat[\centering $M=5$.]{{\includegraphics[width=\twoplotswidth]{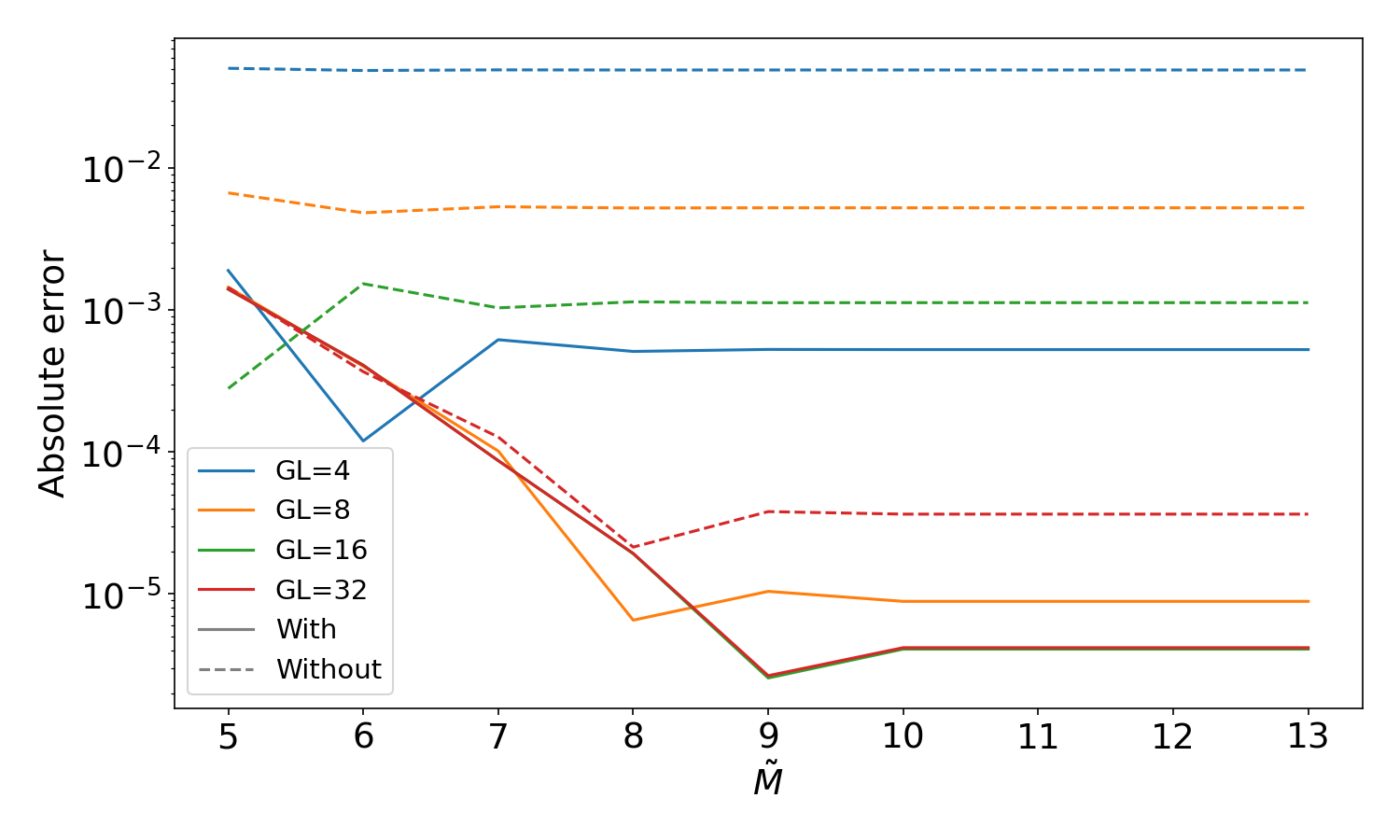}}}
        \caption{Error in terms of the truncations order $\tilde{M}$ of $\bpsi$ with (plain) \eqref{eq:lewis} and without (dashed) control-variate \eqref{eq:control-variate} for several degrees of Gauss-Laguerre quadrature, for a signature Ornstein-Uhlenbeck volatility model. $\kappa=2, \theta=0.25, \eta=0.6$ and $\rho=-0.7$, European at-the-money put options with maturity 6 months.}
        \label{fig:error_tilde_M}
    \end{figure}
    
    Regarding the numerical discretization and in order for the Riccati to converge in a realistic amount of time, we use Runge-Kutta to the 4th order to solve the ODE, which computes 4 times as many points as the Euler direct algorithm, but converges more than 4 times as fast. We also compute the characteristic function both JIT and in parallel so that it is drastically faster. Table~\ref{tab:compute_times_sig} shows the run-time, on a laptop CPU, of calling the characteristic function for several models. The left three columns show for reference the run-times of several well studied models where the characteristic function is explicitly known. The right five columns show the run-time while solving the (truncated) Riccati in \eqref{eq:Ric} to get the characteristic functional of the signature volatility model for several truncation orders, where \eqref{eq:Ric} has been solved using the Runge-Kutta algorithm of order 4 and 100 time-steps, i.e. the rhs of \eqref{eq:Ric} is called 400 times. The reader should keep in mind that the point of the signature volatility model is not to be faster than explicit models, but to be much more general and include non-affine Markovian and non-Markovian models, e.g. mean-reverting geometric Brownian motions and delayed process volatilities.
    
    \begin{table}[H]
        \centering
        \begin{tabular}{|c|c|c|c|c|c|c|c|c|}
            \hline
            \multicolumn{3}{|c|}{Explicit models} & \multicolumn{5}{|c|}{Signature volatility model} \\
            \hline
            Black-Scholes & Stein-Stein & Heston & $M=1$ & $M=2$ & $M=3$ & $M=4$ & $M=5$ \\
            \hline
            4.29e$-$7 & 5.29e$-$7 & 1.02e$-$7 & 1.47e$-$5 & 5.40e$-$5 & 3.48e$-$4 & 3.81e$-$3 & 4.23e$-$2 \\
            \hline
        \end{tabular}
        \caption{Run-time (seconds) of solving \eqref{eq:Ric} to get the characteristic function of the signature volatility model. Solved with Runge-Kutta of order 4 and 100 time-steps.}
        \label{tab:compute_times_sig}
    \end{table}

    In Figure~\ref{fig:sig-pricing}~(a)-(b), we compare our signature volatility pricing using the linear representations of the Ornstein-Uhlenbeck \eqref{eq:linear-OU} and Cox-Ingersoll-Ross \eqref{eq:linear-CIR} representations, truncated at order $M=4$, to the explicit Fourier pricing of the Stein-Stein \cite{stein-stein} and Heston \cite{heston} models. In Figure~\ref{fig:sig-pricing}~(c)-(d) we compare our signature volatility pricing using a linear mGBM and a linear LER drawn at random, see \eqref{param:LER}, to Monte Carlo simulations to demonstrate that our formulas are applicable beyond the classical affine case, where Fourier pricing is not typically employed, and beyond the Markovian setting.
    
    \begin{figure}[H]
        \centering
        \subfloat[\centering Stein-Stein model, $\kappa=1, \theta=0.25, \eta=1.2$ and $\rho=-0.5$.]{{\includegraphics[width=\twoplotswidth]{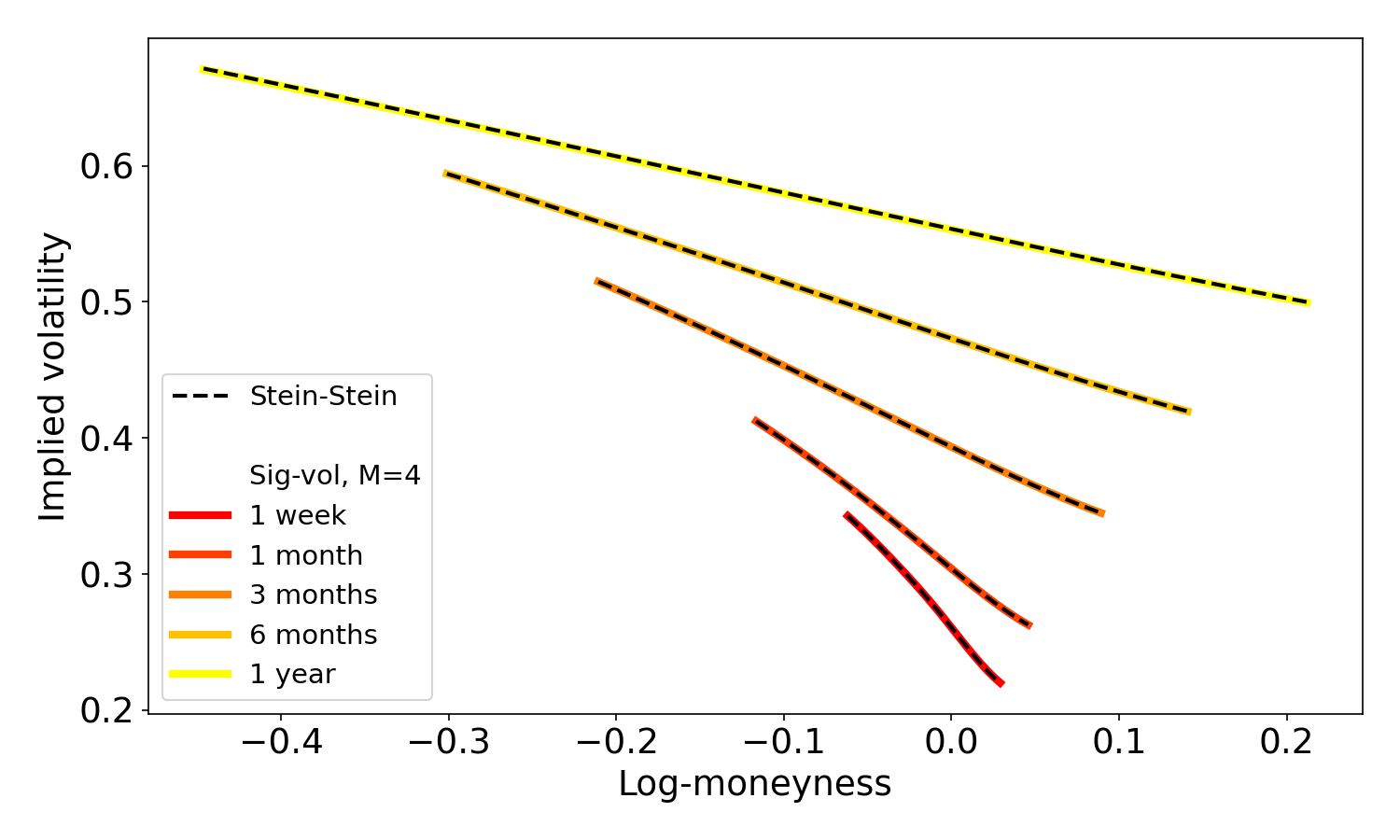}}}
        \quad
        \subfloat[\centering Heston model, $\kappa=2, \theta=0.0625, \eta=0.7$ and $\rho=-0.7$.]{{\includegraphics[width=\twoplotswidth]{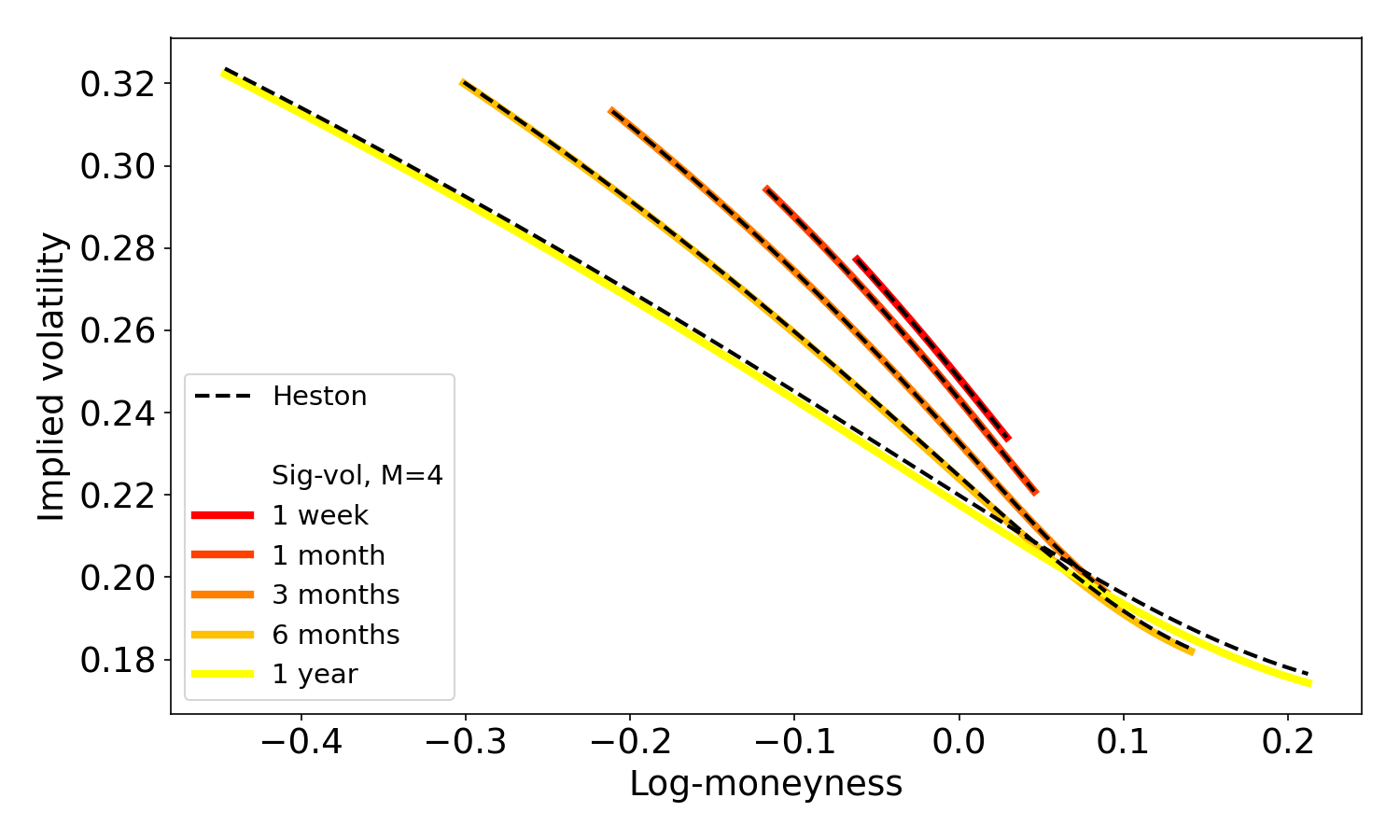}}}
        \quad
        \subfloat[\centering Hull-White model, $\kappa=1, \theta=0.25, \eta=0, \alpha=0.4$ and $\rho=-0.5711$.]{{\includegraphics[width=\twoplotswidth]{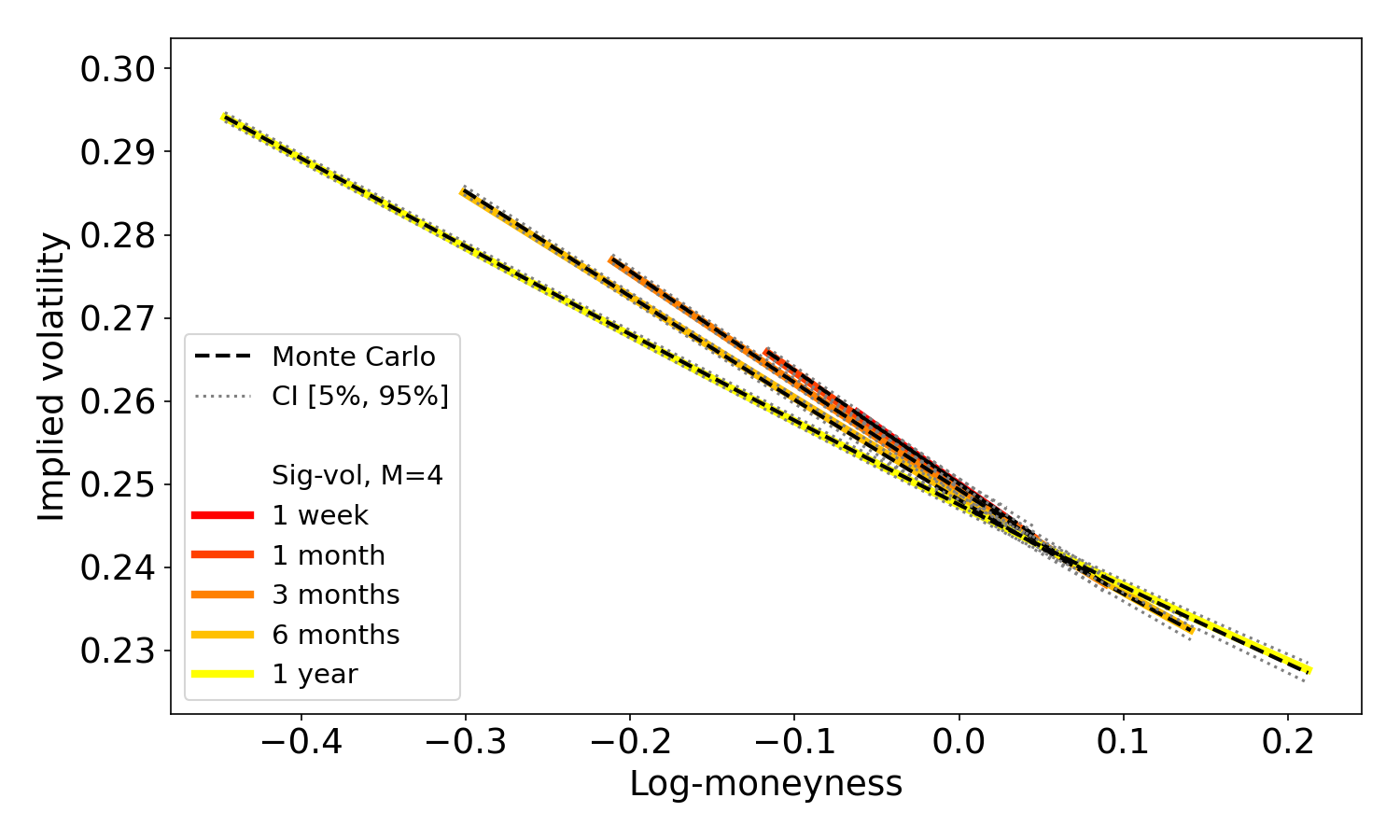}}}
        \quad
        \subfloat[\centering Coefficients drawn at random such that $\bracketsig{\bsigma^\textnormal{LER} \proj{2}} \geq 0$ and $\rho=-0.6$.]{{\includegraphics[width=\twoplotswidth]{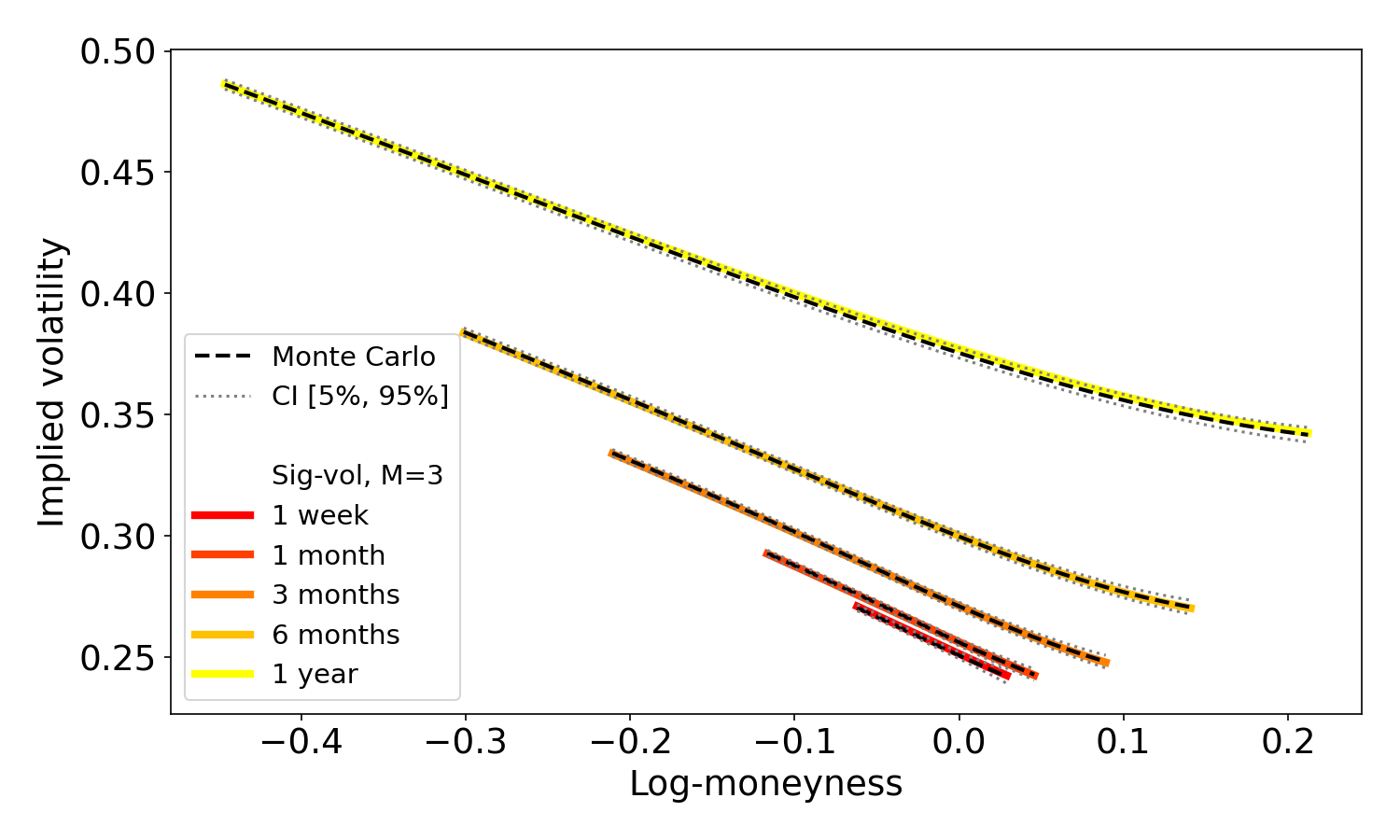}}}
        \caption{Implied volatility for several models of European put options with maturity 1 week (red), 1 month, 3 months, 6 months and 1 year (yellow).}
        \label{fig:sig-pricing}
    \end{figure}
    
    Lewis' approach together with Black-Scholes control variate, see \eqref{eq:control-variate}, was used. This serves as a numerical validation for Theorem~\ref{theo:charfun} as well as for our conjectured representation for the square-root process \eqref{eq:linear-CIR}. \\
    
    In Figure \ref{fig:sig-pricing}~(d) we work on a linear functional drawn at random under the leverage effect condition (LER), see Lemma~\ref{lem:leveff}. In this example, the coefficients $\bsigma^{\textnormal{LER}}$ are drawn such that $\bsigma^{\textnormal{LER} \emptyword} = x$, the coefficients that must remain positive have been drawn from a $\mathcal{U}_{[0, 0.5]}$, and the unconstrained coefficients have been drawn from a $\mathcal{U}_{[-0.5, 0.5]}$. Bellow is the draw used in Figure \ref{fig:sig-pricing}~(d):
    \begin{equation} \label{param:LER}
        \bsigma^{\textnormal{LER}} = \left( 0.25, 
        \begin{pmatrix}
            0.102763 \\
            0.274407
        \end{pmatrix},
        \begin{pmatrix}
             0.044883 & 0 \\
            -0.076345 & 0
        \end{pmatrix}, 
        \begin{pmatrix}
            0.145894 & 0 & \\
            0.391773 & 0 & \\
            & -0.062413 & 0 \\
            & 0.463663 & 0.357595
        \end{pmatrix},
        \bm{0} \right).
    \end{equation}

\subsection{\textit{q}-Volatility swaps}
    
    The payoff of a $q$-volatility swap is $R_T^q - K^q$ where $R_T^q = \left( \frac{1}{T} \bar{V}_T \right)^q$ is the realized $q$-volatility and $K^q$ the $q$-volatility strike. The aim in pricing $q$-volatility swaps is to find the fair strike price, i.e. $K^q = \E[R_T^q]$. This is made possible by Laplace inversion, see \citet[Theorem 1.2]{schurger},

    \begin{align} \label{eq:laplace_qvol}
        \E[X^q] = \frac{q}{\Gamma(1-q)} \int_0^\infty \frac{1 - \E \left[ e^{-uX} \right]}{u^{q+1}} \d u,
    \end{align}
    for $q \in (0, 1)$ and $X \geq 0$. It is straightforward to see that setting $f=0$ and $g(t)=-\frac{u}{T}$ in \eqref{eq:U} gives $\tilde{M}_0(u) = \E [e^{-\frac{u}{T} \bar{V}_T}]$, which allows us to compute analytically $q$-volatility swaps in the framework of the signature volatility models with
    $$ \E \left[ \left( \frac{1}{T} \bar{V}_T \right)^q \right] = \frac{q}{\Gamma(1-q)} \int_0^\infty \frac{1 - \tilde{M}_0(u)}{u^{q+1}} \d u. $$
    
    Specifically, the fair strike of the volatility swap, i.e. $q=\frac{1}{2}$, is thus of the form
    $$ \E \left[ \sqrt{\frac{1}{T} \bar{V}_T} \right] = \frac{1}{2 \sqrt{\pi}} \int_0^\infty \frac{1 - \tilde{M}_0(u)}{u^{3/2}} \d u. $$

    Moreover, the fair strike of the variance swap, i.e. $q=1$, can be written in closed form thanks to Fawcett's formula \eqref{eq:fawcet} for time-independent representations, i.e.
    $$ K_T^1 = \frac{1}{T} \bracketsigE[T]{\bsigma \shupow{2} \word{1}}. $$
    
    \begin{figure}[H]
        \centering
        \subfloat[\centering $q=\frac{1}{2}$ (volatility swap).]{{\includegraphics[width=\twoplotswidth]{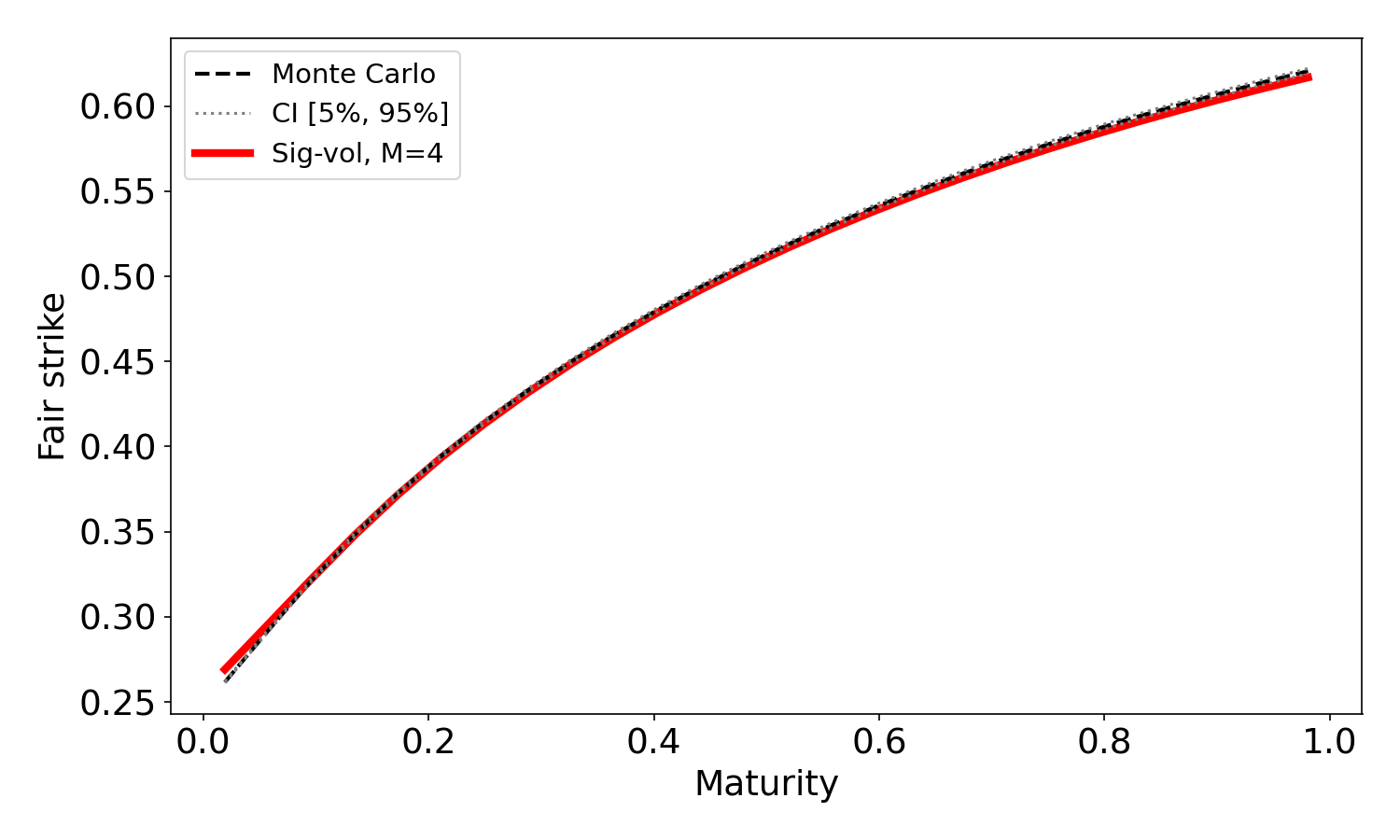}}}
        \quad
        \subfloat[\centering $q=1$ (variance swap).]{{\includegraphics[width=\twoplotswidth]{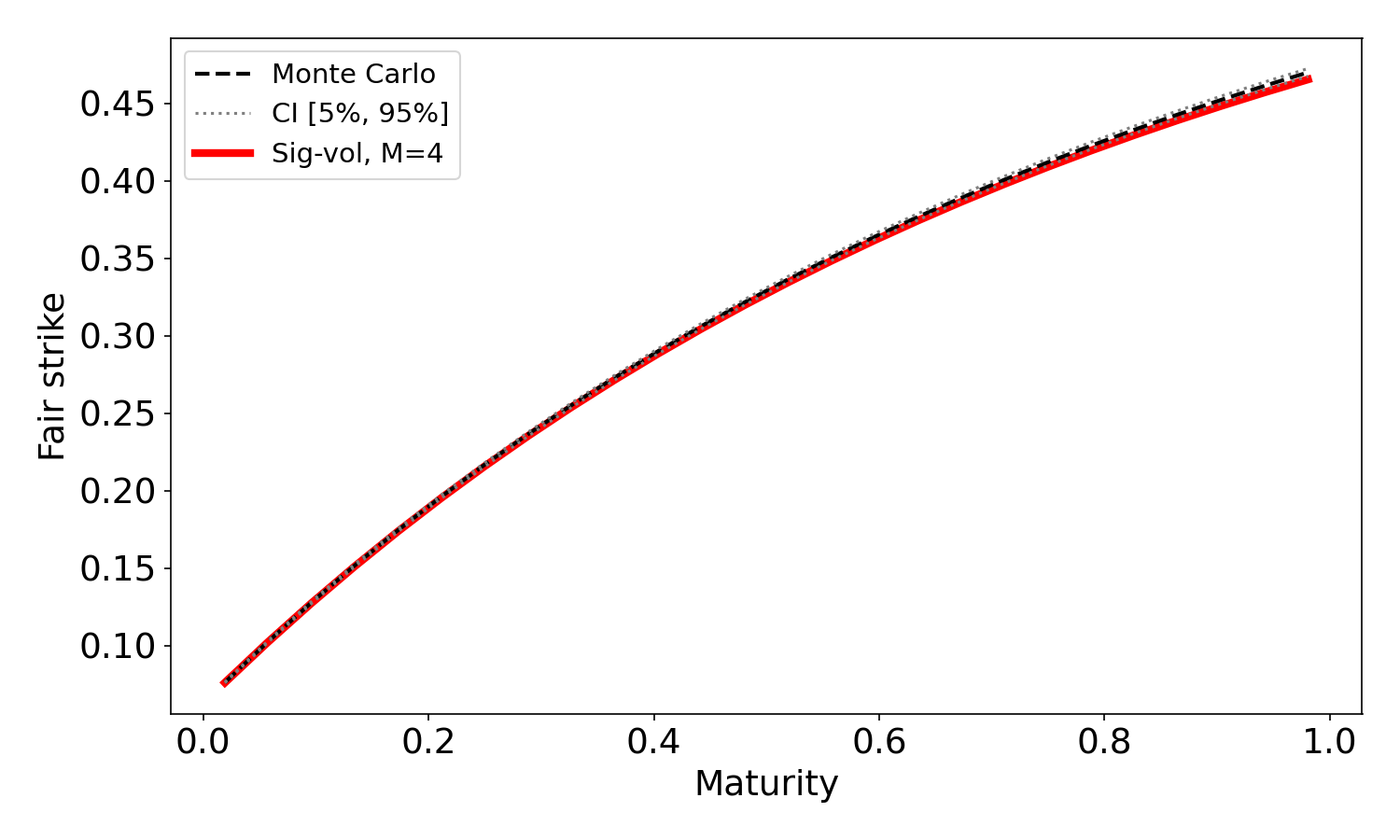}}}
        \caption{Fair strikes of $q$-volatility swaps as a function of the maturity, for a signature Ornstein-Uhlenbeck volatility model. $\kappa=1, \theta=0.25, \eta=1.2$ and $\rho=-0.7$.}
        \label{fig:vol-swaps}
    \end{figure}

\subsection{Market calibration} \label{sec:calib}

    Another way to learn the dynamics $\bsigma$ of the signature volatility model is to calibrate it against implied volatility surfaces. In this section we will focus on SPX implied volatility surface data from 2017-05-19, purchased from the CBOE website \url{https://datashop.cboe.com/}, which consists of $J$ maturities $T_j$ and $N_j$ strikes $K_{j, n}$, for each of those maturities. \\

    We calibrate one $\bsigma_j^\textnormal{SPX}$ against each maturity $T_j$, i.e. minimizing for each $j$
    
    $$ \mathcal{L}_j(\bsigma, \rho) = \frac{1}{N_j} \sum_{n=1}^{N_j} \left( \textnormal{IV}^\textnormal{market}(K_{j, n}, T_j) - \textnormal{IV}^\textnormal{sig-vol}(K_{j, n}, T_j, \bsigma, \rho) \right)^2, $$
    
    where $\textnormal{IV}^\textnormal{sig-vol}(K, T, \bsigma, \rho)$ is the implied volatility of the price~\eqref{eq:control-variate} computed after solving the Riccati~\eqref{eq:Ric}. Moreover, the constant term $(\bsigma_j^\textnormal{SPX})^\emptyword$ was fixed at $0.1204$ the closing value of the VIX on 2017-05-19 for all $j$. A differential evolution algorithm, see \cite{diff-evol}, from the \texttt{SciPy} library was used to minimize the MSE as it produced the best fit on synthetic implied surfaces. \\
    
    Figure~\ref{fig:calib-spx} shows the implied volatility of such fitted $(\bsigma_j^\textnormal{SPX}, \rho_j^\textnormal{SPX})$ for maturities $T_j \in \{7, 14, 35, 56\}$ in number of days.
    We recall that the number of non-zero terms in an object of $\tTA{M}$ is $\frac{d^{M+1}-1}{d-1}$, which is 15 in the case of $d=2$ and $M=3$.\\
  
    Finally, we stress that the primary purpose of this exercise is to illustrate the stability of our numerical scheme, across several parameters, using Fourier techniques for calibration with realistic implied volatilities, rather than to justify the choice of the volatility model itself. No specific structure is imposed on the coefficients, showing that the method is effective even beyond the affine and Markovian cases.
    
    \begin{figure}[H]
        \centering
        \subfloat[\centering Maturity 7 days.]{{\includegraphics[width=\twoplotswidth]{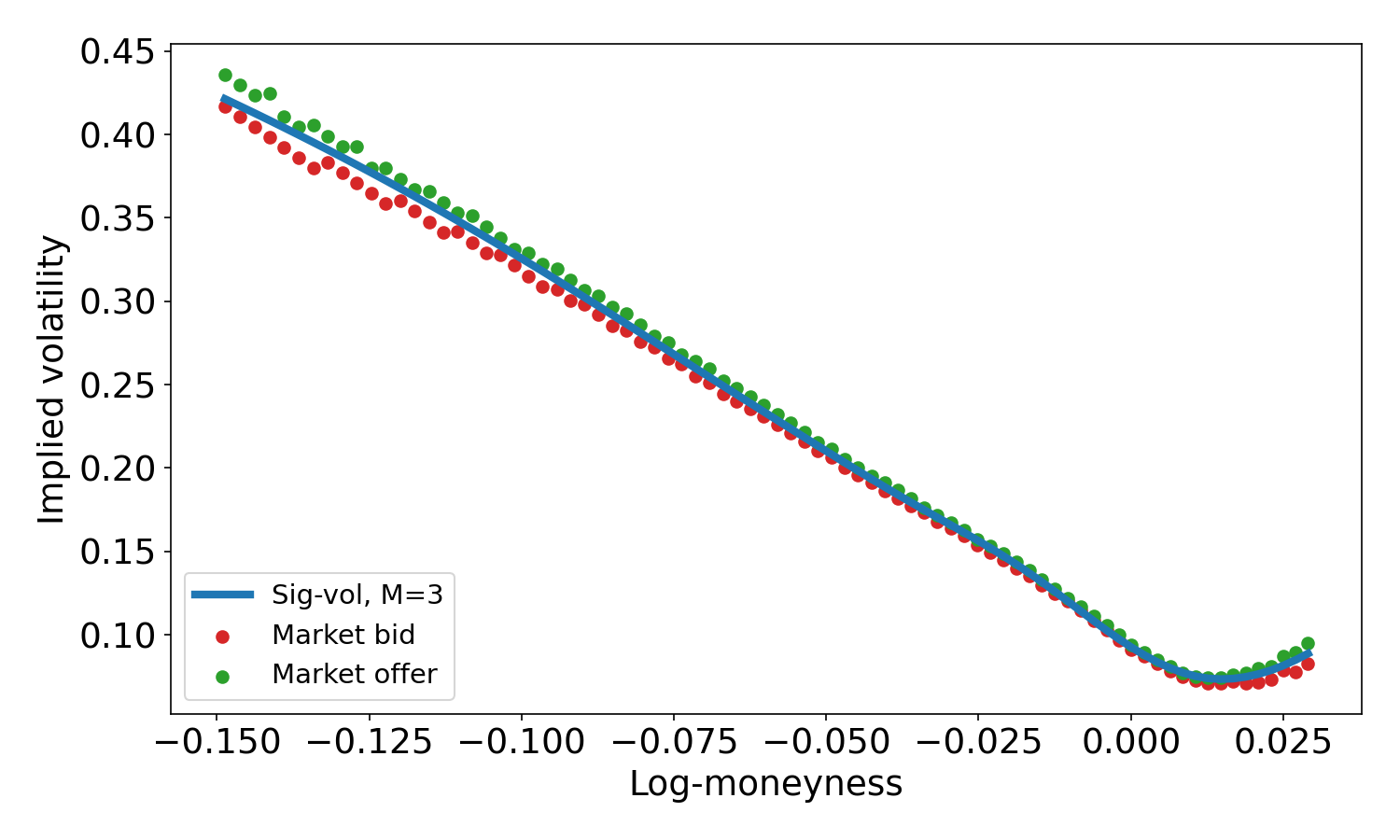}}}
        \quad
        \subfloat[\centering Maturity 14 days.]{{\includegraphics[width=\twoplotswidth]{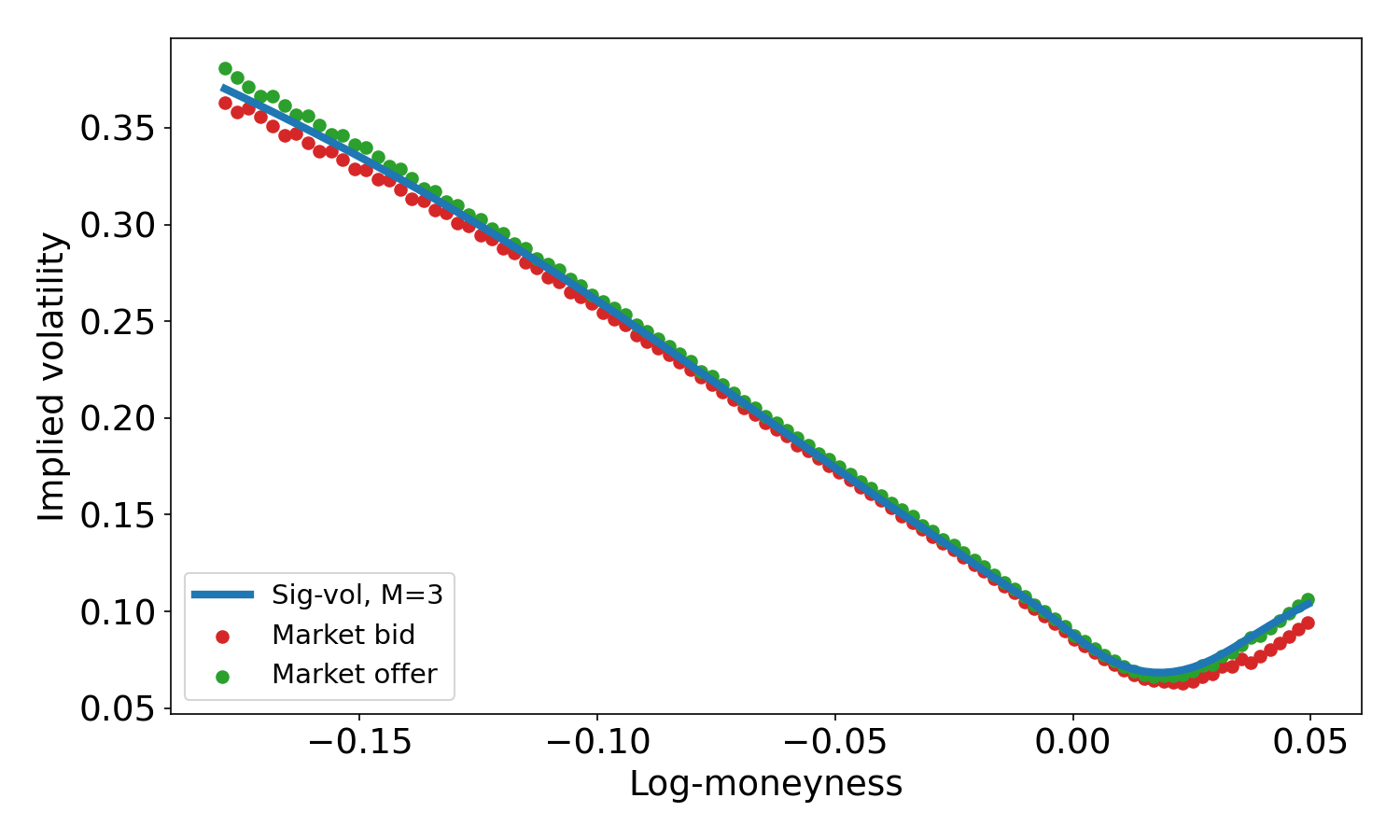}}}
        \\
        \subfloat[\centering Maturity 35 days.]{{\includegraphics[width=\twoplotswidth]{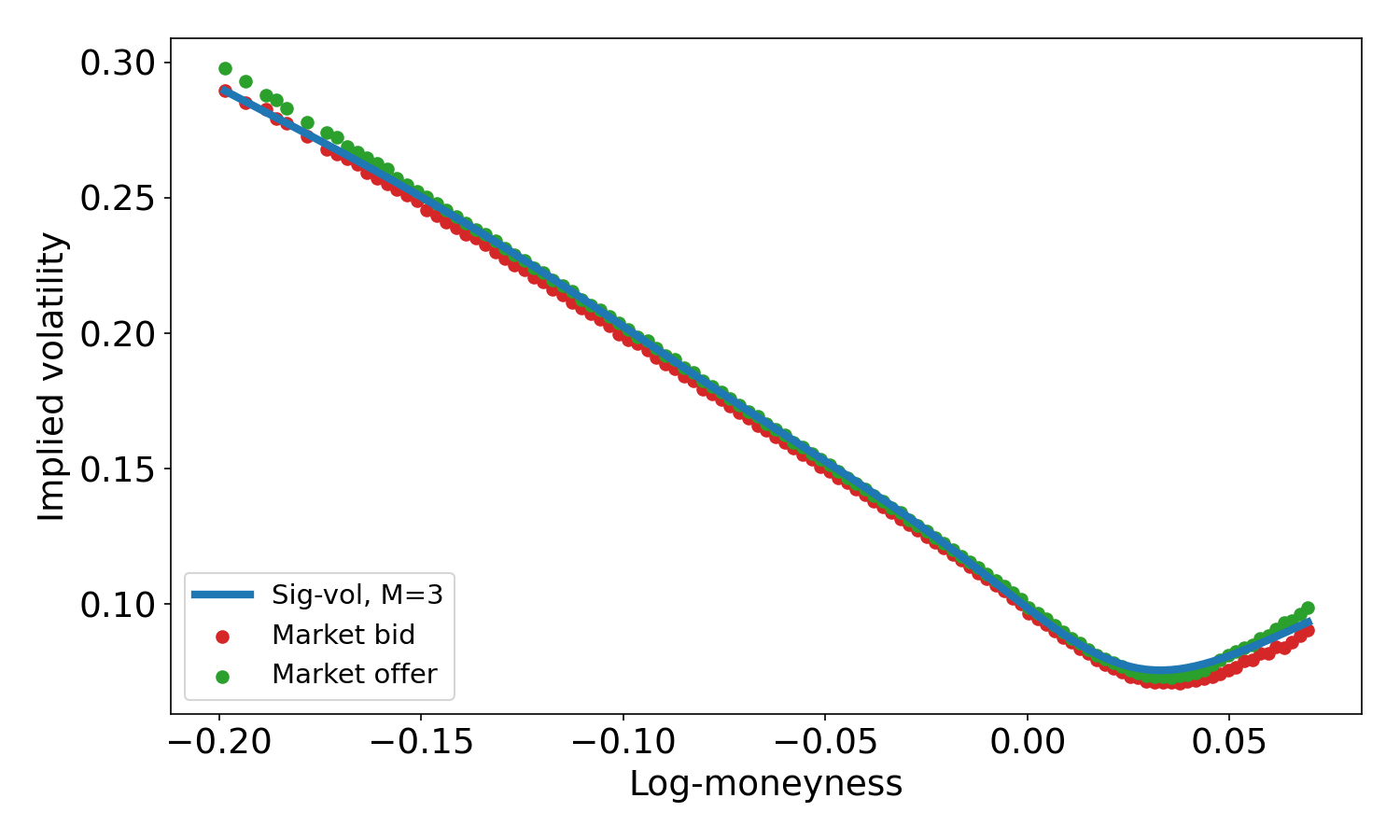}}}
        \quad
        \subfloat[\centering Maturity 56 days.]{{\includegraphics[width=\twoplotswidth]{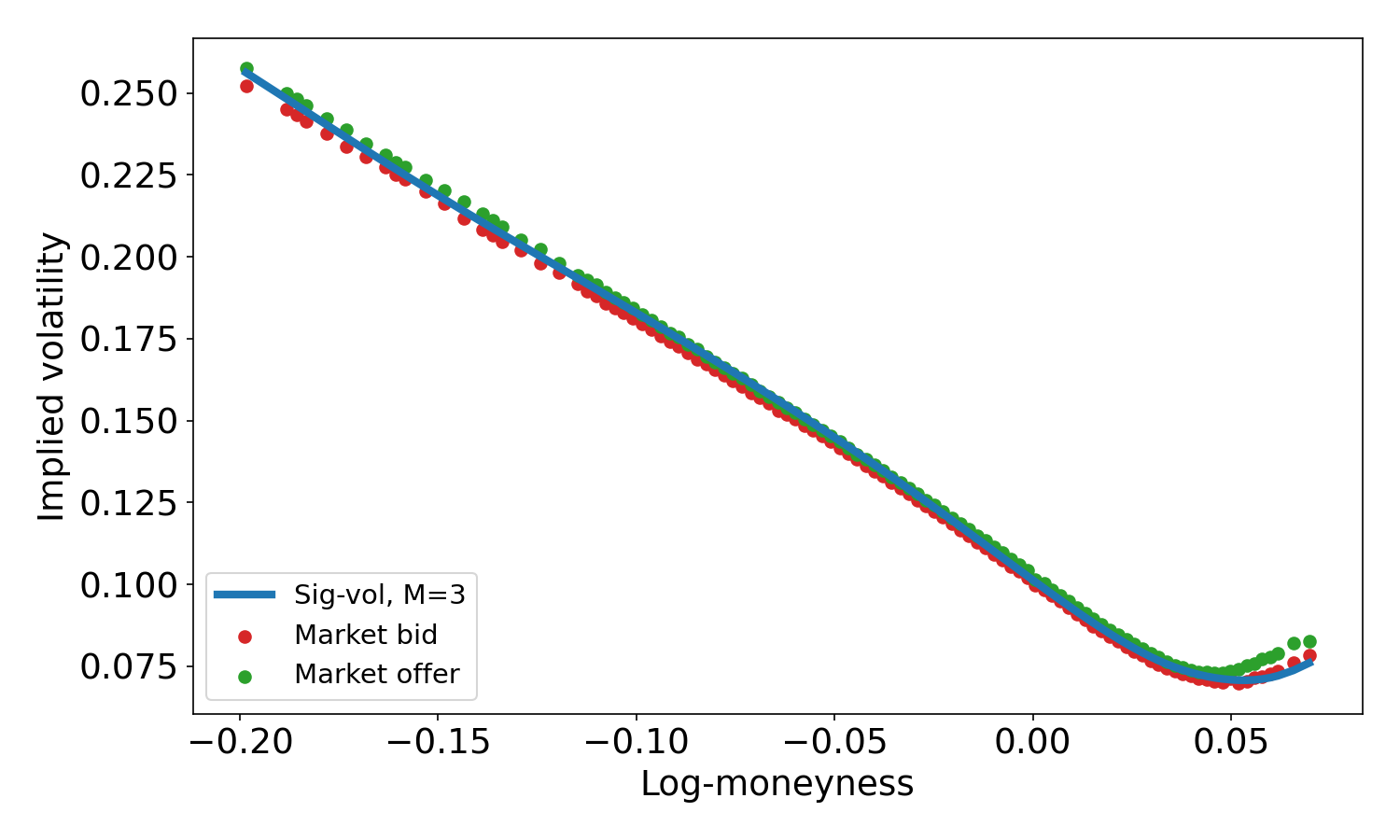}}}
        \caption{Implied volatility of signature volatility model calibrated against SPX data 2017-05-19 on several maturities.}
        \label{fig:calib-spx}
    \end{figure}

\section{Quadratic hedging by Fourier methods} \label{sec:hedging}

    The signature volatility model \eqref{eq:sigmodel1}-\eqref{eq:sigmodel2} generates an incomplete market in general (unless the correlation $\rho$ between the two Brownian motions $B$ and $W$ is $\pm 1$). Therefore contingent claims on the stock $S$ cannot be perfectly hedged. We will consider quadratic hedging methods instead, we refer to \citet{martinquadhedging} for a detailed overview of quadratic hedging approaches. We will show that in our setup, quadratic hedging remains highly tractable in signature volatility models using Fourier techniques on the conditional characteristic function \eqref{eq:charfun}. We do this in two steps:
    \begin{enumerate}
        \item We first solve in Section~\ref{S:hedging1} the quadratic hedging problem for a generic contingent claim $\xi$, in a general stochastic volatility model driven by the two dimensional Brownian motion $(W, W^\perp)$. The hedging strategy depends on the quantities that appear in the martingale representation theorem of $\E[\xi|\F_t]$. We provide a concise proof in this setting.
        
        \item We then show in Section~\ref{subsec:hedging-practical} that in the setting of a signature volatility model and for contingent claims that admit Fourier representation, such as European and Asian call and put options, the hedging strategy can be recovered from the conditional characteristic function \eqref{eq:charfun}. 
    \end{enumerate}

\subsection{A generic solution} \label{S:hedging1}

    Let $\xi$ be an $\F_T$-measurable non-negative random variable such that $\E[\xi^2] < \infty$ that we are looking to hedge using a self-financing portfolio. We recall that $(\F_t)_{t \geq 0}$ is the filtration generated by $(W, W^\perp)$. A self-financing hedging portfolio $X$ consists of an initial wealth $X_0 \in \R$ and a progressively measurable strategy $(\vartheta_u)_{u \leq T}$ of the amount of shares invested in asset $S$ given in \eqref{eq:sigmodel1} at time $u \leq T$. It has the following dynamics 
    \begin{align}
        X_t^\alpha = X_0 + \int_0^t \vartheta_u \d S_u = X_0 + \int_0^t \alpha_u \d B_u,
    \end{align}
    with $\alpha_u := \vartheta_u S_u \Sigma_u$.
    The set of admissible hedging strategies $\alpha$ is defined by
    \begin{align}
        \mathcal{H} = \left \{ \alpha \text{ progressively measurable such that } \int_0^T \E \left[ \alpha_s^2 \right] \d s < \infty \right\}.
    \end{align}
    We stress that in this section we do not impose specific dynamics for the stochastic volatility $\Sigma$, i.e.~\eqref{eq:sigmodel2}, $\Sigma$ is only assumed to be adapted to the Brownian motion $(W, W^\perp)$. 

    A quadratic hedging strategy aims at minimizing the following objective function
    \begin{align} \label{eq:Jquad}
        J(X_0, \alpha) = \E \left[ \left( X_T^\alpha - \xi \right)^2 \right] 
    \end{align}
    over $X_0 \in \R$ and $\alpha \in \mathcal{H}$. \\ 
    
    Following \citet{hedging_foellmer} it is easy to derive a solution of the quadratic hedging problem using the martingale representation theorem. 
    Note that $(\E[\xi | \F_t])_{t \leq T}$ is a square integrable martingale with terminal value $\xi$ at $T$. An application of the martingale representation theorem \cite[Theorem 4.15]{martingalerep} yields the existence of two progressively measurable and square integrable processes $Z$ and $Z^{\perp}$ such that 
    \begin{align} \label{eq:martingalerep}
        \E \left[\xi | \F_t \right ] = \xi - \int_t^T Z_s \d W_s - \int_t^T Z^{\perp}_s \d W_s^{\perp}.
    \end{align}
    
    The value of the quadratic hedging problem is consequently given by 
    \begin{align} \label{eq:hedgeopti}
        \inf_{X_0 \in \R, \alpha \in \mathcal H} J(X_0, \alpha) = \E \left[ \int_0^T \left(Z_t^2 + (Z_t^\perp)^2 \right) \d t - \int_0^T \left( \rho Z_t + \sqrt{1 - \rho^2} Z_t^\perp \right)^2 \d t \right],
    \end{align}
    where the optimum is attained for $(X_0^*, \alpha^*)$ given by
    \begin{align} \label{eq:hedgingoptimal}
        X_0^* = \E \left[ \xi \right] \quad \text{and} \quad \alpha^*_t = \rho Z_t + \sqrt{1 - \rho^2} Z_t^\perp, \quad t \leq T.
    \end{align}

\subsection{Fourier implementation in the signature volatility model} \label{subsec:hedging-practical}

    We now illustrate how the optimal hedging strategy $X_0^*$ and $\alpha^*$ given in \eqref{eq:hedgingoptimal} can be recovered numerically from the knowledge of the conditional characteristic functional \eqref{eq:charfun} in the specific case of a signature volatility model, i.e.~when $\Sigma$ is of the form \eqref{eq:sigmodel2} and for contingent claims that admit a Fourier representation. In this section, we only consider European options, Asian options have been included in Appendix~\ref{subsec:hedging-asiat} for the interested reader. For more general payoffs, a similar approach can be adapted, following the representation formulas in \citet*{di2019semi}. \\
    
    In order to implement the quadratic hedging, the idea is to re-express the Fourier inversion formula \eqref{eq:control-variate} in terms of the process $M_t(u)$ in \eqref{eq:M} with $f(t) = iu$ and $g = 0$ and apply Itô. Since in this case
    $$ M_t(u) = \phi_t(u) e^{iu \log S_t}, $$
    the representation \eqref{eq:control-variate} directly leads to
    \begin{equation} \label{eq:control-variate-M}
        C_t(S_t; T, K) = C_t^{\textnormal{BS}}(S_t; T, K) - \frac{K}{\pi} \int_0^\infty \Re \left[ e^{-i (u - \frac{i}{2}) \log K} \left( M_t \left( u - \tfrac{i}{2} \right) - M_t^{\textnormal{BS}} \left( u - \tfrac{i}{2} \right) \right) \right] \frac{\d u}{\left( u^2 + \tfrac{1}{4} \right)},
    \end{equation}
    where
    $$ M_t^{\textnormal{BS}}(u) = \phi_{t}^{\textnormal{BS}}(u) e^{iu \log S_t}, $$
    and $\phi_{t}^{\textnormal{BS}}$ as defined in \eqref{cf-black-scholes}.
   
    Set $w(u) := \frac{K}{\pi} \frac{e^{-i (u - \frac{i}{2}) \log K}}{u^2 + \frac{1}{4}}$ and $\tilde{u} := (u - \frac{i}{2})$ and assume that $\sup_{u: \Im(u)=-1/2}\mathbb E[|M_t(u)|^2] <\infty$, $t\leq T$, then an application of Fubini's theorem combined with Itô's Lemma, see for instance \cite[Proposition 4.1]{di2019semi}, yields
    \begin{align*}
        \d C_t &
        = \d C_t^{\textnormal{BS}} + \int_0^\infty \Re \left[ w(u) \left( \d M_t(\tilde{u}) - \d M_t^{\textnormal{BS}}(\tilde{u}) \right) \right] \d u
        \\ &
        = \Delta_t^{\textnormal{BS}} \d S_t + \left( \Theta_t^{\textnormal{BS}} + \tfrac{1}{2} \Gamma_t^{\textnormal{BS}} (S_t \Sigma_t)^2 \right) \d t
        \\ & \qquad + \int_0^\infty \Re \Big[ w(u) \Big( M_t(\tilde{u}) \left( \d U_t + \tfrac{1}{2} \d [U]_t \right) - M_t^{\textnormal{BS}}(\tilde{u}) \left( \d \log M_t^{\textnormal{BS}}(\tilde{u}) + \tfrac{1}{2} \d [\log M^{\textnormal{BS}}(\tilde{u})]_t \right) \Big) \Big] \d u,
    \end{align*}
    where $\Delta_t^{\textnormal{BS}} = \frac{\partial}{\partial S_t} C_t^{\textnormal{BS}}$, $\Theta_t^{\textnormal{BS}} = \frac{\partial}{\partial t} C_t^{\textnormal{BS}}$ and $\Gamma_t^{\textnormal{BS}} = \frac{\partial^2}{\partial S_t^2} C_t^{\textnormal{BS}}$. Furthermore, using equalities between Black-Scholes \textit{Greeks}, one can show that 
    \begin{align*}
        \d C_t &
        = \Delta_t^{\textnormal{BS}} S_t \Sigma_t \d B_t + \frac{\Theta_t^{\textnormal{BS}}}{\sigma_{\textnormal{BS}}^2} \left(\sigma_{\textnormal{BS}}^2 - \Sigma_t^2 \right) \d t
        \\ & \qquad + \int_0^\infty \Re \Big[ w(u) \Big( M_t(\tilde{u}) \left( \bracketsig{\bpsi_t(\tilde{u}) \proj{2}} \d W_t + i \tilde{u} \Sigma_t \d B_t \right)
        \\ & \qquad \qquad \qquad \qquad \qquad - M_t^{\textnormal{BS}}(\tilde{u}) \left[ \tfrac{1}{2} \left( \tilde{u}^2 + i \tilde{u} \right) \left( \sigma_{\textnormal{BS}}^2 - \Sigma_t^2 \right) \d t + i \tilde{u} \Sigma_t \d B_t \right] \Big) \Big] \d u
        \\ &
        = \Delta_t^{\textnormal{BS}} S_t \Sigma_t \d B_t + \int_0^\infty \Re \Big[ w(u) \Big( M_t(\tilde{u}) \left( \bracketsig{\bpsi_t(\tilde{u}) \proj{2}} \d W_t + i \tilde{u} \Sigma_t \d B_t \right) - M_t^{\textnormal{BS}}(\tilde{u}) i \tilde{u} \Sigma_t \d B_t \Big) \Big] \d u.
    \end{align*}
    The final equality comes from the fact that $\Theta_t^{\textnormal{BS}}$ can also be written as a Fourier integral. This gives option price dynamics of the following form
    \begin{align} \label{eq:valuehedging}
        \d C_t &
        = Z_t \d W_t + Z_t^\perp \d W_t^\perp,
    \end{align}
    where $Z$ and $Z^\perp$ are defined as follows
    \begin{align*}
        Z_t &
        = \Sigma_t S_t \Delta_t^{\textnormal{BS}} \rho + \int_0^\infty \Re \left[ \zeta_t(\tilde{u}) w(u) \right] \d u, \qquad Z_t^\perp 
        = \Sigma_t S_t \Delta_t^{\textnormal{BS}} \sqrt{1 - \rho^2} + \int_0^\infty \Re \left[ \zeta_t^{\perp}(\tilde{u}) w(u) \right] \d u,
    \end{align*}
    with
    \begin{align*}
        \zeta_t(u) :&
        = iu \Sigma_t \left( M_t(u) - M_t^{\textnormal{BS}}(u) \right) \rho + M_t(u) \bracketsig{\bpsi_t(u) \proj{2}}
        \\ \zeta_t^\perp(u) :&
        = iu \Sigma_t \left( M_t(u) - M_t^{\textnormal{BS}}(u) \right) \sqrt{1 - \rho^2}.
    \end{align*}

    This allows us to solve the quadratic hedging problem in \eqref{eq:hedgingoptimal} numerically for European call options in the framework of signature volatility models. Moreover, applying the put-call parity allows us to easily extend it to European put options. \\

    In Figure \ref{fig:hedging-euro-OU}, we simulate price trajectories under the Stein-Stein model \cite{stein-stein} and compare the performance of the explicit hedging strategy to the Fourier hedging of the signature Ornstein-Uhlenbeck volatility model for a European put option with multiple strikes and two horizons.
    
    \begin{figure}[H]
        \centering
        \subfloat[\centering $T=$ 1 month, $K=0.9$.]{{\includegraphics[width=\threeplotswidth]{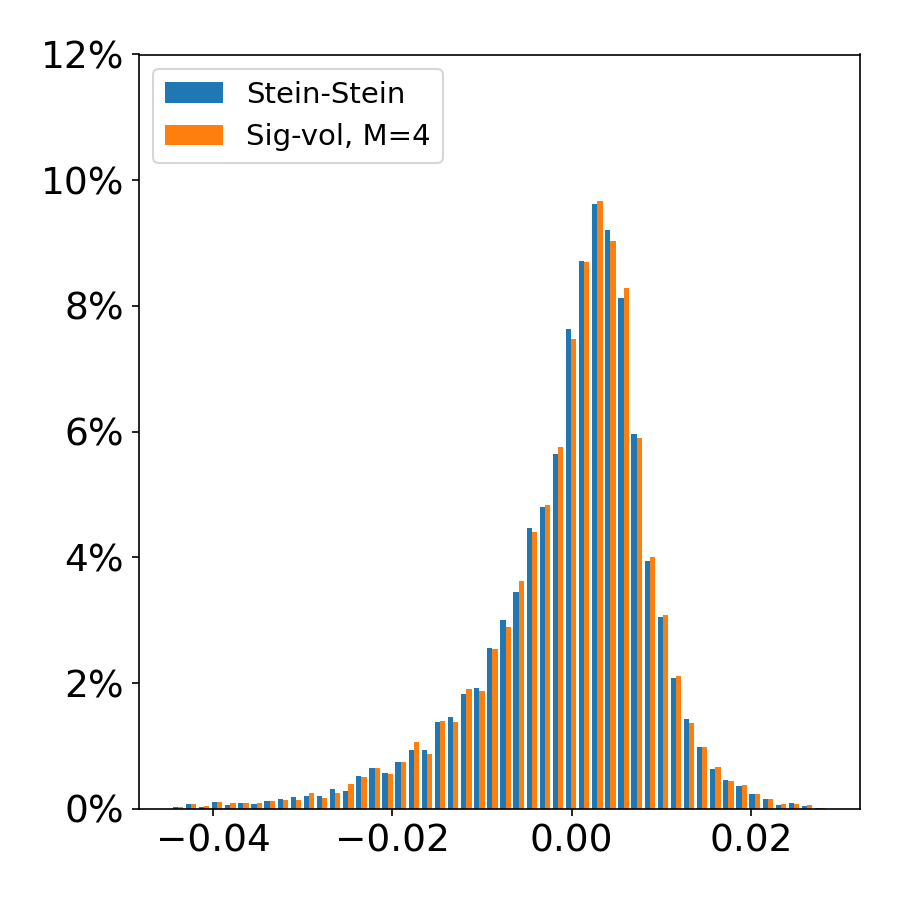}}}
        \quad
        \subfloat[\centering $T=$ 1 month, $K=1.0$.]{{\includegraphics[width=\threeplotswidth]{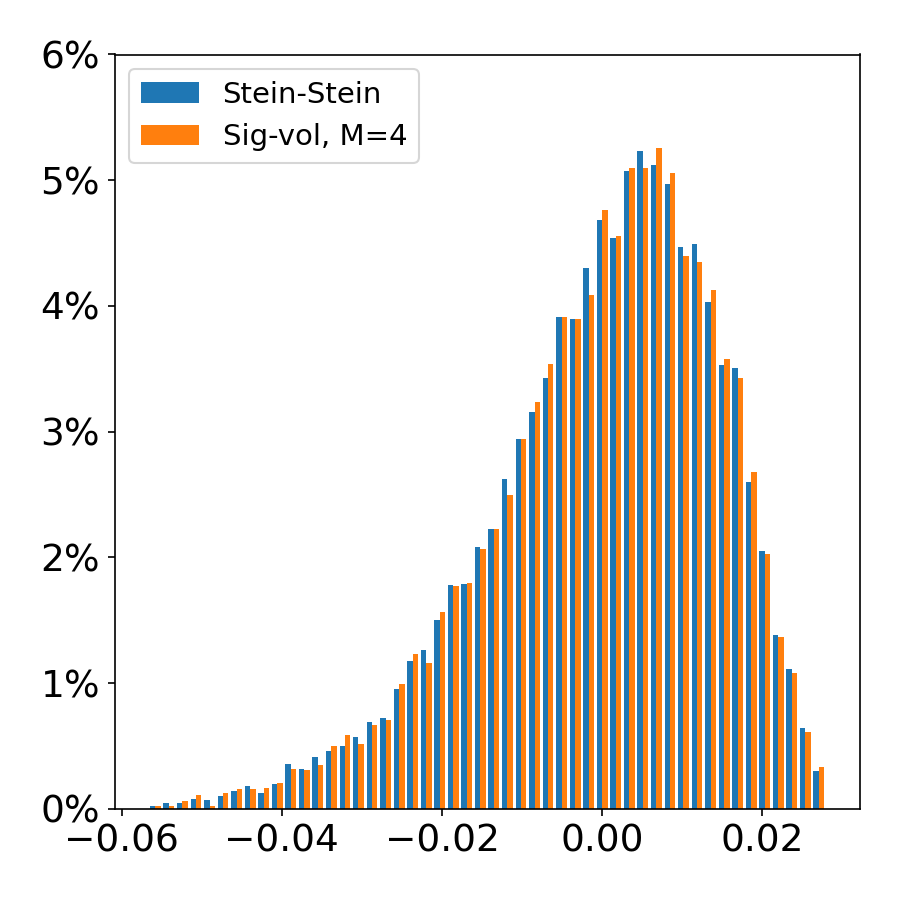}}}
        \quad
        \subfloat[\centering $T=$ 1 month, $K=1.1$.]{{\includegraphics[width=\threeplotswidth]{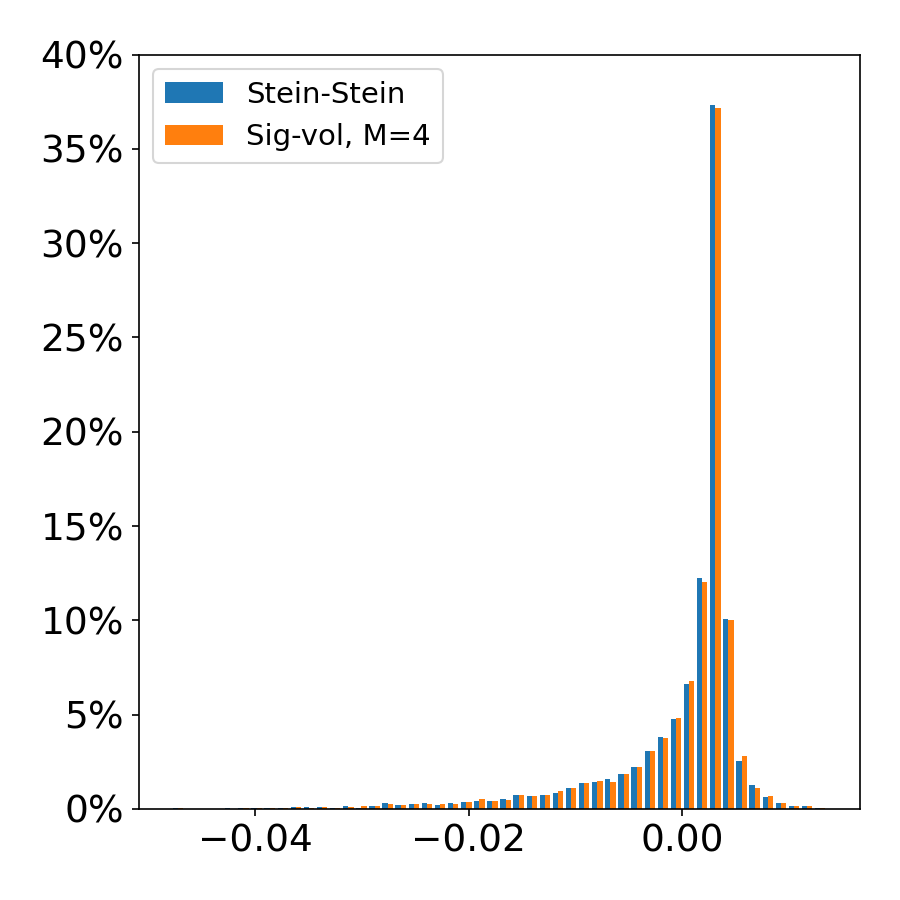}}}
        \\
        \subfloat[\centering $T=$ 6 months, $K=0.75$.]{{\includegraphics[width=\threeplotswidth]{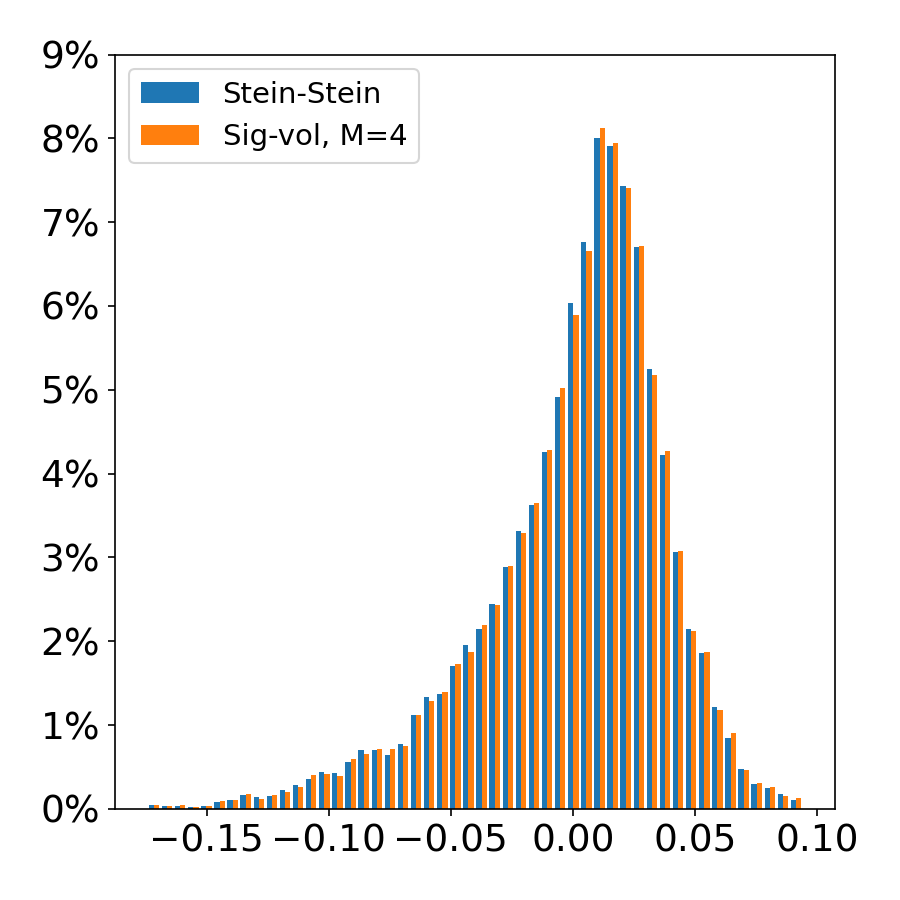}}}
        \quad
        \subfloat[\centering $T=$ 6 months, $K=1.0$.]{{\includegraphics[width=\threeplotswidth]{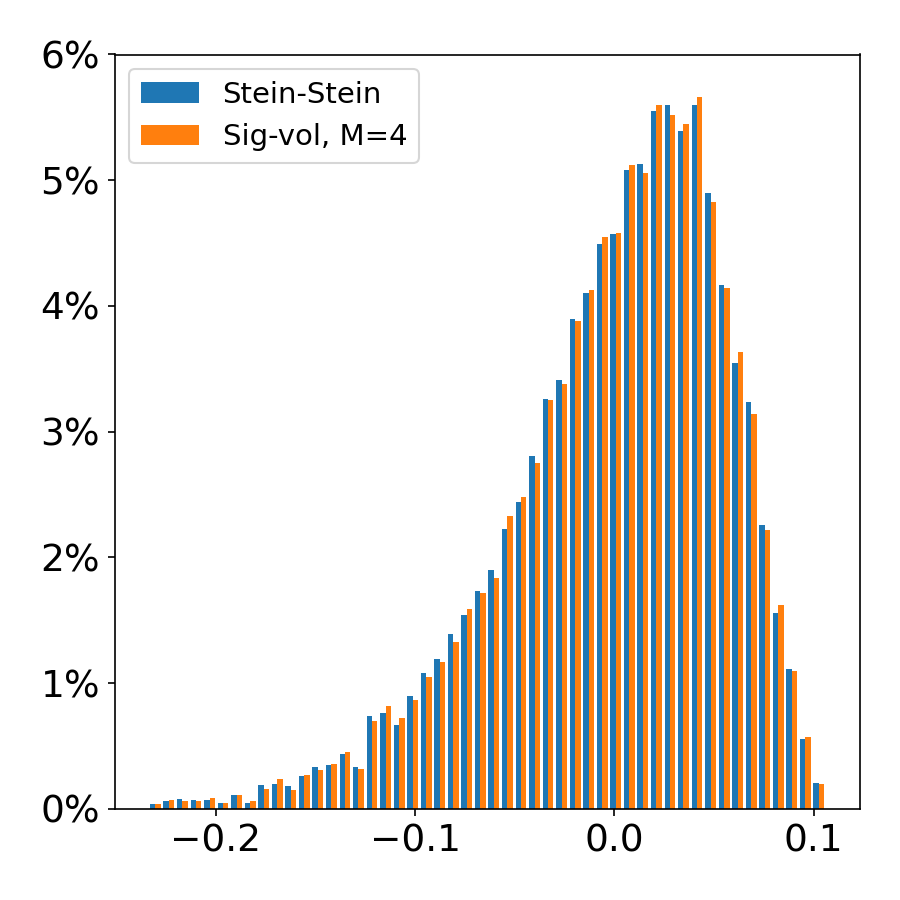}}}
        \quad
        \subfloat[\centering $T=$ 6 months, $K=1.3$.]{{\includegraphics[width=\threeplotswidth]{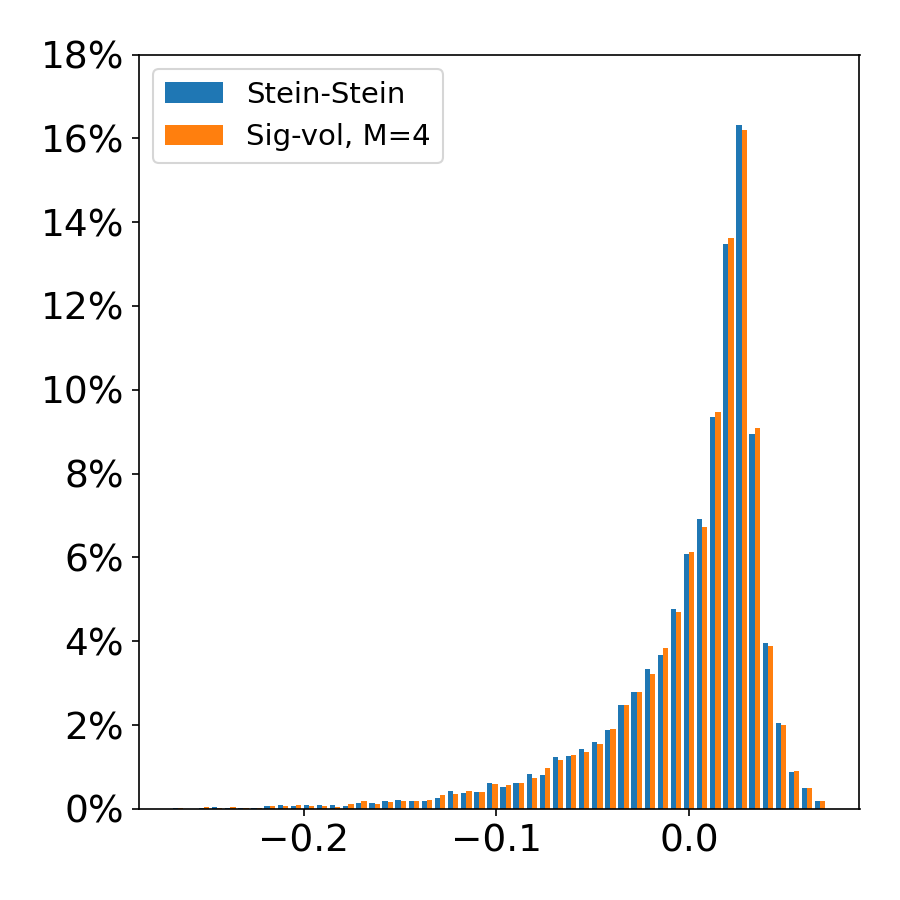}}}
        \caption{P\&L of Stein-Stein model (blue) vs signature Ornstein-Uhlenbeck volatility model (orange) quadratic hedging strategies for several maturities and strikes. $\kappa=1, \theta=0.25, \eta=1.2$ and $\rho=-0.6$.}
        \label{fig:hedging-euro-OU}
    \end{figure}
    
    \begin{table}[H]
        \centering
        \begin{tabular}{|c|c|c|c|c|c|c|}
            \hline
            \multirow{2}{*}{$J(X_0^*, \alpha^*)$}
            & \multicolumn{3}{|c|}{1 months} & \multicolumn{3}{|c|}{6 months} \\
            \cline{2-7}
                         & $K=0.9$     & $K=1.0$     & $K=1.1$     & $K=0.75$    & $K=1.0$     & $K=1.3$ \\
            \hline
            Stein-Stein  & 8.575e$-$05 & 2.067e$-$04 & 5.000e$-$05 & 1.484e$-$03 & 3.110e$-$03 & 1.859e$-$03 \\
            \hline
            Sig-vol, M=4 & 8.578e$-$05 & 2.067e$-$04 & 4.997e$-$05 & 1.489e$-$03 & 3.115e$-$03 & 1.900e$-$03 \\
            \hline
        \end{tabular}
        \caption{P\&L of Stein-Stein model (blue) vs signature Ornstein-Uhlenbeck volatility model (orange) quadratic hedging strategies for several maturities and strikes. $\kappa=1, \theta=0.25, \eta=1.2$ and $\rho=-0.6$.}
        \label{tab:hedging-euro}
    \end{table}

    Remark that both strategies mostly coincide. It illustrates that both the method in Section \ref{subsec:hedging-practical} works well within our framework and that the truncated signature representation of the volatility process is a good approximation when horizons are short enough, relative to the stiffness of the model's parameters.

\appendix

\section{Exact representations} \label{apn:linrep}
    
    \begin{proposition} \label{prop:coninv-shuexp}
        Whenever $\bell$ is of the form $\sum_{i \in \alphabet} \bell^{\word{i}} \word{i}$, with $\bell^{\word{i}} \in \R$, we have that 
        $$ \coninv{\emptyword - \bell} = \shuexp{\bell}. $$
        In particular, this implies
        $$ \shuexp{\bell} = \emptyword + \shuexp{\bell} \bell = \emptyword + \bell \shuexp{\bell}. $$
    \end{proposition}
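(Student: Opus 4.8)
The plan is to reduce both sides to their graded series expansions and show they agree level by level. By definition the resolvent is $\coninv{\emptyword - \bell} = \sum_{n \geq 0} \bell \conpow{n}$ and the shuffle exponential is $\shuexp{\bell} = \sum_{n \geq 0} \frac{1}{n!} \bell \shupow{n}$. Since $\bell$ is a linear combination of single letters, both $\bell \conpow{n}$ and $\bell \shupow{n}$ are concentrated in tensor level $n$, so each series has exactly one contribution per level and no convergence issue arises. Hence it suffices to prove the identity $\bell \shupow{n} = n! \, \bell \conpow{n}$ for every $n \geq 0$; the factor $n!$ then cancels the $1/n!$ in the definition of the shuffle exponential and yields $\coninv{\emptyword - \bell} = \shuexp{\bell}$.

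To establish $\bell \shupow{n} = n! \, \bell \conpow{n}$, I would first expand, using bilinearity, commutativity and associativity of the shuffle,
$$ \bell \shupow{n} = \sum_{i_1, \dots, i_n} \bell^{\word{i_1}} \cdots \bell^{\word{i_n}} \left( \word{i_1} \shuprod \cdots \shuprod \word{i_n} \right). $$
The heart of the argument is the combinatorial fact that the shuffle of $n$ single letters is their full symmetrization, $\word{i_1} \shuprod \cdots \shuprod \word{i_n} = \sum_{\sigma \in S_n} \word{i_{\sigma(1)} \cdots i_{\sigma(n)}}$, which I would prove by induction on $n$ from the recursive Definition~\ref{def:shuffleprod}, using that shuffling a word with a single trailing letter inserts that letter in each available position. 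Substituting this, swapping the two sums, and relabelling the summation indices by $j_k = i_{\sigma(k)}$ for each fixed $\sigma$ shows that every one of the $n!$ permutations contributes the same quantity $\sum_{j_1, \dots, j_n} \bell^{\word{j_1}} \cdots \bell^{\word{j_n}} \word{j_1 \cdots j_n} = \bell \conpow{n}$, giving $\bell \shupow{n} = n! \, \bell \conpow{n}$.

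For the ``in particular'' statement, I would not re-expand but read the fixed-point relation straight off the geometric series. Splitting off the $n=0$ term, $\sum_{n \geq 0} \bell \conpow{n} = \emptyword + \sum_{n \geq 1} \bell \conpow{n}$, and factoring $\bell$ out of the tail on the right (respectively the left) gives $\coninv{\emptyword - \bell} = \emptyword + \coninv{\emptyword - \bell}\, \bell = \emptyword + \bell\, \coninv{\emptyword - \bell}$; this is also an instance of Proposition~\ref{prop:resolvent} with $\bp = \emptyword$ and $\bq = \bell$, noting that $\bell^\emptyword = 0$ since $\bell$ contains only letters. Substituting the already-proven identity $\coninv{\emptyword - \bell} = \shuexp{\bell}$ into these two relations yields $\shuexp{\bell} = \emptyword + \shuexp{\bell}\, \bell = \emptyword + \bell\, \shuexp{\bell}$.

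I expect the main obstacle to be the symmetrization identity for the shuffle of single letters: this is precisely where the hypothesis that $\bell$ is a linear combination of letters is indispensable, since for elements with higher-level components the clean relation $\bell \shupow{n} = n! \, \bell \conpow{n}$ fails. The bookkeeping in the relabelling step, namely checking that each permutation really contributes an identical copy of $\bell \conpow{n}$, is routine but is the one place where care is needed.
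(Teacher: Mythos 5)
Your proof is correct, and it reduces the proposition to the same central identity as the paper, namely $\bell \shupow{n} = n!\, \bell \conpow{n}$, but it proves that identity via a different key lemma. The paper never expands $\bell$ in the letter basis: it shows by a short induction, applying the recursive shuffle definition once to $(\bell \shupow{n-1} \bell) \shuprod \bell$, that $\bell \shupow{n+1} = (n+1)\, \bell \shupow{n}\, \bell$, from which $\bell \shupow{n} = n!\, \bell \conpow{n}$ follows immediately. You instead expand $\bell \shupow{n}$ multilinearly into shuffles of single letters, prove the symmetrization formula $\word{i_1} \shuprod \cdots \shuprod \word{i_n} = \sum_{\sigma \in S_n} \word{i_{\sigma(1)} \cdots i_{\sigma(n)}}$ by induction, and then count the $n!$ identical contributions after relabelling. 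The paper's induction is shorter and avoids the combinatorial bookkeeping (the insertion lemma and the relabelling step), while your route establishes a more general, reusable classical fact and makes completely transparent why the hypothesis of level-one concentration is essential --- your closing remark that the identity fails for $\bell$ with higher-level components is exactly right. Your write-up also makes explicit two points the paper leaves implicit: that both series are graded with exactly one term per tensor level, so no convergence issue arises, and that the ``in particular'' clause is just the geometric-series fixed-point relation, i.e.\ Proposition~\ref{prop:resolvent} with $\bp = \emptyword$ and $\bq = \bell$.
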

    
    \begin{proof}
        First, it is easy to see by induction that $\bell \shupow{n} = n \bell \shupow{n-1} \bell$ for $n > 0$, i.e. $\bell \shupow{1} = \bell$ and
        \begin{align}
            \bell \shupow{n+1} &
            = \bell \shupow{n} \shuprod \bell 
            = n (\bell \shupow{n-1} \bell) \shuprod \bell
            = n (\bell \shupow{n-1} \bell + \bell \shupow{n-1} \shuprod \bell) \bell
            \\ &
            = (\bell \shupow{n} + n \bell \shupow{n}) \bell
            = (n+1) \bell \shupow{n} \bell.
        \end{align}
        It is then clear that $\frac{1}{n!} \bell \shupow{n} = \bell \conpow{n}$, hence proving the proposition.
    \end{proof}

\subsection{Proof of Proposition~\ref{prop:rep-OU}} \label{apn:OU}

    \begin{proposition}
        $\bsigma^\textnormal{OU} = \left( x \emptyword + \kappa \theta \word{1} + \eta \word{2} \right) \shuexp{-\kappa \word{1}}$ solves the equation 
        \begin{equation} \label{eq:OU-iterated-form}
            \bsigma^\textnormal{OU} = (x \emptyword + \kappa \theta \word{1} + \eta \word{2}) - \kappa \bsigma^\textnormal{OU} \word{1}.
        \end{equation}
    \end{proposition}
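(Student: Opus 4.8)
The plan is to recognize equation~\eqref{eq:OU-iterated-form} as an instance of the linear algebraic equation solved by the resolvent in Proposition~\ref{prop:resolvent}, and then to identify the resulting resolvent with a shuffle exponential using Proposition~\ref{prop:coninv-shuexp}. No new machinery is needed beyond these two results.

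First I would set $\bp := x \emptyword + \kappa \theta \word{1} + \eta \word{2}$ and $\bq := -\kappa \word{1}$, so that \eqref{eq:OU-iterated-form} reads $\bsigma^\textnormal{OU} = \bp + \bsigma^\textnormal{OU} \bq$. Since $\bq^\emptyword = 0$, Proposition~\ref{prop:resolvent} applies and guarantees that this equation has the \emph{unique} solution $\bsigma^\textnormal{OU} = \bp \coninv{\emptyword - \bq} = \bp \coninv{\emptyword + \kappa \word{1}}$. Then, because $\bq = -\kappa \word{1}$ is a linear combination of single letters, Proposition~\ref{prop:coninv-shuexp} identifies the resolvent with the shuffle exponential, $\coninv{\emptyword - \bq} = \shuexp{\bq} = \shuexp{-\kappa \word{1}}$, so that $\bsigma^\textnormal{OU} = \bp \shuexp{-\kappa \word{1}} = (x \emptyword + \kappa \theta \word{1} + \eta \word{2}) \shuexp{-\kappa \word{1}}$, which is exactly the claimed closed form.

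Alternatively, and perhaps cleaner since it sidesteps the uniqueness argument, I would verify the claim by direct substitution. Plugging $\bsigma^\textnormal{OU} = \bp \shuexp{-\kappa \word{1}}$ into the right-hand side of \eqref{eq:OU-iterated-form} and using the concatenation identity $\shuexp{-\kappa \word{1}} = \emptyword + \shuexp{-\kappa \word{1}}(-\kappa \word{1})$ from Proposition~\ref{prop:coninv-shuexp}, one obtains $-\kappa\, \bp \shuexp{-\kappa \word{1}} \word{1} = \bp \shuexp{-\kappa \word{1}} - \bp$, whence the right-hand side collapses to $\bp - \left( \bp - \bp \shuexp{-\kappa \word{1}} \right) = \bp \shuexp{-\kappa \word{1}} = \bsigma^\textnormal{OU}$. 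There is essentially no deep obstacle here: the only points requiring care are checking the hypothesis $\bq^\emptyword = 0$ so that the resolvent in \eqref{nota:coninv} is well defined, and correctly tracking the order of concatenation (the letter $\word{1}$ is appended on the right), since concatenation on $\eTA[2]$ is non-commutative and the identity from Proposition~\ref{prop:coninv-shuexp} must be used in its right-factored form.
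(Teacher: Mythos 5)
Your proposal is correct and matches the paper's approach: the paper's proof is simply ``Straightforward by applying Proposition~\ref{prop:coninv-shuexp},'' and both of your routes (the resolvent identification via Proposition~\ref{prop:resolvent} and the direct substitution using the identity $\shuexp{-\kappa\word{1}} = \emptyword + \shuexp{-\kappa\word{1}}(-\kappa\word{1})$) are exactly the intended unpacking of that one-line argument. Your verification computation is accurate, including the careful handling of right-concatenation of the letter $\word{1}$.
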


    \begin{proof}
        Straightforward by applying Proposition~\ref{prop:coninv-shuexp}.
    \end{proof}
    
    Using \cite[Theorem 4.2]{linearfbm}, one has
    $$ \bsigma^\textnormal{OU} = \left( x \emptyword + \kappa \theta \word{1} + \eta \word{2} \right) \shuexp{-\kappa \word{1}} \in \A. $$
    
    We can therefore define the process 
    $$ X_t = \bracketsig{\bsigma^\textnormal{OU}}. $$
    
    Now, showing
    \begin{align} \label{eq:sigma-OU-proj}
        \left( \bsigma^\textnormal{OU} \right)^\emptyword &
        = x \emptyword, \quad \bsigma^\textnormal{OU} \proj{1}
        = \kappa \theta \emptyword - \kappa \bsigma^\textnormal{OU},
        \\ \bsigma^\textnormal{OU} \proj{2} &
        = \eta \emptyword, \quad \bsigma^\textnormal{OU} \proj{22} 
        = 0.
    \end{align}
    makes it clear that $||\bsigma^\textnormal{OU}||_t^\I < \infty \text{ a.s.}$.
    Moreover, remarking that $\bsigma^\textnormal{OU} \proj{1} \word{1} = \bsigma^\textnormal{OU} - x \emptyword - \eta \word{2}$ by using \eqref{eq:OU-iterated-form}, allows us to write
    
    $$ \int_0^t\bracketsig[s]{\bsigma^\textnormal{OU} \proj{1}} \d s = \bracketsig{\bsigma^\textnormal{OU} \proj{1} \word{1}} < \infty, $$
    
    and thus have $\int_0^t ||\bsigma^\textnormal{OU}||_s^\I \d s < \infty \text{ a.s.}$ and $\bsigma^\textnormal{OU} \in \I$.

    We are now ready to apply Lemma~\ref{lem:sig-ito}
    \begin{align}
        \d Y_t &
        = d \bracketsig{\bsigma^\textnormal{OU}}
        \\ &
        = \bracketsig{ \bsigma^\textnormal{OU} \proj{1} + \tfrac{1}{2} \bsigma^\textnormal{OU} \proj{22}} \d t + \bracketsig{\bsigma^\textnormal{OU} \proj{2}} \d W_t.
        \\ &
        = \bracketsig{\kappa \theta \emptyword - \kappa \bsigma^\textnormal{OU}} \d t + \bracketsig{\eta \emptyword} \d W_t.
        \\ &
        = \left( \kappa \theta - \kappa \bracketsig{\bsigma^\textnormal{OU}} \right) \d t + \eta \d W_t
        \\ &
        = \kappa \left( \theta - Y_t \right) \d t + \eta \d W_t.
    \end{align}

    By uniqueness of the solution of the Ornstein-Uhlenbeck, the representation~\eqref{eq:linear-OU} follows. \\
    
    To get to the time-dependent representation~\eqref{eq:linear-OUtime}, we first observe that $\shuexp{\kappa \word{1}} \in \A$ for all $\kappa \in \R$, recall~\eqref{nota:shuexp}, i.e.
    
    $$ \norm{\shuexp{\kappa \word{1}}}_t^\A = \sum_{n=0}^\infty \left| \frac{(\kappa t)^n}{n!} \right| = e^{|\kappa t|} < \infty, $$
    and consequently that
    
    $$ e^{\kappa t} = \sum_{n \geq 0} \frac{(\kappa t)^n}{n!} = \bracketsig{\shuexp{\kappa \word{1}}}. $$

    Then, applying Proposition~\ref{prop:resolvent} together with~\eqref{eq:OU-iterated-form}, we can derive
    $$ \bsigma^\textnormal{OU} - \theta \emptyword = (x - \theta) \shuexp{ -\kappa \word{1}} + \eta \word{2} \shuexp{-\kappa \word{1}}. $$

    The final tool needed to complete this proof comes from the following Lemma.

    \begin{lemma} \label{lem:elli-exp}
        Let $\bell \in \eTA$, $\word{i} \in \alphabet$ and $\bm{b} := \sum_{\word{j} \in \alphabet} b^\word{j} \word{j}$ for $b^\word{j} \in \R$, then
        $$ \bell \word{i} \shuexp{\bm{b}} = \shuexp{\bm{b}} \shuprod \left( (\shuexp{-\bm{b}} \shuprod \bell) \word{i} \right). $$
    \end{lemma}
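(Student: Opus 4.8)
The plan is to reduce the claim to one auxiliary concatenation--shuffle identity and then substitute. Throughout I write $E := \shuexp{\bm b}$. Since $\bm b$ is a linear combination of letters, Proposition~\ref{prop:coninv-shuexp} gives $E = \coninv{\emptyword - \bm b}$, hence the concatenation relation $E = \emptyword + E \bm b$; moreover, because $\shuprod$ is commutative and associative the shuffle exponentials add in the exponent (binomial identity), so $\shuexp{\bm b} \shuprod \shuexp{-\bm b} = \emptyword$, i.e.\ $\shuexp{-\bm b}$ is the $\shuprod$-inverse of $E$. These two facts are the only structural inputs.

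The heart of the argument is the identity, valid for every $\bm d \in \eTA$ and every letter $\word i$:
\begin{align} \label{eq:aux-elli}
    E \shuprod (\bm d \word i) = (E \shuprod \bm d) \word i \, E.
\end{align}
To establish \eqref{eq:aux-elli} I set $\Phi := E \shuprod (\bm d \word i)$ and expand the left factor as $E = \emptyword + \sum_{\word j \in \alphabet} b^\word j \, E \word j$. Applying the recursive rule of Definition~\ref{def:shuffleprod} (extended bilinearly) to each $(E\word j) \shuprod (\bm d \word i)$, the terms obtained by concatenating $\word j$ collect into $\Phi \bm b$, while the terms obtained by concatenating $\word i$ collect into $\big((E\bm b) \shuprod \bm d\big)\word i$; using $E\bm b = E - \emptyword$ this last expression equals $(E \shuprod \bm d - \bm d)\word i$. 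After the $\bm d \word i$ terms cancel, one is left with the fixed-point equation $\Phi = (E \shuprod \bm d)\word i + \Phi \bm b$. Since $\bm b^\emptyword = 0$, Proposition~\ref{prop:resolvent} applies and yields the unique solution $\Phi = (E \shuprod \bm d)\word i \coninv{\emptyword - \bm b} = (E \shuprod \bm d)\word i\, E$, which is \eqref{eq:aux-elli}.

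Finally I specialize \eqref{eq:aux-elli} to $\bm d := \shuexp{-\bm b} \shuprod \bell$. The right-hand side of the Lemma is precisely $E \shuprod (\bm d \word i)$, and \eqref{eq:aux-elli} rewrites it as $(E \shuprod \bm d)\word i\, E$. By the inverse property above, $E \shuprod \bm d = \shuexp{\bm b} \shuprod \shuexp{-\bm b} \shuprod \bell = \bell$, so the expression collapses to $\bell \word i\, E = \bell\word i \shuexp{\bm b}$, which is the claim.

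The step I expect to be the main obstacle is \eqref{eq:aux-elli}: one must unwind the shuffle recursion on $E$ carefully so that the family of terms ending in a letter of $\bm b$ is correctly folded into $\Phi \bm b$ while the single term ending in $\word i$ is isolated, and then recognize the resulting relation as a resolvent equation rather than trying to resum the exponential series by hand. The surrounding manipulations are purely formal in $\eTA$ and are justified degree by degree, since each graded component of $E$ is a finite sum.
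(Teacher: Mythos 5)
Your proof is correct and follows essentially the same route as the paper's: both rest on the relation $\shuexp{\bm{b}} = \emptyword + \shuexp{\bm{b}} \bm{b}$, the shuffle recursion of Definition~\ref{def:shuffleprod}, the $\shuprod$-inverse $\shuexp{\bm{b}} \shuprod \shuexp{-\bm{b}} = \emptyword$, and the uniqueness statement of Proposition~\ref{prop:resolvent} applied to a fixed-point equation in right-concatenation by $\bm{b}$. The only difference is organizational: the paper shows that the difference $\bgamma$ of the two sides satisfies the homogeneous equation $\bgamma = \bgamma \bm{b}$ and hence vanishes, whereas you solve the inhomogeneous equation $\Phi = (\shuexp{\bm{b}} \shuprod \bm{d}) \word{i} + \Phi \bm{b}$ explicitly and then specialize $\bm{d} = \shuexp{-\bm{b}} \shuprod \bell$ --- the same computation, repackaged.
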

    
    \begin{proof}
        Let $\bgamma := \bell \word{i} \shuexp{\bm{b}} - \shuexp{\bm{b}} \shuprod \left( (\shuexp{-\bm{b}} \shuprod \bell) \word{i} \right)$
        \begin{align*}
            \bgamma &
            = \bell \word{i} \left( \emptyword + \shuexp{\bm{b}} \bm{b} \right) - \left( \emptyword + \shuexp{\bm{b}} \bm{b} \right) \shuprod \left( (\shuexp{-\bm{b}} \shuprod \bell) \word{i} \right)
            \\ &
            = \bell \word{i} + \bell \word{i} \shuexp{\bm{b}} \bm{b} - (\shuexp{-\bm{b}} \shuprod \bell) \word{i}
            - \left[ (\shuexp{\bm{b}} \bm{b}) \shuprod (\shuexp{-\bm{b}} \shuprod \bell) \right] \word{i}
            - \left[ \shuexp{\bm{b}} \shuprod \left( (\shuexp{-\bm{b}} \shuprod \bell) \word{i} \right) \right] \bm{b}
            \\ &
            = \bell \word{i} - (\shuexp{\bm{b}} \shuprod \shuexp{-\bm{b}} \shuprod \bell) \word{i} + \bgamma \bm{b}
            \\ &
            = \bgamma \bm{b}
            = 0.
        \end{align*}
    \end{proof}
    It is then easy to see that
    $$ (\bsigma^\textnormal{OU} - \theta \emptyword) \shuprod \shuexp{ \kappa \word{1}} = (x - \theta) \emptyword + \eta \shuexp{\kappa \word{1}} \word{2} $$
    
    Moreover, thanks to Proposition~\ref{prop:shufflepropertyextended}, and the fact that $\theta \emptyword \in \A$ for all $\theta \in \R$, $(\bsigma^\textnormal{OU} - \theta \emptyword) \shuprod \shuexp{\kappa \word{1}} \in \A$. This allows us to write
    $$ X_t = \theta + e^{-\kappa t} \bracketsig{\shuexp{\kappa \word{1}}} \left( \bracketsig{\bsigma^\textnormal{OU}} - \theta \right) = \bracketsig{\theta \emptyword + e^{-\kappa t} \left( (x - \theta) \emptyword + \eta \shuexp{\kappa \word{1}} \word{2} \right)}, $$
    which concludes the proof of representation~\eqref{eq:linear-OUtime}.

\subsection{Proof of Proposition~\ref{prop:rep-mGBM}} \label{apn:mGBM}

    \begin{proposition}
        $\bsigma^\textnormal{mGBM} = \left( y \emptyword + \left( \kappa \theta - \frac{\alpha \eta}{2} \right) \word{1} + \eta \word{2} \right) \shuexp{-\left( \kappa + \frac{\alpha^2}{2} \right) \word{1} + \alpha \word{2}}$ solves the equation 
        \begin{align}
            \bsigma^\textnormal{mGBM} = \left( y \emptyword + \left( \kappa \theta - \frac{\alpha \eta}{2} \right) \word{1} + \eta \word{2} \right) + \bsigma^\textnormal{mGBM} \left( - \left( \kappa + \frac{\alpha^2}{2} \right) \word{1} + \alpha \word{2} \right).
        \end{align}
    \end{proposition}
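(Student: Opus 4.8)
The statement has exactly the shape handled by the resolvent formalism of the tensor algebra, so the plan is to recognise it as a linear algebraic equation and then invoke Propositions~\ref{prop:resolvent} and~\ref{prop:coninv-shuexp}, mirroring the Ornstein--Uhlenbeck computation in Appendix~\ref{apn:OU}. Write $\bp := y \emptyword + \left( \kappa \theta - \frac{\alpha \eta}{2} \right) \word{1} + \eta \word{2}$ and $\bq := -\left( \kappa + \frac{\alpha^2}{2} \right) \word{1} + \alpha \word{2}$, so that the equation to be solved is precisely $\bsigma^\textnormal{mGBM} = \bp + \bsigma^\textnormal{mGBM} \bq$ and the claimed solution is $\bsigma^\textnormal{mGBM} = \bp \shuexp{\bq}$. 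The first thing I would record is that $\bq$ is a linear combination of the single letters $\word{1}$ and $\word{2}$, so in particular $\bq^\emptyword = 0$; this is the only structural fact needed to bring both propositions into play, and it holds here even though $\bq$ now carries an $\alpha \word{2}$ term (unlike the OU case, where only $\word{1}$ appeared).

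With that observed, I would proceed by direct verification. Since $\bq$ is a single-letter combination, Proposition~\ref{prop:coninv-shuexp} supplies the identity $\shuexp{\bq} = \emptyword + \shuexp{\bq} \bq$. Concatenating with $\bp$ on the left and using associativity of the concatenation product,
\begin{align*}
    \bp \shuexp{\bq} = \bp \left( \emptyword + \shuexp{\bq} \bq \right) = \bp + \left( \bp \shuexp{\bq} \right) \bq.
\end{align*}
Substituting $\bsigma^\textnormal{mGBM} = \bp \shuexp{\bq}$ turns the right-hand side into $\bp + \bsigma^\textnormal{mGBM} \bq$, which is exactly the asserted equation, and since $\bp \emptyword = \bp$ (the empty word being the concatenation unit) the first equality is automatic.

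Alternatively, and more conceptually, I would note that $\bq^\emptyword = 0$ lets me apply Proposition~\ref{prop:resolvent} directly: the unique $\eTA[2]$-valued solution of $\bsigma = \bp + \bsigma \bq$ is $\bp \coninv{\emptyword - \bq}$, and Proposition~\ref{prop:coninv-shuexp} identifies this resolvent with $\shuexp{\bq}$ precisely because $\bq$ is a combination of single letters. This yields both the stated closed form and its uniqueness in one stroke. I do not anticipate a genuine obstacle here: the computation is structurally identical to the OU proof, the only substantive change being that the driving datum $\bq$ now contains the $\alpha \word{2}$ contribution in addition to the $\word{1}$ one. The single point to keep an eye on is that Proposition~\ref{prop:coninv-shuexp} is stated for \emph{arbitrary} single-letter combinations $\sum_{\word{i} \in \alphabet} \bell^{\word{i}} \word{i}$, so nothing in its proof depends on which letters actually appear, and it therefore applies verbatim to this $\bq$.
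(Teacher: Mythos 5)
Your proof is correct and matches the paper's approach exactly: the paper's own proof is the one-line remark ``Straightforward by applying Proposition~\ref{prop:coninv-shuexp}'', and your verification via $\shuexp{\bq} = \emptyword + \shuexp{\bq}\bq$ followed by left-concatenation with $\bp$ is precisely the computation that remark leaves implicit. Your observation that the proposition applies to arbitrary single-letter combinations (so the extra $\alpha\word{2}$ term changes nothing) is the right point to flag, and the resolvent route via Proposition~\ref{prop:resolvent} is a valid bonus giving uniqueness.
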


    \begin{proof}
        Straightforward by applying Proposition~\ref{prop:coninv-shuexp}.
    \end{proof}
    
    Using \cite[Theorem 4.2]{linearfbm}, one has
    $$ \bsigma^\textnormal{mGBM} = \left( y \emptyword + \left( \kappa \theta - \frac{\alpha \eta}{2} \right) \word{1} + \eta \word{2} \right) \shuexp{-\left( \kappa + \frac{\alpha^2}{2} \right) \word{1} + \alpha \word{2}} \in \A. $$
    
    We can therefore define the process 
    $ Y_t = \bracketsig{\bsigma^\textnormal{mGBM}}. $ Moreover, showing
    \begin{align} \label{eq:sigma-mGBM-proj}
         \left( \bsigma^\textnormal{mGBM} \right)^\emptyword &
         = y \emptyword, \quad \bsigma^\textnormal{mGBM} \proj{1} 
         = \left( \kappa \theta - \tfrac{\alpha \eta}{2} \right) \emptyword - \left( \kappa + \tfrac{\alpha^2}{2} \right) \bsigma^\textnormal{mGBM},
         \\ \bsigma^\textnormal{mGBM} \proj{2} &
         = \eta \emptyword + \alpha \bsigma^\textnormal{mGBM},
         \quad \bsigma^\textnormal{mGBM} \proj{22} 
         = \alpha \eta \emptyword + \alpha^2 \bsigma^\textnormal{mGBM},
    \end{align}
    makes it clear that $\norm{\bsigma^\textnormal{mGBM}}_t^\I < \infty$. As it requires much more care to prove $\int_0^t \norm{\bsigma^\textnormal{mGBM}}_s^\I \d s < \infty$, it is left for the interested reader to refer to \cite[Section 5]{linearfbm} for a detailed proof. \\
    
    We can thus finally use Lemma~\ref{lem:sig-ito}
    \begin{align}
        \d Y_t &
        = d \bracketsig{\bsigma^\textnormal{mGBM}}
        \\ &
        = \bracketsig{\bsigma^\textnormal{mGBM} \proj{1} + \tfrac{1}{2} \bsigma^\textnormal{mGBM} \proj{22}} \d t + \bracketsig{\bsigma^\textnormal{mGBM} \proj{2}} \d W_t.
        \\ &
        = \bracketsig{\left( \kappa \theta - \tfrac{\alpha \eta}{2} \right) \emptyword - \left( \kappa + \tfrac{\alpha^2}{2} \right) \bsigma^\textnormal{mGBM} + \tfrac{1}{2} \left( \alpha \eta \emptyword + \alpha^2 \bsigma^\textnormal{mGBM} \right)} \d t
        + \bracketsig{\eta \emptyword + \alpha \bsigma^\textnormal{mGBM}} \d W_t
        \\ &
        = \left( \kappa \theta - \kappa \bracketsig{\bsigma^\textnormal{mGBM}} \right) \d t + \left( \eta + \alpha \bracketsig{ \bsigma^\textnormal{mGBM}} \right) \d W_t
        \\ &
        = \kappa \left( \theta - Y_t \right) \d t + \left( \eta + \alpha Y_t \right) \d W_t 
    \end{align}

    By uniqueness of the solution of the mean-reverting geometric Brownian motion, the representation~\eqref{eq:linear-mGBM} follows. \\

    The time-dependent representation~\eqref{eq:linear-mGBMtime} follows from the fact that $\shuexp{\lambda \word{1}} \in \A$ for all $\lambda \in \R$ and an application of Proposition~\ref{prop:shufflepropertyextended}.

\section{Numerical implementation} \label{apn:algo}
    
    \begin{algorithm}[H] \label{alg:pricing}
        \caption{Signature-volatility Fourier pricing}
        \textbf{Input}:
        \begin{itemize}
            \item Fix $J > 0$ the number of points in the discretization of $[0, T]$,
            \item Fix $L > 0$ the number of points in the Gauss-Laguerre quadrature of the Fourier integral,
            \item Fix $M \geq 0$ the truncation order of $\bsigma$ and consequently $\tilde{M} = 2M$,
            \item Fix $\bsigma \in \R^J \times \R^N$ where $N = 2^{M+1} - 1$.
        \end{itemize}

        \textbf{Define}:
        \begin{itemize}
            \item $\Delta_j = t_{j-1} - t_j$,
            \item $F(t, \bpsi_t, u) := (\bpsi_t \proj{2}) \widetilde{\shuprod} (\frac{1}{2} \bpsi_t \proj{2} + f(t, u) \rho \bsigma_t) + \frac{1}{2} \bpsi_t \proj{22} + \bpsi_t \proj{1} + \left( \frac{f(t, u)^2 - f(t, u)}{2} + g(t, u) \right) \bsigma_t \shupow{2}$.
        \end{itemize}
        
        \textbf{Offline}:
        \begin{itemize}
            \item Compute the projection operators $\proj{\word{v}}: \tTA{\tilde{M}} \mapsto \tTA{\tilde{M} - |\word{v}|}$ for $\word{v} \in \{\word{1}, \word{2}, \word{22} \}$ as sequences of indices,
            \item Compute the truncated shuffle product operator $\widetilde{\shuprod}: \tTA{\tilde{M}} \times \tTA{\tilde{M}} \mapsto \tTA{\tilde{M}}$ as a sequence of 4-tuples, three coordinates and one count,
            \item Compute $\bsigma \shupow{2} \in \R^J \times \R^{\tilde{N}}$ where $\tilde{N} = 2^{\tilde{M}+1} - 1$,
            \item Compute the weights and points in the Gauss-Laguerre quadrature $(w_i, u_i)_{1 \leq i \leq L}$,
            \item Compute $C_0^{\textnormal{BS}}$ and $\phi_0^\textnormal{BS}(u_i)$ the price and characteristic function of the control variate for all $u_i$, $1 \leq i \leq L$.
        \end{itemize}

        \textbf{Online}:
        \begin{enumerate}
            \item Initialize $\bpsi \in \R^L \times \R^J \times \R^{\tilde{N}}$ where $\bpsi_{t_J} = \bpsi_T = 0$,
            \item Run the Runge-Kutta method of order 4 on $F$ for all $u_i$, $1 \leq i \leq L$, i.e.
            \begin{enumerate}
                \item $k_1 \leftarrow F(t_j, \bpsi_{t_j}, u)$,
                \item $k_2 \leftarrow F(t_j + \frac{\Delta_j}{2}, \bpsi_{t_j} + \frac{\Delta_j}{2} k_1, u)$,
                \item $k_3 \leftarrow F(t_j + \frac{\Delta_j}{2}, \bpsi_{t_j} + \frac{\Delta_j}{2} k_2, u)$,
                \item $k_4 \leftarrow F(t_{j-1}, \bpsi_{t_j} + \Delta_j k_3, u)$,
                \item $\bpsi_{t_{j-1}} \leftarrow \bpsi_{t_j} + \frac{\Delta_j}{6} (k_1 + 2 k_2 + 2 k_3 + k_4)$.
            \end{enumerate}
            \item Compute the Fourier/Laplace integral in the rhs of \eqref{eq:control-variate} or \eqref{eq:laplace_qvol} as a weighted sum on all $(w_i, u_i)_{1 \leq i \leq L}$.
        \end{enumerate}
    \end{algorithm}

\section{Asian options} \label{apn:asian}

    In this section, we illustrate Sections~\ref{sec:pricing} and \ref{sec:hedging} with Asian options on the geometric average $\frac{1}{T} \int_0^T \log S_s \d s$. It can be easily recovered by setting $f(s) := iu \frac{T-s}{T}$ in \eqref{eq:charfun} since 
    \begin{align} \label{ex:geom-average}
        \int_t^T f(s) \d \log S_s = iu \frac{1}{T} \int_t^T \log S_s \d s - iu \frac{T-t}{T} \log S_t.
    \end{align}

    We will denote the geometric average by $\bar{S}_t = \exp \left( \frac{1}{T} \int_0^t \log S_s \d s \right)$ and $\bar{M}_t(u)$ for \eqref{eq:charfun} with $f(s) := iu \frac{T-s}{T}$ and $g=0$. Finally, the price of the geometric Asian call option will be denoted by $\bar{C}_t(T, K) = \E [(\bar{S}_T - K)^+ | \F_t]$.

\subsection{Pricing} \label{subsec:pricing-asiat}

    Similarly to \eqref{eq:control-variate}, we obtain the Fourier representation
    \begin{equation} \label{eq:control-variate-asian}
        \bar C_t(\bar{S}_t; T, K) = \bar C_t^{\textnormal{BS}}(\bar{S}_t; T, K) - \frac{K}{\pi} \int_0^\infty \Re \left[ e^{i (u - \frac{i}{2}) \bar{k}_t} \left( \bar \phi_t \left( u - \tfrac{i}{2} \right) - \bar \phi_t^{\textnormal{BS}} \left( u - \tfrac{i}{2} \right) \right) \right] \frac{\d u}{\left( u^2 + \tfrac{1}{4} \right)},
    \end{equation}
    where $\bar{\phi}$ is the characteristic function $\bar{\phi}_t(u) = \E \Big[ e^{iu \log \frac{\bar{S}_T}{\bar{S}_t}} \Big| \F_t \Big]$ and $\bar{k}_t = \log \frac{\bar S_t}{K}$
    with
    \begin{equation} \label{cf-black-scholes-asiat}
        \bar{\phi}_t^{\textnormal{BS}} = \exp \left( - \frac{\sigma_{\textnormal{BS}}^2}{2} \left( u^2 \frac{(T-t)^3}{3T^2} + iu \frac{(T-t)^2}{2T} \right) \right),
    \end{equation}
    and where
    \begin{equation} \label{eq:bar-C-BS}
        \bar{C}_t^{\textnormal{BS}}(\bar{S}_t; T, K) := \mathcal{N}(\bar{d}_1) \bar S_t \exp \left( \frac{T-t}{T} \log S_t - \frac{\sigma^2_{\textnormal{BS}}}{2} \left( \frac{(T-t)^2}{2 T} - \frac{(T-t)^3}{3 T^2} \right) \right) + \mathcal{N}(\bar{d}_2) K,
    \end{equation}
    and
    \begin{equation}
        \bar{d}_2 := \frac{1}{\sigma_{\textnormal{BS}} \sqrt{\frac{(T-t)^3}{3 T^2}}} \left( \log \frac{\bar{S}_t}{K} + \frac{T-t}{T} \log S_t - \frac{\sigma^2_{\textnormal{BS}}}{2} \frac{(T-t)^2}{2T} \right), \quad \bar{d}_1 := \bar{d}_2 - \sigma_{\textnormal{BS}} \sqrt{\frac{(T-t)^3}{3T^2}}.
    \end{equation}

    In Figure \ref{fig:pricing-asian-OU}, we compare our signature volatility Fourier pricing for Asian options using the truncated, at order $M=4$, representation of the linear Ornstein-Uhlenbeck \eqref{eq:linear-OU} to Monte Carlo simulations. Lewis' approach together with Black-Scholes control variate was also used.
    
    \begin{figure}[H]
        \centering
        \subfloat[\centering Maturity 1 week.]{{\includegraphics[width=\twoplotswidth]{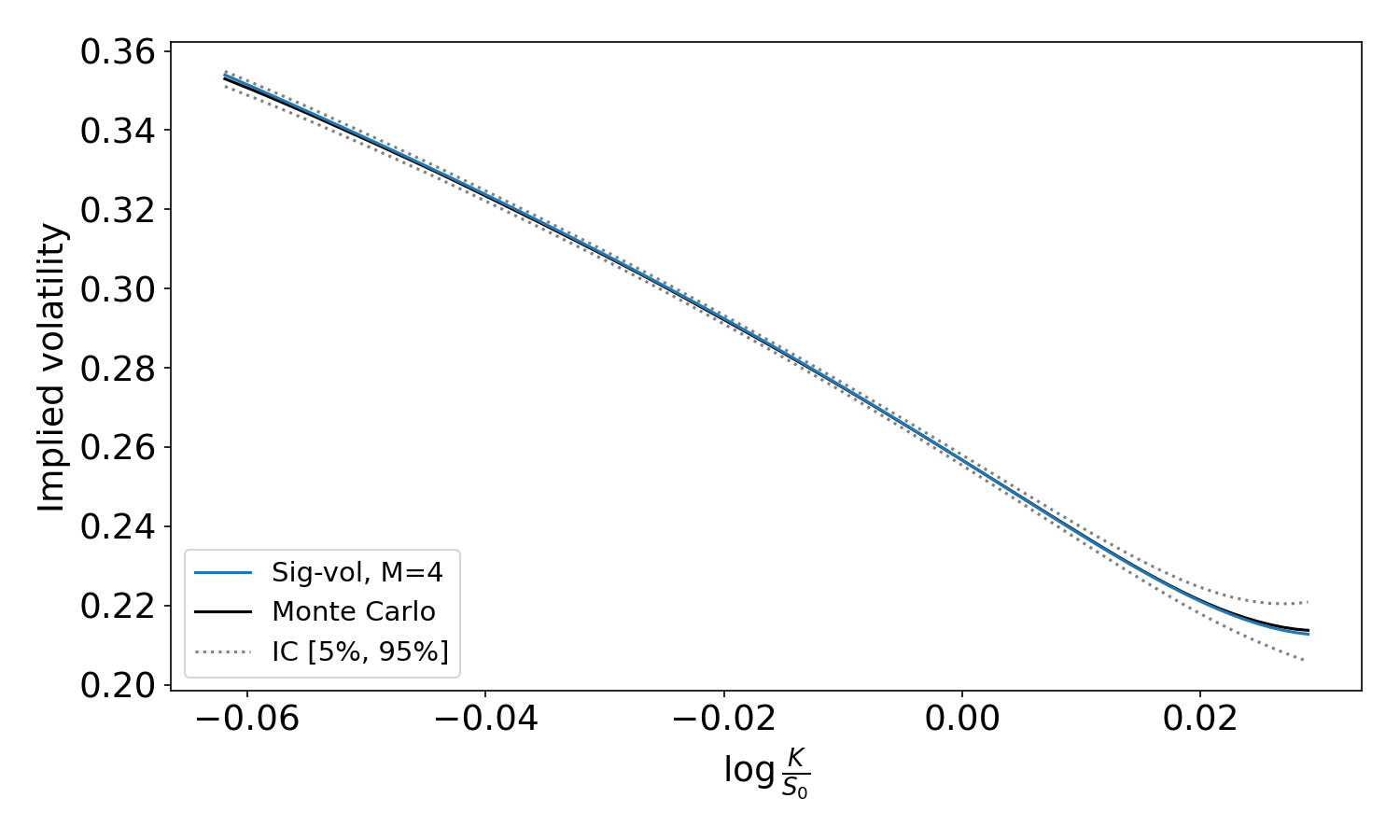}}}
        \quad
        \subfloat[\centering Maturity 1 year.]{{\includegraphics[width=\twoplotswidth]{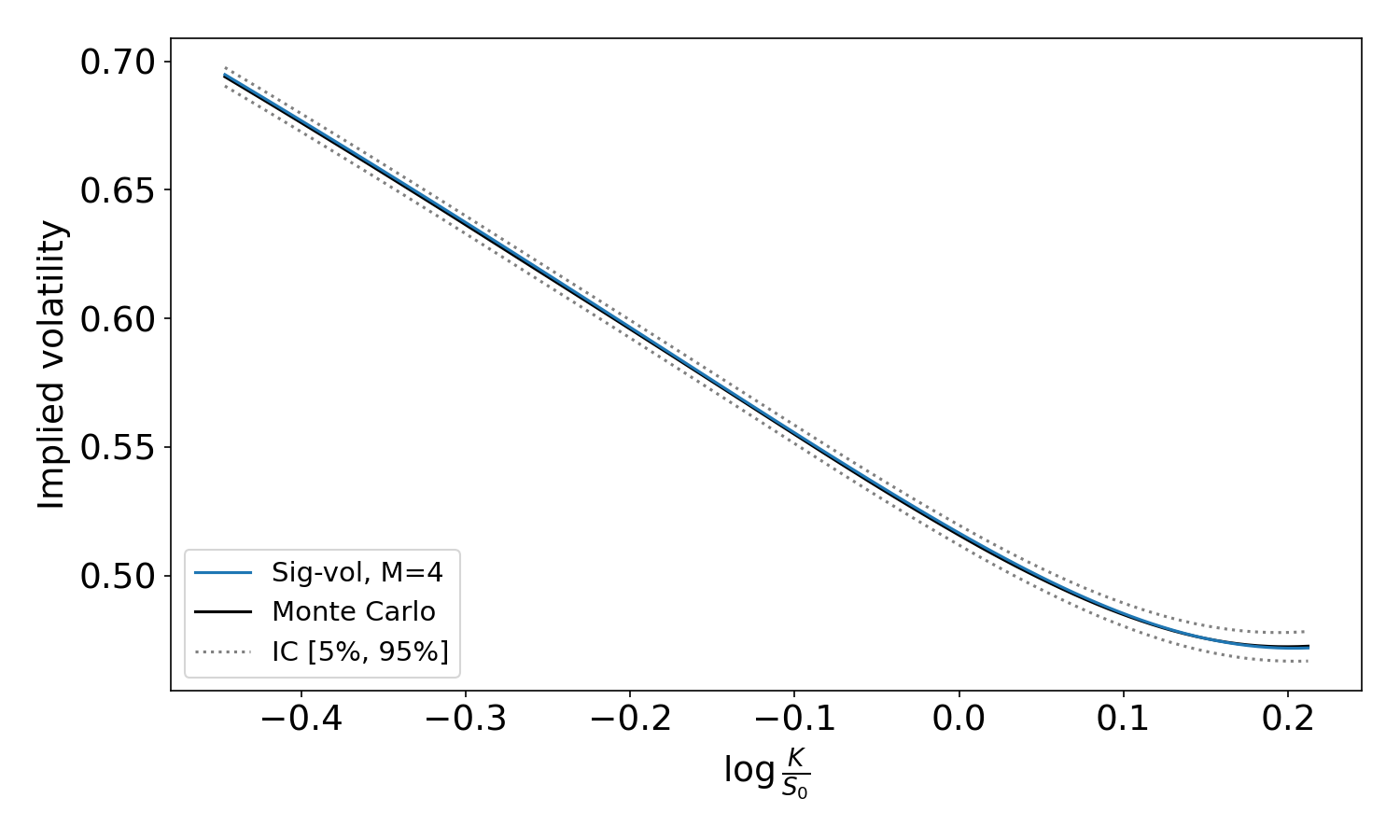}}}
        \caption{Asian put Monte Carlo pricing of Stein-Stein model vs signature Ornstein-Uhlenbeck volatility model. $\kappa=1, \theta=0.25, \eta=1.2$ and $\rho=-0.9$.}
        \label{fig:pricing-asian-OU}
    \end{figure}

\subsection{Hedging} \label{subsec:hedging-asiat}

    In the same spirit as for the European call option, we will express the Fourier inversion formula \eqref{eq:control-variate} in terms of the process $\bar{M}_t(u)$ and apply Itô. Using the same notations as in Section \ref{subsec:pricing-asiat}, for all $t \leq T$
    \begin{align}
        \exp \left( \bracketsig{\bpsi_t} \right) &
        = \E \left[ \left. \exp \left( \int_t^T f(s) \d \log S_s \right) \right| \F_t \right]
        \\ &
        = \E \left[ \left. \exp \left( iu \log \frac{\bar{S}_T}{\bar{S}_t} - iu \frac{T-t}{T} \log S_t \right) \right| \F_t \right],
    \end{align}
    so that, recall \eqref{eq:U},
    \begin{align*}
        \bar{\phi}_t(u) &
        = \exp \left( \bracketsig{\bpsi_t} + iu \frac{T-t}{T} \log S_t \right)
        = \bar{M}_t(u) \exp \left( - \int_0^t f(s) \d \log S_s + iu \frac{T-t}{T} \log S_t \right)
        = \bar{M}_t(u) e^{-iu \log \bar{S}_t}.
    \end{align*}
    
    Thus one can now write $\bar{C}$ as a Fourier integral on $\bar{M}$
    \begin{align*}
        \bar{C}_t(\bar{S}_t; T, K) &
        = \bar{S}_t - \frac{K}{\pi} \int_0^\infty \Re \left[ e^{-i (u - \frac{i}{2}) \log K} M_t \left( u - \tfrac{i}{2} \right) \right] \frac{\d u}{\left( u^2 + \tfrac{1}{4} \right)},
    \end{align*}
    together with its Black-Scholes control variate version
    \begin{equation} \label{eq:control-variate-M-asiat}
        \bar{C}_t(\bar{S}_t; T, K) = \bar{C}_t^{\textnormal{BS}}(\bar{S}_t; T, K) - \frac{K}{\pi} \int_0^\infty \Re \left[ e^{i (u - \frac{i}{2}) \log K} \left( \bar{M}_t \left( u - \tfrac{i}{2} \right) - \bar{M}_t^{\textnormal{BS}} \left( u - \tfrac{i}{2} \right) \right) \right] \frac{\d u}{\left( u^2 + \tfrac{1}{4} \right)}
    \end{equation}
    with $\bar{M}_t^{\textnormal{BS}}(u) := \bar{\phi}_t^{\textnormal{BS}}(u) e^{iu \log \bar{S}_t}$, where $\bar{\phi}^{\textnormal{BS}}$ is defined in \eqref{cf-black-scholes-asiat} and $\bar{C}^{\textnormal{BS}}$ in \eqref{eq:bar-C-BS}. Finally, $w(u)$ and $\tilde{u}$ are defined as in Section~\ref{subsec:hedging-practical} and with very similar computations, one can get $\bar{Z}$ and $\bar{Z}^\perp$ as follows
    \begin{align*}
        \bar{Z}_t &
        = \Sigma_t S_t \bar{\Delta}_t^{\textnormal{BS}} \rho + \int_0^\infty \Re \left[ \bar{\zeta}_t(\tilde{u}) w(u) \right] \d u
        \quad \bar{Z}_t^\perp 
        = \Sigma_t S_t \bar{\Delta}_t^{\textnormal{BS}} \sqrt{1 - \rho^2} + \int_0^\infty \Re \left[ \bar{\zeta}_t^\perp(\tilde{u}) w(u) \right] \d u,
        \end{align*}
        with
            \begin{align*}
        \bar{\zeta}_t(u) :&
        = \frac{T-t}{T} iu \Sigma_t \left( \bar{M}_t(u) - \bar{M}_t^{\textnormal{BS}}(u) \right) \rho + \bar{M}_t(u) \bracketsig{\bpsi_t(u) \proj{2}}
        \\ \bar{\zeta}_t^\perp(u) :&
        = \frac{T-t}{T} iu \Sigma_t \left( \bar{M}_t(u) - \bar{M}_t^{\textnormal{BS}}(u) \right) \sqrt{1 - \rho^2},
    \end{align*}
    where $\bar{\Delta}_t^{\textnormal{BS}} = \frac{\partial}{\partial \bar{S}_t} \bar{C}_t^{\textnormal{BS}}$.
    
    This allows us to solve the quadratic hedging problem in \eqref{eq:hedgingoptimal} numerically for Asian call and put options in the framework of sig-volatility models. \\

    In Figure \ref{fig:hedging-asiat-mGBM}, we simulate price trajectories under mean-reverting geometric Brownian motion volatility model and compare the performance of the Black-Scholes hedging strategy, simply as a point of reference with $\sigma_{\textnormal{BS}} = \theta$, to the Fourier hedging of the signature mean-reverting geometric Brownian motion volatility model for an Asian put option.

    Remark that the signature volatility systematically outperforms, in terms of minimized squared P\&L, the naive Black-Scholes quadratic hedging strategy by 10 to 25\%, see Table~\ref{tab:hedging-asiat}. This suggests that our framework might specifically be relevant for path-dependent options and more complex volatility dynamics where hedging strategies are not  known explicitly or tractable.
    
    \begin{figure}[H]
        \centering
        \subfloat[\centering $T=$ 1 month, $K=0.95$.]{{\includegraphics[width=\threeplotswidth]{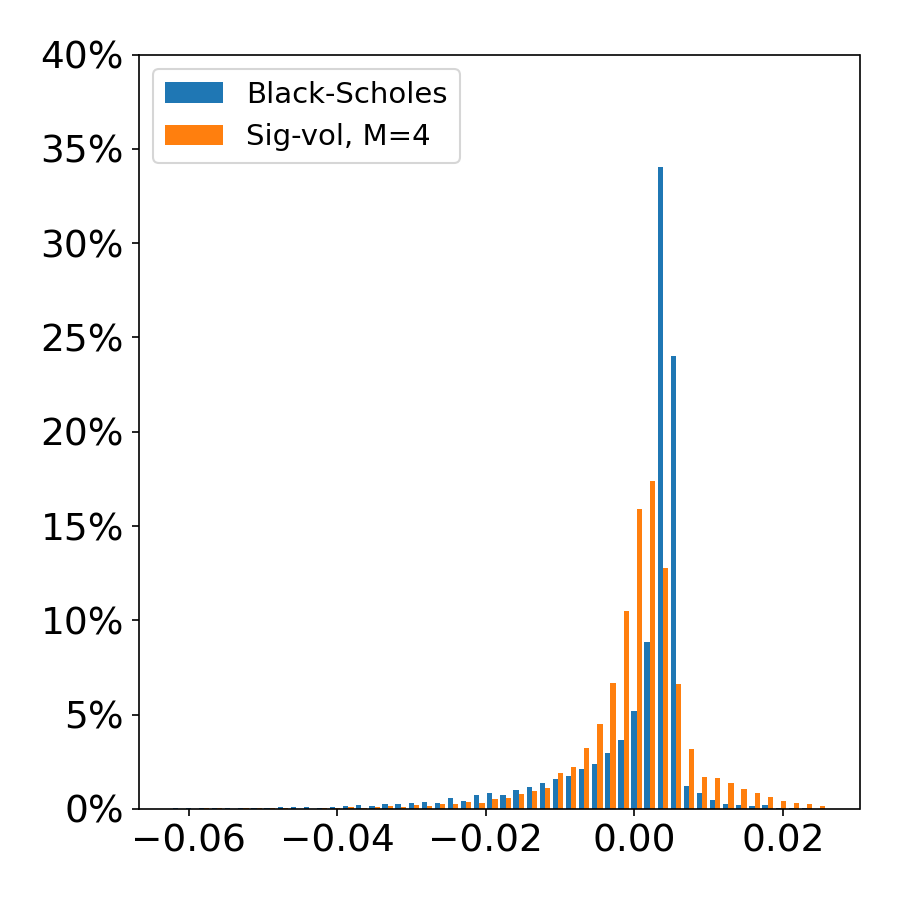}}}
        \quad
        \subfloat[\centering $T=$ 1 month, $K=1.0$.]{{\includegraphics[width=\threeplotswidth]{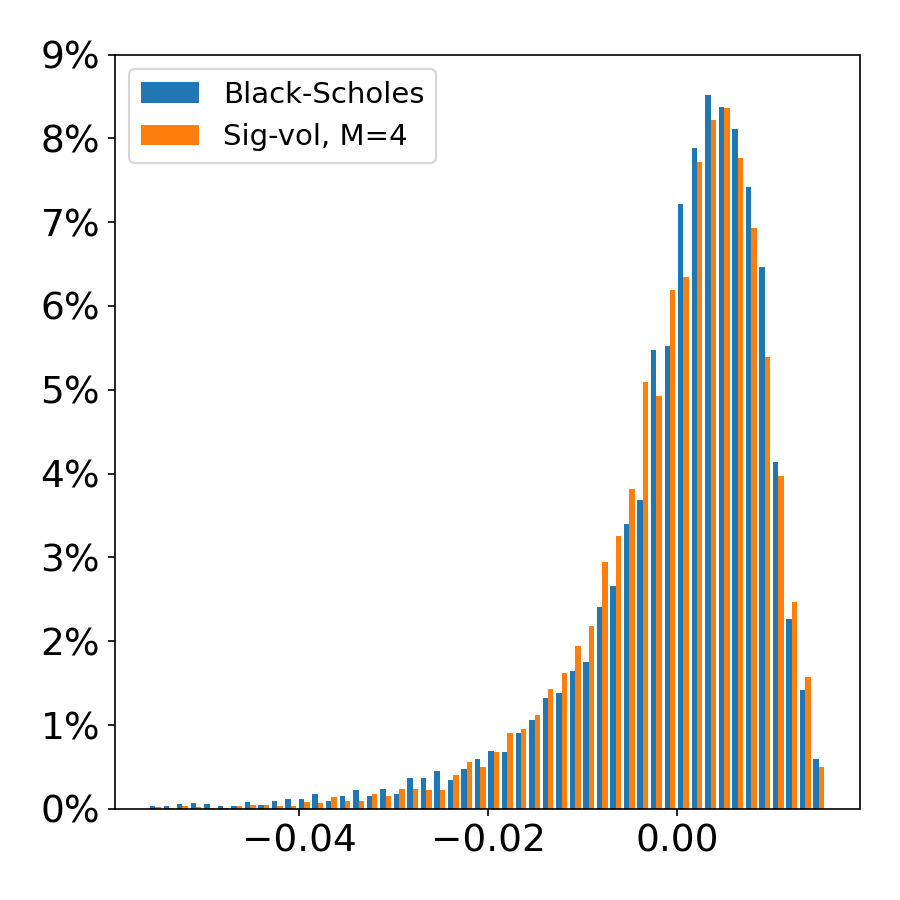}}}
        \quad
        \subfloat[\centering $T=$ 1 month, $K=1.05$.]{{\includegraphics[width=\threeplotswidth]{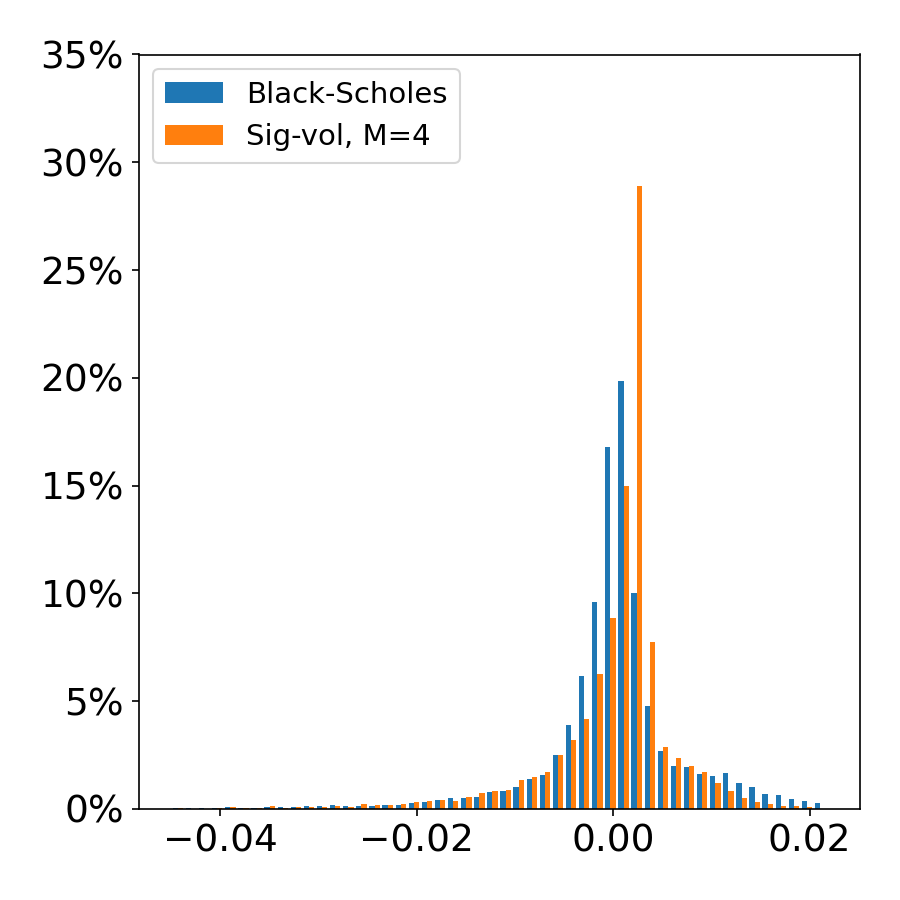}}}
        \\
        \subfloat[\centering $T=$ 6 months, $K=0.85$.]{{\includegraphics[width=\threeplotswidth]{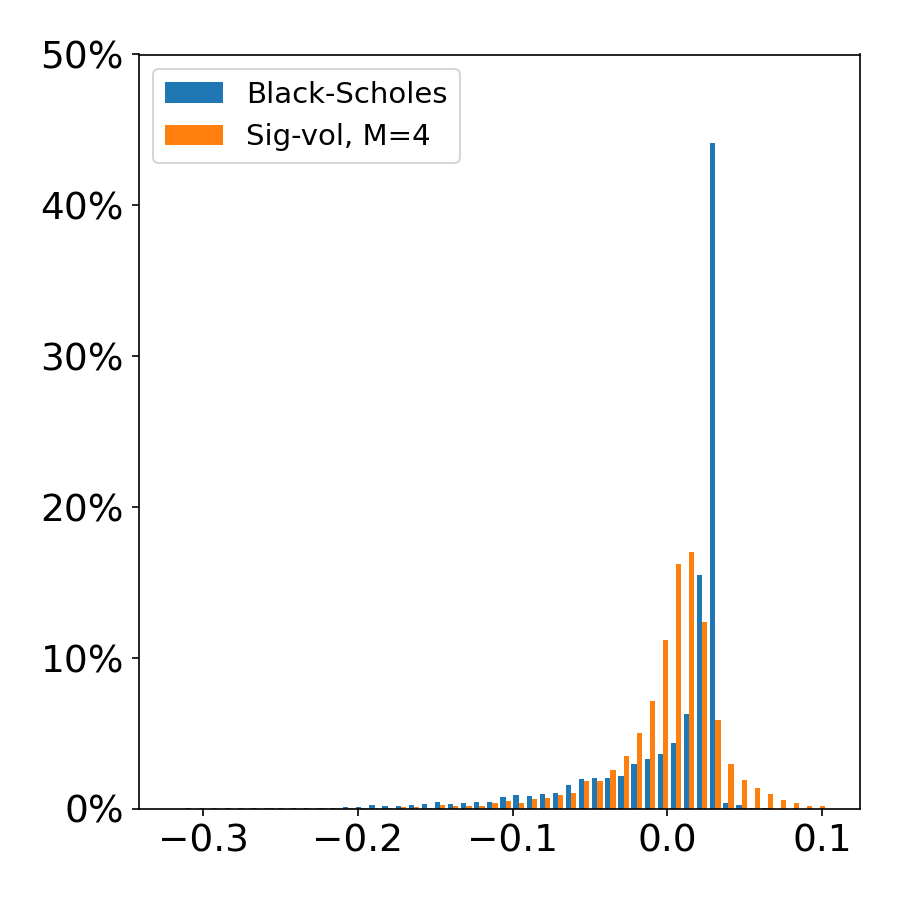}}}
        \quad
        \subfloat[\centering $T=$ 6 months, $K=1.0$.]{{\includegraphics[width=\threeplotswidth]{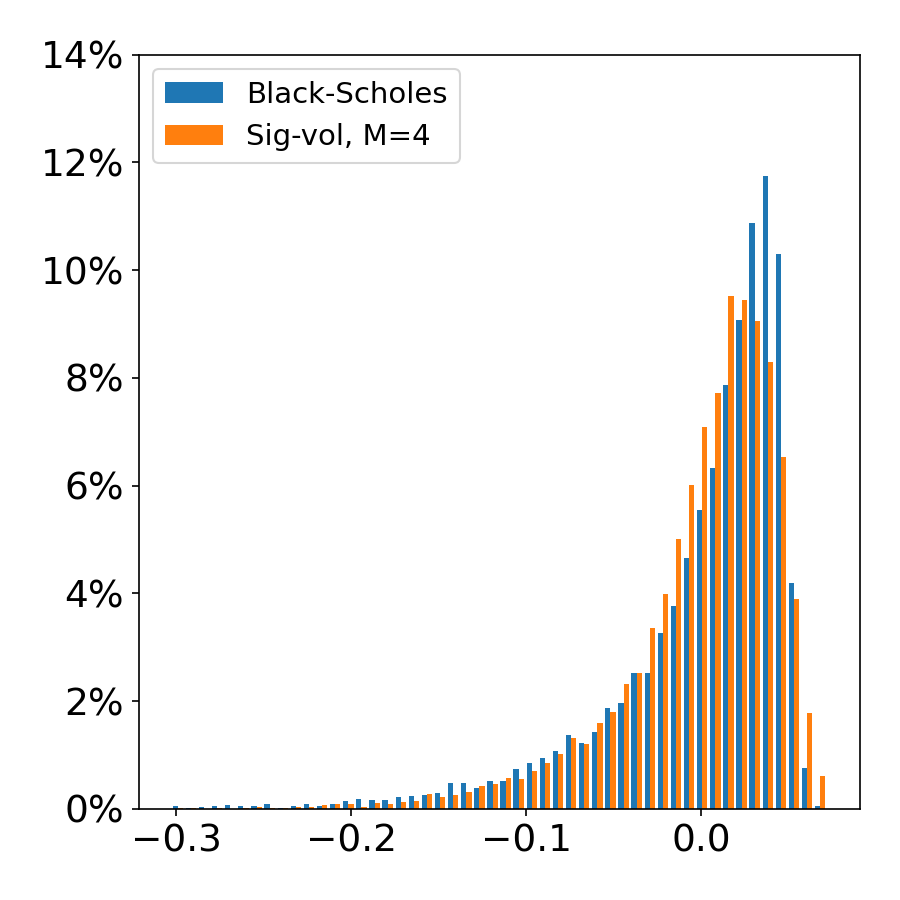}}}
        \quad
        \subfloat[\centering $T=$ 6 months, $K=1.15$.]{{\includegraphics[width=\threeplotswidth]{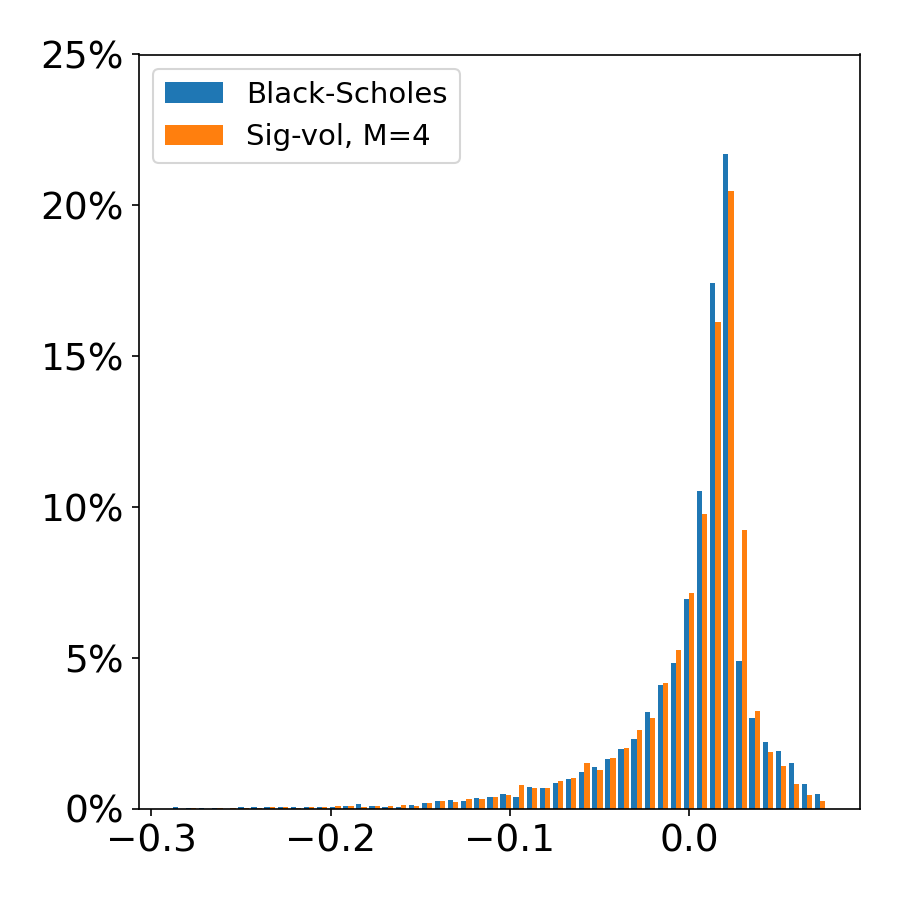}}}
        \caption{P\&L of Black-Scholes model (blue) vs signature mean-reverting geometric Brownian motion volatility model (orange) quadratic hedging strategies for several maturities and strikes. $\kappa=1, \theta=0.25, \eta=1.2, \alpha=0.6$ and $\rho=-0.6$.}
        \label{fig:hedging-asiat-mGBM}
    \end{figure}
    
    \begin{table}[H]
        \centering
        \begin{tabular}{|c|c|c|c|c|c|c|}
            \hline
            \multirow{2}{*}{$J(X_0^*, \alpha^*)$}
            & \multicolumn{3}{|c|}{1 months} & \multicolumn{3}{|c|}{6 months} \\
            \cline{2-7}
                          & $K=0.95$    & $K=1.0$     & $K=1.05$    & $K=0.85$    & $K=1.0$     & $K=1.15$    \\
            \hline
            Black-Scholes & 8.727e$-$05 & 1.000e$-$04 & 5.456e$-$05 & 2.446e$-$03 & 2.846e$-$03 & 1.867e$-$03 \\
            \hline
            Sig-vol, M=4  & 6.710e$-$05 & 8.472e$-$05 & 4.510e$-$05 & 1.453e$-$03 & 2.182e$-$03 & 1.657e$-$03 \\
            \hline
        \end{tabular}
        \caption{P\&L of Black-Scholes model vs signature mean-reverting geometric Brownian motion volatility model quadratic hedging strategies for several maturities and strikes. $\kappa=1, \theta=0.25, \eta=1.2, \alpha=0.6$ and $\rho=-0.6$.}
        \label{tab:hedging-asiat}
    \end{table}

\bibliographystyle{plainnat}
\bibliography{sigvol.bib}

\end{document}